\documentclass{article}
\usepackage[a4paper, total={6.5in, 10.5in}]{geometry}
\usepackage{graphicx} % Required for inserting images

\usepackage{amsmath}
\usepackage{amssymb}
\usepackage{amsthm}
\usepackage{thm-restate}
\usepackage{physics}
\usepackage{xcolor}
\usepackage{mathtools}
\usepackage{graphicx}
\usepackage{subcaption}
\usepackage{hyperref}
\usepackage[
backend=biber,
style=alphabetic,
]{biblatex}
\usepackage[capitalize]{cleveref}
\hypersetup{
  colorlinks = true,
  linkcolor = blue,
  anchorcolor = blue,
  citecolor = blue,
  urlcolor = blue
  }
\usepackage{comment}
\usepackage{authblk}
\usepackage{mathtools}
\newtheorem{theorem}{Theorem}
\newtheorem{lemma}{Lemma}

\newtheorem{definition}{Definition}

\DeclareMathOperator{\id}{\mathbf{1}}
\usepackage{csquotes}
\usepackage{epigraph} 
\usepackage{multicol}
\usepackage[normalem]{ulem}     %% only for \sout{} for lining out sentences
\usepackage{algorithm}
\usepackage{algpseudocode}
\usepackage{lmodern,tikz,caption}
\usetikzlibrary{arrows.meta}

\DeclareMathOperator{\lpi}{\mathbb{L}_{\textnormal{PI}}}
\DeclareMathOperator{\mqmn}{\mathbb{M}_{\textnormal{QMN}}}

\usepackage{tcolorbox}
\definecolor{teal}{RGB}{50, 136, 140}
\newtcolorbox[auto counter]{pabox}[2][]{fonttitle=\bfseries,
title=Software Box~\thetcbcounter: #2,#1,colframe=gray}
\newtcolorbox[use counter from=pabox]{implementationbox}[2][]{
floatplacement=h,float,
colback=teal!5!white,colframe=teal!75!black,title= Implementation~\thetcbcounter: #2,#1}

\usepackage{blindtext}

\usepackage{enumitem}
\newlist{todolist}{itemize}{2}
\setlist[todolist]{label=$\square$}
\usepackage{pifont}
\algnewcommand{\INPUT}{\item[\textbf{Input:}]}
\algnewcommand{\SET}{\item[\textbf{Set:}]}
\algnewcommand{\OUTPUT}{\item[\textbf{Output:}]}

\title{Local Relaxation Hierarchies for Quantum Ground State Energies: Convergence Guarantees and Message Passing Algorithms}

\author[1]{Sheng-Ku Lin\thanks{These authors contributed equally to this work.}}
\author[2]{Ricardo Rivera Cardoso${}^*$}
\author[1]{Roberto Bondesan}
\affil[1]{Department of Computing, Imperial College London}
\affil[2]{RCQI, Institute of Physics, Slovak Academy of Sciences, Bratislava, Slovakia}

\begin{document}

\maketitle

\begin{abstract}
Convex relaxation hierarchies provide lower bounds to the ground state energy of quantum many-body systems that can be computed in polynomial time on a classical computer, at any fixed hierarchy level. However, scaling these methods to large systems and accurate approximations remains challenging due to the computational cost of traditional solvers and the scarcity of
efficiency guarantees. In this work, we develop local relaxation hierarchies and efficient, highly parallelisable message passing algorithms for estimating the relaxed ground state energies. We show that the first level of the hierarchy---based on local consistency of pairwise reduced density matrices---is exact for commuting Hamiltonians on trees. We further establish that another hierarchy, based on consistent intervals, converges exponentially fast in the interval size to the ground state energy for weak perturbations of separable Hamiltonians on a chain, thereby providing an efficient classical algorithm for these systems. Then, we introduce two variants of message passing algorithms that run in $\mathcal{O}(n/\epsilon^2)$ and $\mathcal{O}(n/\epsilon)$ time for any fixed level of the local hierarchy on bounded-degree graphs, where $\epsilon$ is the precision for the relaxed ground state energy per site. This assumes that the optimal messages have $\mathcal{O}(1)$ norm---a condition we observe in practical settings in our experiments. These algorithms are based on the subgradient method and the Nesterov-type accelerated gradient descent method applied to an entropy-smoothed objective. Finally, we benchmark the message passing algorithms across different quantum Hamiltonians, lattice geometries, and relaxation levels, validating the theoretical predictions and their potential to surpass standard convex optimisation solvers for this problem. We release the resulting library at \url{github.com/rick1924/gse-message-passing}.
\end{abstract}

% \clearpage

% \tableofcontents

% \clearpage

\section{Introduction}

Estimating the ground-state energy of a quantum many-body system is a central task in condensed matter physics and quantum chemistry. This quantity determines, depending on the system, its stability and structure, its low-temperature thermodynamic properties, or chemical reaction rates. Formally, it is the smallest eigenvalue of the system's Hamiltonian $H$, or equivalently the optimal value of the minimisation problem $E_0 = \min_{\rho} \Tr(\rho H)$, taken over the set of all density matrices $\rho$. Although this is a convex optimisation problem, it is hard to solve in general due to the exponential growth of the Hilbert space with the number of particles. In fact, even for local Hamiltonians, deciding their ground state energy up to inverse-polynomial precision is \textsf{QMA}-complete \cite{kitaev2002classical}, meaning that under standard complexity-theoretic assumptions, no classical or quantum algorithm can solve \emph{all} instances of the problem efficiently.

That being said, for several physically relevant Hamiltonians, numerical approximation methods perform surprisingly well. A particular class of algorithms, including tensor networks \cite{DMRG,MERA}, variational quantum Monte Carlo \cite{MonteCarlo}, and the Variational Quantum Eigensolver \cite{VQE}, produce estimates of the ground state energy by optimising over particular subsets of quantum states with efficient representations (or efficiently preparable in the case of VQE), rather than over the set of all quantum states. Since the actual ground state may lie outside the chosen subset, these methods produce upper bounds to the ground state energy. Proving lower bounds---although often overlooked (in comparison with the vast amount of literature on upper bounds)---is equally important if one is to set rigorous error bounds to numerical estimations of the ground state energy. 

In contrast, lower bounds are achieved by \emph{relaxing} the feasible region and outer-bounding the set of quantum states. The result being an optimisation program with an efficient solution.  Most methods in the literature that we are aware of fall within three categories.\footnote{\cite{dualvqe} present the only method we are aware of that can provide a lower bound that is not included in any of the three categories mentioned. Here, they first consider the dual program to ground state energy estimation problem, and then restrict the allowed set of states to those that can be parametrised by a low-depth quantum circuit. It does not result in a hierarchy like the cases we discuss here.} The first method replaces optimisation over a global state with optimisation over locally consistent reduced density matrices $\{\rho_{S_i}\}$, each supported on a subset $S_i \subseteq [n]$ of size $m \leq n$, such that 
the union of all subsets is the whole lattice \cite{poulin2011markov,eisert2026lower,huber2024second,kull2024lower,fawzi2024entropy}. These can be referred to as \emph{generalized Anderson bounds}, as they generalize the Anderson bound \cite{andersonbound} obtained by diagonalizing each term of the Hamiltonian and summing their smallest eigenvalues. The second, rooted in non-commutative polynomial optimisation \cite{NPA}, optimizes over global pseudo-states whose first $k$ moments match those from actual quantum states \cite{nakata2001variational, FirstCLDMrelaxation, fukuda2007large, BarthelFermionicRelaxations, baumgratz2012lower, wang2024certifying,eisert2026lower,kull2024lower,huber2024second}. Depending on the field, this relaxation is known under different names, such as the sum-of-squares hierarchy, quantum Lasserre hierarchy, or the reduced density matrix method. The last method, roughly sits between these two, accurately modelling local descriptions of the system tied together by a global constraint (weaker than the one used in Lasserre's hierarchy) \cite{lin2022variational, khoo2024scalable, li2023quantum}. These receive the name of \emph{variational embeddings}.

Each method can be systematically improved, yielding a \emph{hierarchy} of increasingly tighter relaxations to the set of quantum states, and hence better approximations to the ground state energy. For the generalized Anderson bounds and variational embeddings, this is achieved by increasing the size of the subsets on which the marginals are supported, whereas for the quantum Lasserre hierarchy it is achieved by considering higher moments. The exact solution can be recovered when $m = n$, or $k \to \infty$. Importantly, the memory cost increases exponentially with each level of the hierarchy, yet at a fixed level, each method results in a polynomial size semidefinite program (SDP), which can be solved in polynomial time. While indeed efficient and highly versatile, these programs are typically a black-box algorithm not specialized to the problem at hand, thus leaving room for computational improvements. Although the quantum Lasserre hierarchy can generally be thought of as stronger than variational embeddings, and these in turn as stronger than local marginal consistency (owing to the strength of their constraints), it is not yet well understood how higher levels of the different hierarchies relate to each other. Hence, an exceptionally fast specialised algorithm for any one of these relaxations could unlock higher levels of its hierarchy, potentially yielding a tighter approximation than the other methods can achieve, as they may be computationally prohibited from reaching the levels needed to match it.

Here, our focus is on the local consistency of marginals relaxation, for which we demonstrate convergence guarantees for restricted families of Hamiltonians, as well as an efficient, highly parallelisable ``message-passing'' algorithm for any fixed level of the hierarchy.

A related aspect of our work concerns belief propagation algorithms. The classical sum-product and max-product (also known as min-sum) algorithms compute, respectively, marginals and modes of probabilistic graphical models by iteratively exchanging messages between nodes \cite{JW}. Although originally conceived from the distributive law of commutative semi-rings, which makes them immediately exact on tree graphs, these algorithms have proven to be empirically successful in a variety of settings, including decoding LPDC codes and solving SAT problems \cite{mezard2009information}. This success, together with their parallelisable nature, which allows for highly optimised implementations on dedicated architectures such as GPUs, is the reason for their popularity. Their effectiveness on tree-like graphs and their connection to statistical physics (see \cite{mezard2009information}) are best explained from a variational perspective. Indeed, the sum-product fixed points correspond to stationary points of the Bethe free energy. Importantly, this perspective also helps identify the max-product as the ``zero-temperature limit'' of the sum-product algorithm.

Leifer and Poulin \cite{leifer2008quantum} generalized the sum-product algorithm to the quantum setting, this time to compute quantum marginals, i.e.\ reduced density matrices.\footnote{Technically, the first mention of a quantum belief propagation algorithm was by Hastings \cite{hastingsQBP}, in the context of an algorithm to compute the partition function of 1D systems.} This quantum belief propagation algorithm can be recovered from the dual of a convex relaxation used to estimate the Gibbs free energy $F(T) = \min_{\rho} [ E(\rho) - T S(\rho) ]$, where $T$ is temperature, and $E(\rho), S(\rho)$ are, respectively, the energy and entropy of state $\rho$ \cite{poulin2011markov}.
In this relaxation, the entropy is replaced by a convex approximation, and recently, \cite{scalet2025classicalestimationfreeenergy} has shown that this leads to an efficient classical algorithm for quantum Gibbs sampling at any fixed temperature in one dimension. To go beyond finite temperatures, \cite{CGBP} introduced a modified framework that combines entanglement renormalization techniques with quantum belief propagation; however, as $T \to 0$ most steps of that algorithm reduce to MERA. Furthermore, taking the limit $T \to 0$ in the finite temperature quantum belief propagation equations does not straightforwardly yield a new set of quantum belief propagation equations.
This leaves open a natural question: is there message-passing algorithm that operates directly on the ground state, and what guarantees does it have?

\subsection{Contributions and related work} \label{subsection:contributions}

In this work, we study ground state relaxation based on the local consistency of marginals
and contribute on three accounts.

\paragraph{Exactness on trees for commuting Hamiltonians.} Our first result is related to an open question in the theory asking \emph{``how tight is the relaxation at the $t$-th level of the hierarchy?''}. In classical statistical inference, it is a well-known result that the first level of the relaxation (2-site marginals) is tight when the interaction graph is a tree.\footnote{There, the graph is a graphical model where the nodes represent random variables, and the edges encode the dependencies between them.} While this can not be generally true in quantum theory, we show, by making a connection to the theory of quantum Markov networks and the Structure lemma for commuting hamiltonians \cite{BV}, that this relaxation is tight for 2-local commuting Hamiltonians when the interaction graph is a tree and can thus retrieve the exact ground state energy in polynomial time. Aharonov, Arad, and Irani \cite{aharonov2010efficient} had already presented an efficient algorithm for such Hamiltonians, but our result is an alternative proof of this statement, now from the perspective of semidefinite relaxations. Moreover, our method is simpler, and although it does not result in an efficient representation of the global state, ours still obtains the local marginals, which suffice for computing expectations of local observables. Recent work by Faisal, Natarajan, and Poremba \cite{faisal2026rounding} shows that any 2-local qubit Hamiltonian $H = \sum_{i=1}^m h_i$ with approximately commuting terms ($||[h_i,h_j]|| \leq \epsilon \; \forall i,j$) can be rounded in linear time to an exactly commuting 2-local Hamiltonian $\hat{H}$ such that $| \lambda_{\mathrm{min}}(H)- \lambda_{\mathrm{min}}(\hat{H})| = \mathcal{O}(m \epsilon^{1/6})$. Crucially, the reduction preserves both locality and the interaction graph. Together with our result, this shows that when the graph is a tree and the system is qubits, we can compute $\lambda_{\mathrm{min}}(\hat{H})$ exactly, yielding an $\mathcal{O}(m \epsilon^{1/6})$-approximation to the ground state energy of $H$.

\paragraph{Exponential convergence for perturbations of gapped separable Hamiltonians.}
Our second contribution concerns weakly-interacting chains that are short-range perturbations of non-interacting (i.e.~separable) Hamiltonians with a unique and gapped ground state. In this setting, we  show that the level-$t$ hierarchy defined by considering locally consistent density matrices on intervals of size $t$ converges exponentially fast (in $t$) to the exact ground state energy. This establishes a classical polynomial time and space algorithm to compute the ground state energy in this setting. We remark, however, that our result should not be seen as 
the first efficient classical algorithm for this regime, rather as a guarantee for the validity of local relaxation hierarchies. In fact, \cite{Bravyi_2008} already derived a polynomial time algorithm for the ground state energy of perturbations of gapped separable Hamiltonians in any dimensions using the cluster expansion. Despite our result being specialised to chains only, we expect this to be a limitation of our proof strategy rather than of the algorithm itself, since, 
as we have discussed above, local relaxations are exact for generic commuting Hamiltonians on trees, while the validity of the cluster expansion used in \cite{Bravyi_2008} is limited to perturbations of separable Hamiltonians.
Hastings
\cite{hastings2024perturbationtheorysumsquares,hastings2023fieldtheorysumofsquaresquantum} also discuss the relationships between perturbation theory and ground state relaxations. The main difference is that we restrict to relaxations with a constraint on the support locality, while in those works the author considers the most standard restriction to operators of a fixed degree, as in the Lasserre hierarchy.

\paragraph{Message passing algorithms.} Our last contributions are two distributed (and hence highly parallelisable) algorithms based on message-passing for solving local consistency relaxations, derived by solving the dual optimisation programs. A simple one based on subgradients, and a more sophisticated one based on Nesterov-type accelerated gradient descent method applied to an entropy-smoothed objective. For a particular, physically motivated hierarchy of local consistent marginals, these algorithms run in $\mathcal{O}(n/\epsilon^2)$ and $\mathcal{O}(n/\epsilon)$ time, respectively, assuming the interaction graph of the relevant Hamiltonian has bounded vertex degree and that the optimal messages have $\mathcal{O}(1)$ norm---a condition we verify numerically.
The runtime dependency in $n$ is better with respect to unstructured interior point methods used by standard SDP solvers like CVXPY \cite{SDPs}, but worse in $1/\epsilon$. Moreover, the parallel nature of the algorithm also allows for further runtime improvements. Relating to the last open question of the previous section, our algorithms (beside the fact that they are message-passing) do not possess the other distinctive features of ``belief propagation'' algorithms: they lack the semi-ring structure, and notion of ``beliefs'', and possess different update rules. They instead generalise the classical message passing algorithms of \cite{jojic2010accelerated}. We remark that the local structure of the relaxations considered here is critical in deriving message passing algorithms, and our methods do not apply to Lasserre and variational embeddings, which instead use global constraints. We also note that, as recently pointed out in \cite{li2023quantum}, a cheap and fast algorithm can be useful to identify how to best set up the clusters for subsequent more comprehensive and expensive quantum variational embeddings.

\subsection{Organisation}

\cref{section:framework} discusses marginal sets, including relevant restrictions and relaxations. We demonstrate the equality between two such sets, which is used in the proof of our first result in the following section. Subsequently, in \cref{section:gse}, we formulate the ground state energy problem using the locally consistent relaxations, and argue that for arbitrary Hamiltonians, the amount of resources necessary to produce a good estimate to the ground state energy grows exponentially with the desired precision. We prove that this is not the case for commuting Hamiltonians on a tree, as the first level of the relaxation is already exact, as well as show that this is also not the case for particular weakly-interacting families of Hamiltonians. In \cref{section:qmp}, we derive the subgradient and smooth message-passing algorithms for both 2-local and higher order marginals. Finally, in \cref{section:numerics}, we compile several numerical results supporting arguments scattered throughout the text, as well as compare the performance between our message-passing algorithms and CVXPY.

We note that most theorem proofs are omitted in the main text and are instead presented in \cref{appendix:thmproofs}.

\section{Global and locally consistent marginals} \label{section:framework}

In this section, we introduce the framework needed to describe the marginal relaxations. Although our goal is to consider relaxations to the marginal set, for reasons that will become apparent in the next section, we also consider restrictions to this set.

\subsection{Marginal sets}

To begin, let us briefly introduce some notation. It is convenient to describe a quantum system in terms of an undirected graph $G = (V,E)$. For a vertex $v \in V$, we associate a local Hilbert space $\mathcal{H}_v$ of arbitrary finite dimension. We often refer to this as a \emph{site}. In this section, the edges $E$, encode some sense of locality between sites, and in the next sections, will represent the interactions of 2-local spin Hamiltonians. We denote $\mu_V$ as the quantum state acting on the Hilbert space $\mathcal{H} = \bigotimes_{v \in V} \mathcal{H}_v$, and we let $n = |V|$. Additionally, for a set $S$, we let $D(S) := \{f \, | \, f \subset S\}$ be set of \emph{descendants} of $S$. Finally, throughout the text, we interchangeably refer to reduced density matrices as quantum marginals (or simply marginals whenever it is clear from context that refer to the quantum ones). 

A crucial object in variational relaxations related to local consistency is the set of globally consistent marginals.

\begin{definition}[Quantum marginal set] \label{defn:M}
     Let $G = (V,E)$ be a graph on $n$ vertices, and let $\mu_V \in D(\mathcal{H})$ be a quantum state supported on the vertices of the graph. The \emph{quantum marginal set} associated with $G$ is
     \begin{equation}
         \mathbb{M}(G) = \left\{ (\mu_e, \mu_v)_{e\in E, v \in V} \middle| \;\exists \, \mu_V \text{ s.t. for all } e \in E: \Tr_{V \setminus e}(\mu_V) = \mu_e \text{ and } \Tr_{V \setminus v} (\rho_V) = \mu_v  \; \forall v\in V \right\}.
     \end{equation}
     \noindent
     Here, $(\mu_e, \mu_v)_{e \in E, v \in V}$ is shorthand for the tuple consisting of the marginals $(\mu_{e_1}, \ldots, \mu_{e_{|E|}}, \mu_{v_1}, \ldots, \mu_{v_{|V|}})$.
\end{definition}

Note that in this definition, the positivity and normalization of the density matrices $\mu_{e}$ and $\mu_v$ within the set is implicit from the existence of a global state that is consistent with them.

Unfortunately, providing an efficient and explicit description of $\mathbb{M}(G)$ for an arbitrary graph $G$ is likely not possible. Indeed, the problem of deciding whether a given set of marginals are consistent with a global state is known as the \emph{quantum marginal problem} or \emph{consistency of local density matrices problem}, and is known to be $\mathsf{QMA}$-complete \cite{cldmliu,cldmgrilo}.

As an alternative, a set of marginals that can be described efficiently are those where instead of demanding global consistency, one demands \emph{local consistency} between the marginals.

\begin{definition}[Locally consistent marginal set]
    Let $G = (V,E)$ be a graph on $n$ vertices, The \emph{locally consistent set of marginals} is 

    \begin{equation}
        \mathbb{L}(G) := \left\{ (\mu_e, \mu_v)_{e \in E, v \in V} \;\middle|\; 
        \mu_e \succeq 0, \Tr(\mu_e) = 1 \; \forall e \in E \text{ and }  \mu_v = \Tr_{e \setminus v} (\mu_e) \; \forall v \in V \right\}.
    \end{equation}
    \noindent
\end{definition}

Both $\mathbb{M}(G)$ and $\mathbb{L}(G)$ are convex sets, as convex combinations of its elements result in another element of the set.

In classical statistical inference, it can be shown that for a tree graph $T$, $\mathbb{M}_c(T) = \mathbb{L}_c(T)$ (the classical analogues of these sets defined for marginals diagonal in a fixed product basis) \cite{JW}. In the quantum setting, this is no longer true. To see this, simply consider a path in the tree with three sites and let $\mu_2 = \id /2$ and $\mu_{1,2} = \mu_{2,3} = \ket{\Psi^-} \! \bra{\Psi^-}$ with $\ket{\Psi^-} = 1/\sqrt{2}(\ket{01}-\ket{10})$ be the reduced density matrices supported on the two edges. Clearly, $(\mu_{1,2}, \mu_{2,3},\mu_2, \ldots) \in \mathbb{L}(T)$ since they are positive, normalized, and $\Tr_1(\mu_{1,2}) = \Tr_3(\mu_{2,3}) = \mu_2$. Monogamy of entanglement, however, shows that there is no quantum state in any Hilbert space that is consistent with these marginals, concluding that $\mathbb{M}(T) \neq \mathbb{L}(T)$.

That being said, there are restrictions of these sets that are equal on trees; and although our main interest is on relaxations of the marginal set $\mathbb{M}$, this is relevant for its applications to commuting Hamiltonians. These sets are related to \emph{quantum Markov networks}, whose definition we now recall.

Let us first recall their definition.

\begin{definition}[Quantum Markov networks; \cite{brown2012quantum}]
    Given a graph $G = (V,E)$, $(\rho, G)$ is a \emph{quantum Markov network} if and only if for any three disjoint subsets of particles $A,B,C \subseteq V$ such that $B$ \emph{shields} $A$ from $C$, the conditional mutual information between $A$ and $C$ given $B$ is zero, i.e.\ $I(A:C|B) = 0$. 
\end{definition}

A set $B$ shields $A$ from $C$ if any paths on $G$ from $A$ to $C$ pass through vertices in $B$.
The conditional mutual information, defined as $I(A:C|B) := S(AB) + S(BC) - S(B) - S(ABC)$ quantifies the amount of classical and quantum correlations between systems $A$ and $C$, given that system $B$ has been measured. It satisfies $I(A:C|B) \geq 0$ and is equal to zero iff all correlations between $A$ and $C$ (if any) are mediated through $B$. In a sense, it can be said that $A$ and $C$ are independent given $B$, thus mimicking the classical interpretation of Markov networks.

Then, the restriction of the marginal set to quantum Markov networks is

\begin{definition}
    Given a graph $G$, $\mqmn (G)$ is the set $\mathbb{M} (G)$ except $(\rho_V,G)$ is a quantum Markov network.
\end{definition}
\noindent
It is evident that $\mqmn (G) \subseteq \mathbb{M}(G)$ since $\rho_V$ is from a restricted family of quantum states. We also remark that since the mixture of two quantum Markov states is not generally another such state,  this set is not convex.

For the restriction to the set of locally consistent marginals, consider the following set, which we call \emph{Petz iterable}.

\begin{definition}[$\mathbb{L}_{\mathrm{PI}}(G)$]
    Consider a graph $G = (V,E)$, and assume that the local Hilbert space of each particle $i$ in the bulk decomposes as $\mathcal{H}_i = \bigoplus_{\alpha_i} \bigotimes_{k=1}^{s_i}
    \mathcal{H}_{i \to j_k}^{\alpha_i}$. Here, $\mathcal{H}_{i \to j_k}^{\alpha_i}$ can be understood as the Hilbert space of the subparticle on the edge $(i,j_k) \in E$ of sector $\alpha_i$, and $j_k \in N(i)$ is a neighbour of site $i$ and $s_i = |N(i)|$. $\mathbb{L}_{\mathrm{PI}}(G) := \{(\mu_e, \mu_v)_{e \in E, v \in V}\}$ is the set of locally consistent marginals of a specific form, defined on the edges and nodes of this graph. Because the decomposition applies only to bulk vertices and not leafs, the edge marginals take two different forms depending on whether an edge meets a leaf or joins two bulk vertices.
    
    The leaf marginals are of the form 

    \begin{equation} \label{eqn:leaf_marginals}
        \mu_{\ell,j} = \bigoplus_{\alpha_j} p_{\alpha_j} \mu_{\ell,j \to \ell}^{\alpha_j} \otimes \mu_{\mathrm{comp}(j \to \ell)}^{\alpha_j}
    \end{equation}
    \noindent
    where $\ell$ denotes a leaf vertex, $p_{\alpha_j}$ is a probability distribution, $\mu_{\ell,j \to \ell}^{\alpha_j}$ is the operator supported on $\mathcal{H}_\ell \otimes \mathcal{H}_{j \to \ell}^{\alpha_j}$, and $\mu_{\mathrm{comp}(j \to \ell)}^{\alpha_j}$ is supported on all subspaces of sector $\alpha_j$ except for $\mathcal{H}_{j \to \ell}^{\alpha_j}$.
    
    The bulk marginals, i.e.\ the marginals for edges $(i,j) \in E$ where $i$ and $j$ are bulk vertices, are of the form

    \begin{equation} \label{eqn:bulk_marginals}
        \mu_{i,j} = \bigoplus_{\alpha_i,\alpha_j} t_{\alpha_i,\alpha_j} \mu_{\mathrm{comp}(i \to j)}^{\alpha_i} \otimes \mu_{i \to j, j \to i}^{\alpha_i,\alpha_j} \otimes \mu_{\mathrm{comp}(j \to i)}^{\alpha_j},
    \end{equation}
    \noindent
    where $\mu_{i \to j, j \to i}^{\alpha_i,\alpha_j}$ has support only on $\mathcal{H}_{i \to j}^{\alpha_i} \otimes \mathcal{H}_{j \to i}^{\alpha_j}$. Here, $t_{\alpha_i, \alpha_j}$ is a joint probability distribution.

    The single-site node marginals are defined via the partial trace of these edge marginals.
\end{definition}

Our first main result, proved in \cref{appendix:thmproofs}, is about the equality of these last two sets. 

\begin{restatable}{theorem}{LequalsT} \label{thm:lpi}
    Let $T = (V,E)$ be a tree. Then $\lpi (T) = \mqmn (T)$.
\end{restatable}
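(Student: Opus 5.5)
We prove the two inclusions separately. The main tool is the structure theorem for quantum Markov chains of Hayden--Jozsa--Petz--Winter, together with its extension to Markov networks on trees by Brown and Poulin \cite{brown2012quantum}. The Petz recovery map is what lets us rebuild a global state from edge marginals on a tree.

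\emph{Inclusion $\mqmn(T)\subseteq\lpi(T)$.} Let $(\mu_V,T)$ be a quantum Markov network and fix a bulk vertex $i$ with neighbours $j_1,\dots,j_{s_i}$. Removing $i$ splits $T$ into subtrees $T_1,\dots,T_{s_i}$, with $T_k$ containing $j_k$. Because $i$ shields these subtrees from one another, we have $I(T_k : \bigcup_{l\neq k}T_l \,|\, i)=0$ for every $k$.

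We apply the HJPW structure theorem to $i$ iteratively, one branch at a time, and refine the decomposition at each step. This should produce a decomposition $\mathcal{H}_i=\bigoplus_{\alpha_i}\bigotimes_k \mathcal{H}^{\alpha_i}_{i\to j_k}$ under which $\mu_V$ is block diagonal in $\alpha_i$ and factorises across the branches. This is the content of the Brown--Poulin result for trees, and we would cite it rather than reprove it. If an extra ``junk'' factor appears on which $\mu$ is product, we absorb it into one of the subsystems.

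Doing this at every bulk vertex at once requires the sector decompositions at neighbouring vertices to be compatible. Restricted to an edge $(i,j)$, the global state then has the form $\bigoplus_{\alpha_i,\alpha_j} t_{\alpha_i\alpha_j}\,\mu_{\mathrm{comp}(i\to j)}\otimes\mu_{i\to j,j\to i}\otimes\mu_{\mathrm{comp}(j\to i)}$. This is exactly \eqref{eqn:bulk_marginals}, and the leaf case gives \eqref{eqn:leaf_marginals}. Local consistency of these marginals is automatic, since they are reductions of a single state.

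\emph{Inclusion $\lpi(T)\subseteq\mqmn(T)$.} Here we build a global state explicitly. We root the tree and add vertices in breadth-first order. When a child $j$ of an already-placed vertex $i$ is added, we apply the Petz recovery map $\mathcal{R}_{i\to ij}(X)=\mu_{ij}^{1/2}(\mu_i^{-1/2}X\mu_i^{-1/2}\otimes\id)\mu_{ij}^{1/2}$ to the current state.

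Because of the block structure of $\lpi$, $\mu_i=\bigoplus_{\alpha_i}p_{\alpha_i}\bigotimes_k\mu^{\alpha_i}_{i\to j_k}$. The map therefore acts only on the factor $\mathcal{H}^{\alpha_i}_{i\to j}$ and on $\mathcal{H}_j$: in each sector it replaces $\mu^{\alpha_i}_{i\to j}$ by $\mu^{\alpha_i,\alpha_j}_{i\to j,j\to i}\otimes\mu^{\alpha_j}_{\mathrm{comp}(j\to i)}$, weighted by the conditional distribution $t_{\alpha_i\alpha_j}/p_{\alpha_i}$. Several checks then follow:
\begin{itemize}
\item We verify by induction that the result is a valid state.
\item Its $(i,j)$ marginal equals $\mu_{ij}$. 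This uses that the $\alpha_i$-marginal of $t$ matches $p$, which follows from the consistency $\Tr_j\mu_{ij}=\mu_i$.
\item Previously placed marginals are unchanged, because the map only touches the subsystem $i\to j$, which no other edge marginal depends on.
\end{itemize}
The final $\mu_V$ is, sector by sector, a product of pairwise edge pieces with a classical Markov structure on the labels $\alpha$ over the tree. From this it follows that $I(A:C|B)=0$ whenever $B$ shields $A$ from $C$: conditioning on the sectors and subsystems of $B$ decouples the branches. This gives $(\mu_V,T)$ as a quantum Markov network. A subtle point is that the labels must be treated as classical data that $B$ ``measures''. This is fine because of the block-diagonal form, so each vertex carries its sector label.

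The main obstacle is the global compatibility step in the first inclusion. We must show that the per-vertex decompositions obtained from separate applications of HJPW fit together into a single decomposition satisfying \eqref{eqn:bulk_marginals} on every edge simultaneously. In particular, the bulk marginal must be product between $\mathrm{comp}(i\to j)$ and the $i\to j$ factor, with classical correlations carried only through $t$. We would handle this by induction on the number of vertices: we peel off a leaf, use $I(\ell : V\setminus\{\ell,j\}\,|\,j)=0$ to split $\mathcal{H}_j$, and verify that the remaining tree still forms a Markov network with the refined decomposition at $j$.
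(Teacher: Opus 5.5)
Your overall route is the paper's. The iterated HJPW splitting at a bulk vertex is its star-graph lemma; using whole branches $T_k$ instead of the neighbours $j_k$ is a harmless strengthening. Your second inclusion is the same sector-wise Petz gluing with a tree-Markov law on the labels.

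The genuine gap is the step you yourself flag as the main obstacle. It is left open, and the leaf-peeling induction you sketch would not close it. After splitting $\mathcal{H}_j=\bigoplus_\beta\mathcal{H}^\beta_{j\to\ell}\otimes\mathcal{H}^\beta_{j\to\mathrm{rest}}$, the induction hypothesis for $T\setminus\ell$ supplies a second sector decomposition of the \emph{same} space $\mathcal{H}_j$. Nothing guarantees that the two families of projectors are compatible. Repairing this would mean strengthening the hypothesis to work inside each $\beta$-sector, which amounts to reproving the star lemma. The worry itself is also misplaced: no global fitting is needed, because on an edge $(i,j)$ the two decompositions you must combine act on different tensor factors.

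Your per-vertex step already gives, for each $\alpha_i$, a factorisation between $\mathrm{comp}(i\to j)$ and $(i\to j,\,j)$. What is missing is only the refinement by the labels at $j$, which takes a few lines per edge and is how the paper argues.
\begin{itemize}
\item Let $P_{\alpha_i}$ and $Q_{\alpha_j}$ be the sector projectors at $i$ and $j$. Each vertex decomposition makes $\mu_{ij}$ block diagonal for its own family. Since $P_{\alpha_i}\otimes\id$ and $\id\otimes Q_{\alpha_j}$ commute, $\mu_{ij}=\bigoplus_{\alpha_i,\alpha_j}X_{\alpha_i\alpha_j}$ with $X_{\alpha_i\alpha_j}=(P_{\alpha_i}\otimes Q_{\alpha_j})\mu_{ij}(P_{\alpha_i}\otimes Q_{\alpha_j})$.
\item The decomposition at $i$ gives $X_{\alpha_i\alpha_j}=\mu^{\alpha_i}_{\mathrm{comp}(i\to j)}\otimes Y$, and the one at $j$ gives $X_{\alpha_i\alpha_j}=Z\otimes\mu^{\alpha_j}_{\mathrm{comp}(j\to i)}$.
\item Tracing out $\mathrm{comp}(i\to j)$ from both shows $Y=W_{\alpha_i\alpha_j}\otimes\mu^{\alpha_j}_{\mathrm{comp}(j\to i)}$ with $W_{\alpha_i\alpha_j}=\Tr_{\mathrm{comp}(i\to j)}(Z)$.
\item Normalising by $t_{\alpha_i\alpha_j}=\Tr(W_{\alpha_i\alpha_j})$ gives \eqref{eqn:bulk_marginals}.
\end{itemize}

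Your second inclusion is sound. Your explicit check of $I(A:C|B)=0$, via the HJPW form conditioned on $\alpha_B$, goes beyond the paper, which stops at the junction-tree formula for the labels. To make that check rigorous, first enlarge $A$ and $C$ to unions of components of $T\setminus B$; this can only increase the conditional mutual information.

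One justification there should be corrected. Earlier marginals do not survive because no edge marginal involves the $i\to j$ factor; the marginal on the parent edge of $i$ does involve it, through its complement part. They survive because $\Tr_j$ of the new state returns the old one. This holds since $\sum_{\alpha_j}(t_{\alpha_i\alpha_j}/p_{\alpha_i})\Tr_{j\to i}(\mu^{\alpha_i\alpha_j}_{i\to j,j\to i})=\mu^{\alpha_i}_{i\to j}$, which follows from $\Tr_j(\mu_{ij})=\mu_i$.
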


The key idea behind this result is the decomposition and reconstruction properties of states with zero conditional mutual information, applied to quantum Markov networks. More specifically, \cite{QMN} shows that for a state $\rho_{ABC}$ where $A,B,C \subseteq V$ and $B$ shields $A$ from $C$, the Hilbert space of $B$ decomposes into $\mathcal{H}_B = \bigoplus_{\alpha} \mathcal{H}_{B \to A}^{\alpha} \otimes \mathcal{H}_{B \to C}^{\alpha}$ and the state in these three systems can be expressed as $\rho_{ABC} = \bigoplus_{\alpha} p_{\alpha} \rho_{A, B \to A}^{\alpha} \otimes \rho_{B \to C,C}^{\alpha}$, where $p_{\alpha}$ is a probability distribution. At the same time, they show that given the bipartite marginals, the \emph{Petz recovery map} \cite{petzmap1,petzmap2} exactly recovers the tripartite state. Showing that $\mqmn(T) \subseteq \lpi(T)$ then is nothing but accounting for all of the decompositions at the ``edge-level'' of a quantum Markov network. The other direction requires showing that given the two-site marginals of \cref{eqn:leaf_marginals,eqn:bulk_marginals}, the Petz map can be iterated through all edges of the tree to yield a global quantum Markov network state.

\subsection{Outer-bounds}

Earlier, we showed that $\mathbb{M}(G) \subseteq \mathbb{L}(G)$; however, $\mathbb{L}(G)$ generally results in a loose outer bound. It is possible to systematically define tighter approximations to the marginal set by adding more variable constraints. There are many ways of defining these sets, some which are more practically motivated than others, and may also result in slightly different hierarchies. Here we choose one which is best for our distributed algorithm of \cref{section:qmp}. 

\begin{definition} [Higher-order local consistency sets]
    For all $t \geq 2$, the $t$-th local consistency marginal set is 
    
    \begin{equation} 
        \mathbb{L}_{t} := \left\{ (\mu_S, \mu_f)_{S \subseteq V, f \in D(S)} \; \middle| \;  |S| = t, \mu_S \succeq 0, \Tr(\mu_S) = 1 \; \forall S \subseteq V  \text{ and } \mu_f = \Tr_{S \setminus f}(\mu_S) \; \forall f \in D(S) \right\},
    \end{equation}
    \noindent
    noting that the shared descendants of two subsets $S$ and $S'$ are assumed to correspond to the same variable.
\end{definition}

As special cases, we also let $\mathbb{L}_1(G) := \mathbb{L}(G)$ so the ``natural'' local consistency set is the 1st level of the hierarchy, and let $\mathbb{L}_0(G)$ be the set of edge and node marginals on graph $G$ where local consistency is not required. As we soon demonstrate, these definitions illustrate why this hierarchy (in the context of ground state energy estimation) is known as \emph{generalized Anderson bounds}.

Note that for a given $t \geq 2$, each element within $\mathbb{L}_t$ is a tuple of size $\sum_{k =1}^t \binom{n}{k} \in \mathcal{O}(n^t)$, and hence also has a similar number of constraints. For any constant $t$, these sets can then be described efficiently, yet, in practice, this is not a favourable scaling for large $n$. We return to this issue in \cref{subsection:clustering}. 

Finally, to compare these sets at different orders, it is necessary to project them unto the same ``coordinates'' since currently, the elements of $\mathbb{L}_t(G)$ and $\mathbb{L}_{t'}(G)$ are of different size for $t \neq t'$. We let $\mathbb{L}_t(G)$ be the sets whose coordinates are projected to the edges and nodes specified by some graph $G$. Then, $\mathbb{M}(G) = \mathbb{L}_n(G) \subseteq \ldots \subseteq \mathbb{L}_0(G)$.

\section{Relaxations to the ground state energy problem} \label{section:gse}

In this section we introduce the hierarchy of semidefinite relaxations based on the locally consistent marginal sets. We discuss the tightness of the approximations and show that the 1st level of the hierarchy is tight for commuting Hamiltonians on a tree, which as mentioned in \cref{subsection:contributions} can be can be used to provide $\epsilon$-precision estimates to approximately commuting Hamiltonians, based on the results of \cite{faisal2026rounding}.

\subsection{A hierarchy of relaxations}

Consider a system of $n$ $d$-dimensional particles and let the interactions between them be given by a $2$-local Hamiltonian $H = \sum_{e\in E} h_e + \sum_{v \in V} h_v$. From a variational perspective, the ground state energy of the Hamiltonian is given by

\begin{equation}
    \begin{aligned}
        E_0 & = \min_{\rho \in D(\mathcal{H})} \langle H, \rho \rangle \\
        & = \min_{\mu \in \mathbb{M}(G)} \sum_{e \in E} \; \langle h_e, \mu_e \rangle + \sum_{v \in V} \langle h_v, \mu_v \rangle
    \end{aligned}
\end{equation}
\noindent
where $D(\mathcal{H})$ is the set of quantum states on the Hilbert space $\mathcal{H} = (\mathbb{C}^d)^{\otimes n}$, $\mu$ is shorthand for the tuple $(\mu_e, \mu_v)_{e \in E, v \in V}$ defined earlier, and for two operators $A,B$, we have let $\langle A,B\rangle=\Tr(A^\dagger B)$. In the second line, we observed that since the hamiltonian terms $h_e$ and $h_v$ are local operators the energy can also be estimated locally if using marginals $(\mu_e, \mu_v)$ that are consistent with a global state.

We can obtain a hierarchy of approximations to this quantity by relaxing the global consistency constraint and require local marginal consistency only. Then, 

\begin{equation}
    f_t^* = \min_{\mu \in \mathbb{L}_t (G)} \sum_{e \in E} \; \langle h_e , \mu_e \rangle + \sum_{v \in V} \langle h_v, \mu_v \rangle
\end{equation}
\noindent
is a lower bound to the ground state energy, and from the relationship of sets $\mathbb{L}_t(G)$, it is clear that

\begin{equation} \label{eqn:gs_hierarchy}
    E_0 = f_n^* \geq \ldots \geq f_0^*.
\end{equation}

Furthermore, observe that the approximation achieved via the Anderson bound, $f_{\mathrm{Anderson}}:= \sum_i \lambda_{\mathrm{min}}(h_i)$, is equivalent to $f_0^*$ since

\begin{equation}
    \begin{aligned}
    f_{\mathrm{Anderson}} &= \sum_{e \in E} \lambda_{\mathrm{min}}(h_e) + \sum_{v \in V} \lambda_{\mathrm{min}}(h_v)\\
    &= \sum_{e \in E} \min_{\substack{\mu_e: \\ \mu_e \succeq 0 \\ \Tr[\mu_e] = 1}} \langle \mu_e, h_i \rangle + \sum_{v \in V} \min_{\substack{\mu_v: \\ \mu_v \succeq 0 \\ \Tr[\mu_v] = 1}} \langle \mu_v, h_v \rangle  \\
    &= \min_{\mu \in \mathbb{L}_0(G)} \sum_{e \in E} \langle \mu_e, h_e \rangle + \sum_{v \in V} \langle \mu_v, h_v \rangle  \\
    &= f_0^*
    \end{aligned}
\end{equation}
\noindent
where the second equality follows from rewriting the minimum eigenvalue as a minimization problem, and the next equality holds for independent marginals, which is the case for marginals in $\mathbb{L}_0$. This then shows why the hierarchy of \cref{eqn:gs_hierarchy} is also known as generalized Anderson bounds. However, as we discuss below, this is not the only such hierarchy.

Since for any constant $t$, the optimisation program is convex, has a polynomial number of constraints, and the variables are matrices of size $d^t \times d^t$, a semidefinite optimisation program can produce an $\epsilon$-approximation to $f_t^*$ in $\mathrm{poly}(n, \log(1/\epsilon))$ time \cite{SDPs}.

Each level of the hierarchy is efficiently computable, yet, an open question is how well the $t$-th level of the hierarchy approximates the actual ground state energy. We now elaborate on cases where the first few levels of the relaxation can be shown to be exact for some restricted families of Hamiltonians. Later, we will discuss the error incurred by truncating the hierarchy to level $t$, and how $t$ needs to scale to achieve a desired estimation error.

\subsubsection{Exact relaxations}

Somewhat trivially, the first example where the relaxation is tight is for frustration-free Hamiltonians. A Hamiltonian $H = \sum_{e \in E} h_e$ is \emph{frustration-free} if the ground state simultaneously minimizes each of the $h_e$ terms independently. Consequently, the energy of this ground state is $E_0 = \sum_{e \in E} \lambda_{\text{min}}(h_e) \equiv f_{\mathrm{Anderson}}$ obtaining the smallest energy possible. Consequently, for any $t \geq 1$, the relaxation is tight. 

The second example is where the Hamiltonian is diagonal in a fixed product basis and its interaction graph is a tree. The ground state energy is then given by

\begin{equation}
    \min_{(\mu_e) \in \mathbb{M}_c(T)} \sum_{e \in E} \langle h_e, \mu_e \rangle = \min_{(\mu_e) \in \mathbb{L}_c(T)} \sum_{e \in E} \langle h_e, \mu_e \rangle
\end{equation}
\noindent
where the equality follows since $\mathbb{M}_c(T) = \mathbb{L}_c(T)$ \cite{JW}.

The third example, and the more interesting case where the relaxation is tight is the case of commuting Hamiltonians on a tree. Concretely, we show:

\begin{restatable}{theorem}{commutingtree} \label{thm:commuting}
    Let $T = (V,E)$ be a tree, and let $H = \sum_{(i,j) \in E} h_{i,j}$ be a commuting Hamiltonian on $n = |V|$ qudits of local dimension $d$, one on each vertex of $T$, and the $h_{i,j}$ are hermitian operators supported on the edges of the tree. The optimization program

    \begin{equation} \label{eqn:minimisation_commuting}
        f^*_1 = \min_{\mu \in \mathbb{L}(T)} \sum_{(i,j) \in E} \; \langle \mu_{i,j}, h_{i,j} \rangle \, ,
    \end{equation}
    \noindent
    yields the exact ground state energy, and the reduced density matrices at optimum $\{\mu^*_{i,j}\}$ are consistent with a global quantum Markov network ground state.
\end{restatable}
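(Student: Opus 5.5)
The plan is to show $f_1^* \geq E_0$; the reverse inequality $f_1^* \leq E_0$ is immediate from $\mathbb{M}(T) \subseteq \mathbb{L}(T)$. The main tool is the Structure Lemma of Bravyi and Vyalyi \cite{BV} for 2-local commuting Hamiltonians. It says that every site decomposes as $\mathcal{H}_i = \bigoplus_{\alpha_i} \bigotimes_{j \in N(i)} \mathcal{H}^{\alpha_i}_{i \to j}$ in such a way that each term $h_{i,j}$ commutes with the sector projectors $P^{\alpha_i}\otimes P^{\alpha_j}$. Moreover, on each sector block $h_{i,j}$ acts as $h^{\alpha_i,\alpha_j}_{i,j}$ on $\mathcal{H}^{\alpha_i}_{i\to j}\otimes\mathcal{H}^{\alpha_j}_{j\to i}$, tensored with the identity on the complementary subparticles. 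For a leaf $\ell$ the decomposition is trivial ($\mathcal{H}_{\ell\to j}=\mathcal{H}_\ell$), which matches the form in the definition of $\lpi$. I would state this decomposition carefully first, checking that on a tree the sectors and subparticles can be chosen globally consistently, since each vertex sees only its incident edges. This is the step I expect to be the main technical obstacle, because it is exactly where the tree and commutation assumptions enter.

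Next, fix any $\mu \in \mathbb{L}(T)$. Since $h_{i,j}$ commutes with the sector projectors, pinching gives
\begin{equation}
\langle h_{i,j},\mu_{i,j}\rangle = \sum_{\alpha_i,\alpha_j} t^{(ij)}_{\alpha_i,\alpha_j}\,\big\langle h^{\alpha_i,\alpha_j}_{i,j},\, \sigma^{\alpha_i,\alpha_j}_{i\to j,j\to i}\big\rangle \geq \sum_{\alpha_i,\alpha_j} t^{(ij)}_{\alpha_i,\alpha_j}\,\lambda_{\min}\big(h^{\alpha_i,\alpha_j}_{i,j}\big).
\end{equation}
Here $t^{(ij)}_{\alpha_i,\alpha_j} = \Tr[(P^{\alpha_i}\otimes P^{\alpha_j})\mu_{i,j}]$, and $\sigma^{\alpha_i,\alpha_j}_{i\to j,j\to i}$ is the normalized reduced state of the pinched block on the two relevant subparticles. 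The crucial observation is that local consistency makes these distributions classically consistent: the marginal of $t^{(ij)}$ on $\alpha_i$ equals $\Tr[P^{\alpha_i}\mu_i]$, independently of the edge. Thus $\{t^{(ij)}\}$ lies in the classical local polytope $\mathbb{L}_c(T)$, which equals $\mathbb{M}_c(T)$ on a tree \cite{JW}. Hence there is a global distribution $P(\boldsymbol{\alpha})$ over sector assignments $\boldsymbol{\alpha}=(\alpha_v)_{v\in V}$ with these edge marginals.

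Then I would use the fact that in a fixed global sector $\boldsymbol{\alpha}$ the Hamiltonian $H_{\boldsymbol{\alpha}} = \sum_{(i,j)} h^{\alpha_i,\alpha_j}_{i,j}$ is a sum of terms acting on pairwise disjoint tensor factors. Its ground energy is therefore $\sum_{(i,j)}\lambda_{\min}(h^{\alpha_i,\alpha_j}_{i,j})$, attained by a product of the local ground states. Since $H = \bigoplus_{\boldsymbol{\alpha}} H_{\boldsymbol{\alpha}}$, this gives
\begin{equation}
\sum_{(i,j)}\langle h_{i,j},\mu_{i,j}\rangle \geq \sum_{\boldsymbol{\alpha}} P(\boldsymbol{\alpha}) \sum_{(i,j)} \lambda_{\min}\big(h^{\alpha_i,\alpha_j}_{i,j}\big) \geq \min_{\boldsymbol{\alpha}} \lambda_{\min}(H_{\boldsymbol{\alpha}}) = E_0 .
\end{equation}
Since $\mu \in \mathbb{L}(T)$ was arbitrary, $f_1^* \geq E_0$.

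Finally, for the statement about the optimal marginals, take the optimal sector $\boldsymbol{\alpha}^*$ and the product of local ground states of the $h^{\alpha^*_i,\alpha^*_j}_{i,j}$. The resulting edge marginals are exactly of the Petz-iterable form of \cref{eqn:leaf_marginals,eqn:bulk_marginals} (with a deterministic $t$), and they attain $f_1^*=E_0$. By \cref{thm:lpi} they lie in $\mqmn(T)$, i.e.\ they are consistent with a global quantum Markov network ground state. For an arbitrary optimizer $\mu^*$ the same construction applies after pinching: the inequalities above must then be equalities, so each block is supported on ground states of its $h^{\alpha_i,\alpha_j}_{i,j}$, and one can replace $\mu^*$ by an optimizer in $\lpi(T)$.
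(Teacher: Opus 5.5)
Your proof is correct, and it reaches the energy identity by a different and more explicit route than the paper. The paper uses the Bravyi--Vyalyi decomposition only to assert that it is ``safe'' to restrict the minimisation from $\mathbb{L}(T)$ to marginals of $\lpi$ form. It then gets the upper bound from \cref{thm:lpi}, i.e.\ by Petz-gluing such marginals into a global Markov state, with the junction-tree theorem handling the sector labels.

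You instead lower-bound an \emph{arbitrary} $\mu\in\mathbb{L}(T)$. Pinching by $P^{\alpha_i}\otimes P^{\alpha_j}$ and bounding each block by $\lambda_{\min}$ reduces everything to the classical sector distributions $t^{(ij)}$. Local consistency puts these in $\mathbb{L}_c(T)$, and the junction-tree theorem glues only this classical data. The fact that $H_{\boldsymbol\alpha}$ splits into terms on disjoint subparticles then identifies the resulting classical minimum with $E_0$.

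This gains rigour exactly where the paper is terse. Its WLOG restriction to $\lpi$ would require showing that each block of $\mu_{i,j}$ can be replaced by a product-form state compatibly across all edges meeting at a vertex; you never need that. You use \cref{thm:lpi} only for the Markov-network clause, and even there it is dispensable. The product of local ground states in sector $\boldsymbol\alpha^*$ is itself a global ground state, and no entangled pair straddles $A$ and $C$ when $B$ shields $A$ from $C$. Hence $I(A:C|B)=0$ directly. The paper's route, in exchange, packages the result as a structural statement tied to its first theorem: optimisers may be taken in $\lpi(T)=\mqmn(T)$.

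Two remarks on your commentary.

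First, the step you flag as the main obstacle is not where the tree enters. The structure lemma is a per-vertex statement valid on any interaction graph. No global compatibility issue arises, because each subparticle $\mathcal{H}_{i\to j}$ belongs to exactly one edge, and only commutativity is used there. As your own argument shows, the tree is used solely through $\mathbb{L}_c(T)=\mathbb{M}_c(T)$. On a general graph the same reasoning gives that $f_1^*$ is at least the local-polytope value of the induced classical sector problem.

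Second, your hedged conclusion for arbitrary optimisers, namely that $\mu^*$ can be replaced by an optimiser in $\lpi(T)$, is the right one. The stronger reading, that every optimiser is globally consistent, is false. Take $H=0$ with the singlet marginals $\mu_{1,2}=\mu_{2,3}=\ket{\Psi^-}\!\bra{\Psi^-}$ on a three-vertex path: they are optimal but not in $\mathbb{M}(T)$. The paper's proof likewise only establishes that such an optimiser exists.
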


Roughly, this statement holds due to the specific objective being minimized, which results in the $\mu^*_{i,j}$ having the structure as the marginals described in the set $\lpi$, which per \cref{thm:lpi}, necessarily correspond to the marginals of a global quantum Markov network state. In other words, the relaxation always results in a valid global quantum state and the optimization program is hence exact. As pointed out in the proof of this result in \cref{appendix:thmproofs}, the marginals obtain this decomposition due to the Structure lemma of commuting local Hamiltonians \cite{BV}. This lemma, based on the theory of representations of $C^*$-algebras, explains the decomposition of the local Hilbert spaces into direct sums and tensor products of smaller invariant subspaces from the perspective of the interaction algebra, rather than as a characterization of states as in \cite{QMN}.

This theorem also implies that the ground state of a commuting Hamiltonian on a tree is a quantum Markov network. Among other results connecting commuting Hamiltonians to quantum Markov networks, Brown and Poulin \cite{brown2012quantum} also demonstrated a more general version of this fact for higher-order local Hamiltonians and in arbitrary graphs. Yet, attempting to take advantage of this fact directly to obtain the ground state energy through variational optimisation does not work since $\mqmn (T)$ is generally not a convex set. Instead, one has to optimise over the set of all locally consistent matrices on the tree and the optimisation naturally results in marginals from the set $\lpi$.

\cref{thm:commuting} also provides an alternative proof to Theorem 2 of \cite{aharonov2010efficient}, demonstrating an algorithm to efficiently construct the ground state of a 1D commuting Hamiltonian. However, our result provides significant advantages: our algorithm is simpler and does not require computing the local subspace decompositions, and while their algorithm does provide the whole state (as a matrix product state), ours returns the local marginals of the ground state. For computing expected values of local observables, having the reduced density matrices suffices. If one still desires the density matrix of the whole state, it can be reconstructed, albeit not efficiently, using the Petz recovery map. This is partly demonstrated in the proof of \cref{thm:lpi}.

To discuss other cases in which the relaxation is not exact, but where one can nevertheless provide guarantees on the approximation error to the ground-state energy, we introduce a new class of hierarchies in which marginals are defined only for \emph{clusters} of sites that are in close proximity. As we discuss in more detail below, these hierarchies offer several practical advantages. In particular, they are physically well motivated by the fact that most Hamiltonians in condensed matter theory are geometrically local and their correlations are also often localized in space. Consequently, correlations can be captured using local marginals, while still yielding accurate approximations to the ground state energy compared with optimisation over the full marginal set ($\mathbb{L}_t$), which includes marginals between widely separated sites. Restricting to this \emph{geometrically local hierarchy} also substantially reduces the number of constraints, leading to more efficient optimisation.

\subsection{Geometrically local hierarchies}

We begin by first introducing the concepts of clusters and cluster families. A \emph{cluster family} is a set $C = \{c_1, \ldots, c_m\}$ where each $c_i \subseteq V$ is a \emph{cluster} and the family generally satisfies the following two conditions: (1) $C$ is a cover of $V$, i.e.\ $\bigcup_{i = 1}^m c_i = V$, and (2) each cluster $c_i$ contains at least one edge of $G$. It is often convenient to think of the clusters $c_i$ as hyperedges of a hypergraph $K = (V,C)$. Any hypergraph can be associated to a \emph{poset} diagram where nodes are hyperedges and the partial ordering is inclusion, so we draw an arrow $e\to f$ if $f\subset e$; see \cref{fig:poset chain 3} for an example.
Recall also the definition of descendants
$D(e) := \{f \in C\, | \, f \subset e\}$ and ancestors  $\mathcal{A}(e) = \{ f \in C\,|\, f\supset e\}$.

Given a cluster family, the optimisation program then considers marginals supported on the clusters of this family, requiring local consistency on their overlaps. Concretely, this set of marginals is defined as follows. 

\begin{definition}
    Given a cluster family $C = \{c_1, \ldots, c_m\}$ such that conditions (1) and (2) above are true, the set of locally consistent marginals defined over these clusters is 

    \begin{equation}
        \mathbb{L}^C = \left\{(\mu_c, \mu_f)_{c \in C, f \in D(c)} \; \middle| \;  \mu_c \succeq 0, \Tr(\mu_c) = 1, \text{ and } \Tr_{c \setminus f}(\mu_c) = \mu_f \; \forall f \in D(c) \right\} \, .
    \end{equation}
\end{definition}

As before, $\mathbb{L}^C(G)$, denotes its projection to the graph $G$. The optimisation program that is then solved in practice is:

\begin{equation} \label{eqn:cluster_relaxation_energy}
    f^*_C = \min_{\mu \in \mathbb{L}^C(G)} \sum_{e \in E} \langle \mu_e, h_e \rangle + \sum_{v \in V} \langle \mu_v, h_v \rangle \, .
\end{equation}

\begin{figure}
    \centering
\begin{tikzpicture}[
x=2.4cm,y=2.6cm,
vertex/.style={draw=black!85,line width=1pt,rectangle,minimum size=1cm,inner sep=0pt,font=\sffamily},
inclusion/.style={draw={rgb,255:red,75;green,169;blue,231},line width=1.1pt,-{Stealth[length=2mm,width=1.6mm]}}]
\foreach \i in {1,...,5}
\node[vertex] (v\i) at ({\i-1},0) {$\i$};
\node[vertex] (e12) at (0.5,1) {$12$};
\node[vertex] (e23) at (1.5,1) {$23$};
\node[vertex] (e34) at (2.5,1) {$34$};
\node[vertex] (e45) at (3.5,1) {$45$};
\node[vertex] (e123) at (1,2) {$123$};
\node[vertex] (e234) at (2,2) {$234$};
\node[vertex] (e345) at (3,2) {$345$};
\foreach \parent/\child in {e123/e12,e123/e23,e234/e23,e234/e34,e345/e34,e345/e45}
\draw[inclusion] (\parent.south) -- (\child.north);
\foreach \parent/\child in {e12/v1,e12/v2,e23/v2,e23/v3,e34/v3,e34/v4,e45/v4,e45/v5}
\draw[inclusion] (\parent.south) -- (\child.north);
\end{tikzpicture}
    \caption{Poset diagram for a hypergraph with vertices $V = \{1,2,3,4,5\}$ and hyperedges $E=\{(123),(234),(345),(12),(23),(34),(45),(1),(2),(3),(4),(5) \}$.}
    \label{fig:poset chain 3}
\end{figure}

Now it becomes clear why conditions (1) and (2) are necessary: without them, the objective above would be impossible to evaluate. Additionally, adding the condition $\max_{i} |c_i| \leq t$, we can define $\mathbb{L}_t^C$, and then the projection of this set to marginals on edges and nodes of graph $G$, can be written as $\mathbb{L}_t^C(G)$. 
If $C = \{S \subseteq V: |S| = t\}$ is the cluster family containing all clusters of size $t$, then $\mathbb{L}^C = \mathbb{L}_t$. Hence, this notation is quite general.

Note that while it is true that $\mathbb{L}^C_t(G) \subseteq \mathbb{L}_{t}(G)$, it is not true that $\mathbb{L}^C_t(G) \subseteq \mathbb{L}^C_{t'}(G)$ for $t \geq t'$. Hence, these sets do not form a hierarchy. 
Note that while $\mathbb{L}^C(G) \subseteq \mathbb{L}_{t}(G)$ whenever $\max_i|c_i| \leq t$, it is not true in general that $\mathbb{L}^{C'}(G) \subseteq \mathbb{L}^C(G)$ for $\max_i |c_i'| \geq \max_j |c_j|$. Consequently, $f^*_{C'}  \not \geq f^*_C$ merely because $C'$ has larger clusters. Intuitively, large clusters spanning distant sites can perform worse than smaller clusters of nearby sites, since the latter's marginals more faithfully capture ground-state correlations, which are typically local. 

However, it is still possible to establish a hierarchy if one more methodically increases the cluster support. 
The general idea is to use a cluster family $C$ such that each cluster is a connected subgraph of $G$. This defines the base level of the hierarchy. For the next level one may choose to increase all or a subset of the clusters by one adjacent site. In general then, at level $t$, one increases all or a subset of the clusters by one, from the previous state of the clusters. For these clusters, it is then trivial that $\mathbb{L}^{C_{t}} \subseteq \mathbb{L}^{C_{t'}}$ for $t' \geq t$, and hence 

\begin{equation}
    E_0 = f^*_{C_{+m}} \geq \ldots \geq f_{C_{+1}}\geq f^*_C \, ,
\end{equation}
\noindent
where $m = n - \max_i |c_i|$. In other words, $C_{+m}$ is the cluster family where one or more of its hyperedges are of size $n$. 

One example of such a hierarchy that we use next, is based on intervals. (We define it for a chain but it can be extended to lattices of higher dimension). For $2\le t\le n$, denote by $\mathcal{I}_t$ the set of intervals of size $t$ in a chain of $n$ sites, together with their subsets. We define the level-$t$ hierarchy consisting of density matrices $\mu_I$ supported on $I\in\mathcal{I}_t$ satisfying the local consistency conditions

\begin{align} \label{eqn:interval_set_defn}
    \mathbb{Y}_t
    &:=
    \left\{
    (\rho_I)_{I\in\mathcal{I}_t}\,\middle|\, 
    \rho_I \succeq 0\,,\quad
    \Tr(\rho_I)=1\,,\quad 
    \Tr_{I\setminus J}(\rho_I) 
    =\rho_J \,,\text{ for all }J\subseteq I
    \right\} .
\end{align}

With these definitions at hand, we not discuss some cases where convergence guarantees associated with these geometrically local relaxations exist.

\subsection{Other bounds: exponential resource cost} \label{sec:other bounds}

In general, we expect the computational resources of this hierarchy to grow exponentially with the desired inverse error to the ground state energy per site. We present a simple argument for this, and then discuss relevant literature showing clever tricks to ease or sidestep this cost.

To see this exponential cost explicitly, it is enough to consider a chain:

\begin{align}
    H = \sum_{i=1}^{n-1}h_{i,i+1}\,.
\end{align}
Denote the optimiser of the level-$t$ relaxation by $\mu_S^*$, where set $S$ is a cluster consisting of at most $t$ sites. Then, assume that $n = tJ$ for some $J$ and partition the chain into $J$ disjoint and adjacent clusters of $t$ sites: $\{ (1:t), (t+1:2t), \dots \}$, where $(i:j)\equiv (i,i+1,\dots, j)$. Then we define the rounding procedure to a global state by taking the product of the optimum density matrices on these disjoint sets:
\begin{align}
    \sigma = 
    \mu^*_{1:t}\otimes \mu^*_{t+1:2t}
    \otimes \cdots \otimes 
    \mu^*_{(J-1)t+1:Jt}
    \,.
\end{align}
By construction, $\sigma_{i,i+1}=\mu^*_{i,i+1}$
if $(i,i+1)$ is within one of these disjoint clusters and 
$\sigma_{i,i+1}=\mu^*_i\otimes \mu_{i+1}^*$ if $i,i+1$ is across two neighbouring partitions. We denote the set of such cuts by $D:=\{(t,t+1), (2t,2t+1), \dots\}$.
Since $\sigma$ is a valid quantum state, we have $E_0 \le \Tr(H\sigma)$.
Thus, subtracting $f_t^*$ from both sides of this inequality:
\begin{align}
    E_0 - f^*_t \le
    \sum_{i=1}^{n-1}
    \left(\Tr(h_{i,i+1}\sigma_{i,i+1})
    -\Tr(h_{i,i+1}\mu^*_{i,i+1})\right)
    =
    \sum_{i\in D}
    \left(\Tr(h_{i,i+1}\sigma_{i,i+1})
    -\Tr(h_{i,i+1}\mu^*_{i,i+1})\right) .
\end{align}
Now note that $
\lambda_{\text{min}}(h)\le 
\Tr(h\rho)\le \lambda_{\text{max}}(h)$
and that $\lambda_{\text{max}}(h)-\lambda_{\text{min}}(h)\le
|\lambda_{\text{max}}(h)|+|\lambda_{\text{min}}(h)|\le 
2\| h \|$, and assume that 
$2\|h\|\le 1$.
So we obtain for the approximation to the energy per site
\begin{align}
    \frac{E_0 - f^*_t}{n} \le
    \frac{1}{n}
    \sum_{i\in D}
(\lambda_{\text{max}}(h_{i,i+1})-\lambda_{\text{min}}(h_{i,i+1}))
\le
\frac{J}{n}
=
\frac{1}{t}\,.
\end{align}

If we want the approximation to this quantity to be $\epsilon$ close, we need $t=\lceil 1/\epsilon \rceil$, and since the time and memory complexity of the classical simulation algorithm grow polynomially in $n$ but exponentially in $t$, they will grow exponentially in $1/\epsilon$ as well.

The recent works of \cite{eisert2026lower} and \cite{kull2024lower} also observe this exponential growth of resources in slightly different settings and develop clever tricks to practically reduce this cost. We now briefly elaborate on the work of \cite{eisert2026lower} which introduces a different geometrically local hierarchy, reinforcing our claim of their importance.  \cite{kull2024lower} is briefly discussed in \cref{section:outro}.

In \cite{eisert2026lower}, Eisert strengthens the Anderson bound by considering local marginal consistency in translationally invariant systems. For a translationally invariant geometrically local Hamiltonian on a $D$-dimensional
cubic lattice, the standard Anderson bound partitions the lattice into finite patches of $m^D$ sites. It lower-bounds the ground state energy by minimizing the Hamiltonian terms supported on each patch ($h_m$) independently and summing the results. As shown in \cite{eisert2026lower}, the error of this bound in the energy per site decreases only polynomially with the patch size, being governed by the terms crossing the patch boundaries. This is the same boundary effect as in our argument above. Hence, to reach precision $\epsilon$ the patch size $m$ must grow polynomially in $1/\epsilon$. The Hilbert space dimension of a patch, $d^{m^D}$, and with it the cost of direct diagonalization, therefore grows exponentially in $1/\epsilon$.

The key observation of \cite{eisert2026lower} is that the Anderson bound neglects consistency between neighbouring patches. Each patch is optimized independently, even though in a global state their reduced density matrices must be mutually compatible. Eisert exploits this consistency to obtain the bound of a larger patch without diagonalizing it directly. For simplicity, we restrict the following discussion to a chain ($D=1$). To approximate $\lambda_{\mathrm{min}}(h_{2m})$, he optimizes over two disjoint marginals of size $m$, supported on $(1:m)$ and $(m+1:2m)$. He models the correlations between these two patches with an additional marginal of variable size $2s$, supported on $(m-s+1:m+s)$ and required to be consistent with both. The resulting optimal values satisfy $f_{m,s} \leq f_{m,t} \leq f_{m,m} =
\lambda_{\mathrm{min}}(h_{2m})$ for all $t = s, \ldots, m$. The largest variable has dimension $d^{\max(m,2s)}$, so for $2s \le m$ one obtains a lower bound on the $2m$-site Anderson bound using only $m$-site marginals.

For any fixed $m$, this construction defines a new geometrically local hierarchy. Here $s$ indicates the hierarchy level, and the cluster family is $C = \{(1:m), (m-s+1:m+s), (m+1:2m)\}$. Note, however, that increasing $s$ converges to the Anderson bound of a $2m$-site patch rather than to the ground state energy itself. The accuracy is therefore still controlled by $m$: the trick reduces the size of the largest marginal from $2m$ to $m$ sites, but reaching precision $\epsilon$ still requires $m$ to grow polynomially in $1/\epsilon$, and hence a cost exponential in $1/\epsilon$.

Next, we will make assumptions on the Hamiltonian to improve on this result and guarantee exponential convergence of the local hierarchy.

\subsection{Exponential convergence of the local hierarchy for weakly-interacting chains}

We consider the ground state problem for the following Hamiltonian on a chain of 
$n$ qudits:
\begin{align}
    \label{eq:H0tV}
    H = H_0 + \lambda V\,,
    \quad 
    H_0 = \sum_{i=1}^n h_i
    \,,\quad 
    V = \sum_{i=1}^{n-1}v_{i,i+1}
\end{align}
Since $H_0$ is separable, its
ground state is a product state and we assume it to be non-degenerate and of energy zero.
We also assume that there is a gap of at least one with the first excited state of $H_0$, the bounds on the operator norms 
\begin{align}
    \|h_i\|\le h_* \,,\quad \|v_{i,i+1}\|\le 1\,,
\end{align}
and take $\lambda>0$ since the sign of $\lambda$ can be absorbed in the definition of $V$.

Note that the current setup encompasses general geometrically-local perturbations of separable  Hamiltonians. 
Indeed, the choice of nearest-neighbour interaction is without loss of generality for short range Hamiltonians 
since we can always coarse grain the lattice so that interactions are among neighbours at the expense of an increased dimensionality of the qudits. Further, if the on-site gap is $\Delta$ and if $\|v_{i,i+1}\|\le J$, we can simply divide $H$ by $\Delta$ and reabsorb $J$ in $\lambda$ to bring the Hamiltonian to the form assumed here. 

We prove exponential convergence of the local hierarchy  $\mathbb{Y}_t$ for the weakly-interacting setting.

\begin{theorem}\label{thm:exp_conv_1d_weak}
Given the Hamiltonian $H$ satisfying the assumptions above, 
denote 
by $E_0$
its ground state energy and 
by $f_t^*$
the energy of the relaxation where feasible density matrices are in $\mathbb{Y}_t$.
Then, there exists a system-size independent $\lambda_*>0$ such that for all $0<\lambda<\lambda_*$,  
\begin{align}
    0\le 
    \frac{E_0-f^*_t}{n}
    \le 
    c(\lambda) e^{-t/8}
    \,,
\end{align}
where $c(\lambda)$ is a
positive function of $\lambda$ independent of the system size.
\end{theorem}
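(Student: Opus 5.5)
The lower inequality $0\le (E_0-f_t^*)/n$ is immediate: $\mathbb{Y}_t$ is a relaxation of the marginal set, so $f_t^*\le E_0$. For the upper inequality I would exhibit a \emph{local dual certificate}. The goal is an operator inequality
\begin{align}
    H \;\succeq\; E' \,\id + \sum_{I\in\mathcal{I}_t} K_I - n\,\varepsilon_t\,\id ,
    \qquad K_I\succeq 0,\ \ \operatorname{supp}K_I\subseteq I ,
\end{align}
with $E' \ge E_0 - c_1 n\varepsilon_t$ and $\varepsilon_t = c(\lambda)e^{-t/8}$. Every term is supported on an interval of length at most $t$. Pairing with any feasible $(\rho_I)\in\mathbb{Y}_t$ and using local consistency to evaluate each term on its own interval gives $\sum\langle h,\rho\rangle \ge E' - n\varepsilon_t$. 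Since each $\rho_I\succeq 0$ and each $K_I\succeq 0$, the $K_I$ contributions are nonnegative and can be dropped. Hence $f_t^*\ge E_0 - O(n\varepsilon_t)$.

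\textbf{Step 1: a locally positive dressed Hamiltonian.} I would use the Schrieffer--Wolff / cluster-expansion construction of \cite{Bravyi_2008}, specialised to the chain. For $\lambda<\lambda_*$ (independent of $n$) it gives a unitary $U=e^{S}$ with $S=\sum_X S_X$ a sum of anti-Hermitian interval terms satisfying $\|S_X\|\le (c\lambda)^{|X|}$. It also gives
\[
U^\dagger H U = H_0 + D + R,
\]
with the following properties:
\begin{itemize}
    \item $D=\sum_X D_X$ has interval terms, each block-diagonal with respect to the local projectors $P_i=\ket{0}\!\bra{0}_i$ onto the unperturbed product ground state, and each annihilating the product state up to a scalar $e_X$;
    \item $\|D_X\|\le (c\lambda)^{|X|}$;
    \item $R$ is a sum of terms of range $\ge r$ with total norm $\le n(c\lambda)^{r}$, for a truncation order $r$ to be chosen.
\end{itemize}
Grouping $D_X$ with the on-site gap terms $h_i$ inside $X$ (unit gap, $\lambda$ small), each grouped term $G_X$ satisfies $G_X - e_X\succeq 0$. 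This uses that $D_X$ vanishes on $P_X$ and is small on the complement, where $H_0$ is at least $1$ times the number of excitations. Then $U^\dagger H U \succeq \sum_X e_X + \sum_X (G_X-e_X) - n(c\lambda)^r$. Evaluating on $U\ket{0\cdots0}$ shows $E' := \sum_X e_X \ge E_0 - n(c\lambda)^r$ (and conversely up to the same error).

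\textbf{Step 2: conjugating back and truncating while preserving positivity.} This gives $H\succeq E' + \sum_X U(G_X-e_X)U^\dagger - n(c\lambda)^r$, but the terms $U(\cdot)U^\dagger$ are only quasi-local. For each $X$ I would enlarge to the interval $I_X$ of length $t$ centred on $X$ and replace $A=U(G_X-e_X)U^\dagger$ by $\Phi_{I_X}(A)=\Tr_{I_X^c}(A)\otimes \id_{I_X^c}/d^{|I_X^c|}$. This map is unital and completely positive, so $\Phi_{I_X}(A)\succeq 0$ exactly. Its error $\|A-\Phi_{I_X}(A)\|$ is controlled by the quasi-locality of $U$. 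Because $S$ has exponentially decaying interval terms, a Lieb--Robinson-type / Hastings-style local approximation of $e^{S}(\cdot)e^{-S}$ gives error $\le c' e^{-\kappa (t-|X|)}$. Summing over $X$ with $|X|\le r$ (at most $O(nr)$ terms) and choosing $r\approx t/2$ balances $(c\lambda)^{r}$ against $e^{-\kappa(t-r)}$. The loose constants in this balancing are where I expect the rate $e^{-t/8}$ to come from. The resulting $K_I$ collect the $\Phi_{I_X}(\cdot)$ for intervals of length $t$.

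\textbf{Main obstacle.} The delicate step is Step 1 done uniformly in $n$. One needs a convergent cluster expansion producing \emph{interval-supported} terms that are individually positive after subtracting $e_X$, with all constants independent of system size. Local positivity, rather than mere global positivity of $H_0+D$, is what the certificate needs. I would first try to obtain it by writing $G_X$ in the block form determined by $P_X$ and $Q_X=\id-P_X$, and then using the unit gap on the $Q_X$ block together with $\|D_X\|\le(c\lambda)^{|X|}$, summed over the finitely many overlapping $X$ per site, for $\lambda$ small enough. The quasi-locality estimate in Step 2 is then standard once $\|S_X\|$ decays exponentially.
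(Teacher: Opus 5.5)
\textbf{Main gap: the pairing step.} Your argument rests on ``pairing'' the operator inequality $H\succeq E'\id+\sum_I K_I-n\varepsilon_t\id$ with a feasible $(\rho_I)\in\mathbb{Y}_t$. That step is invalid, and it is exactly where the content of the theorem lies.

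Elements of $\mathbb{Y}_t$ are not global states. The functional $\omega_t(A)=\Tr(\rho_I A)$ is nonnegative on PSD operators supported on a \emph{single} interval of length at most $t$, and on nonnegative combinations of these. It is not nonnegative on globally PSD elements of $\mathcal{V}_t$. Indeed, $H\succeq E_0\id$ is already a certificate of your form (take $K_I=0$, $\varepsilon_t=0$, $E'=E_0$), and pairing it would make the relaxation exact for every chain.

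In your construction, the illegitimate move is bounding the remainder from below by $-\|\cdot\|\id$. The remainder consists of the Schrieffer--Wolff tail $URU^\dagger$ and the truncation errors $A_X-\Phi_{I_X}(A_X)$. Such a bound holds against states, but not against $\omega_t$. What your construction actually gives, and what you should keep, is the exact identity $H-E'\id=\sum_X\Phi_{I_X}(A_X)+Z$. Here $Z\in\mathcal{V}_t$ (since every other term in the identity is) and $\|Z\|\le n\varepsilon_t$. The $\Phi_{I_X}(A_X)$ terms are then harmless, but you still owe a bound on $|\omega_t(Z)|$. The bound you get for free comes from the canonical decomposition $Z=\tr(Z)\id+\sum_{b-a+1\le t}\Delta_{[a,b]}(Z)$ and reads $|\omega_t(Z)|\le(1+4nt)\|Z\|$. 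That destroys the size-independent per-site estimate.

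So the step you call standard is the hard one. Conversely, the local positivity you flag as the main obstacle is easy once $UHU^\dagger$ is block diagonal: the paper simply sets $T_I=W_I+b_I\sum_{i\in I}q_i$ and $T_i=h_i-\beta_iq_i$, with $\beta_i=\sum_{I\ni i}b_I\le 1$.

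\textbf{How the paper closes it.} It uses the exact identity $H-E_0\id=\sum_\gamma\alpha(T_\gamma)$; Fr\"ohlich et al.'s block diagonalisation leaves no remainder. It then extends $\omega_\ell$ linearly to all operators by $\hat\omega_\ell(A)=\tr(A)+\sum_{b-a+1\le\ell}\omega_\ell(\Delta_{[a,b]}(A))$. The key estimate is $|\hat\omega_\ell(A)|\le(1+2m^2)\|A\|$ for $A$ supported on an interval of $m$ sites, which is polynomial in $m$ and independent of $n$. Next it telescopes $\alpha(T_\gamma)$ over the truncations $\mathbb{E}_{B_s(I)}$, splitting into short and long terms:
\begin{itemize}
\item For $m_\gamma\le\ell/2$, the truncation at radius $\lfloor(\ell-m_\gamma)/2\rfloor$ is a PSD operator on an interval of length at most $\ell$. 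Its $\omega_\ell$ is therefore nonnegative, and only a Lieb--Robinson tail of order $m_\gamma(\ell+2)^2e^{-\ell/4}\|T_\gamma\|$ remains.
\item Long terms are controlled by $|\hat\omega_\ell(\alpha(T_\gamma))|\le G(1+m_\gamma)^3\|T_\gamma\|$, set against the exponential decay of $\|T_\gamma\|$.
\end{itemize}
The rate $e^{-\ell/8}$ comes from absorbing $(\ell+2)^2$ into $e^{-\ell/4}$, not from balancing a truncation order.

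\textbf{A possible rescue of your route.} I believe your approach can be saved with a different ingredient, though this is my suggestion and not in the paper. Truncate to intervals of length at most $s=\lfloor t/2\rfloor$, so that $Z\in\mathcal{V}_s$. Let $\sigma_A$ and $\sigma_B$ be the tensor products of the $\rho$'s over two tilings of the chain by blocks of length $2s$, staggered by $s$. Let $\tau$ be the product over their common refinement into blocks of length $s$. One checks that $\Tr_{J^c}(\sigma_A+\sigma_B-\tau)=\rho_J$ for every interval $J$ with $|J|\le s$. Hence $\omega_t(Z)=\Tr[(\sigma_A+\sigma_B-\tau)Z]$ and $|\omega_t(Z)|\le 3\|Z\|$, which makes a norm-small remainder harmless.
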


Thus, we need to take $t = \mathcal{O}(\log(1/\epsilon))$ to achieve $\epsilon$ error in estimating the ground state energy per site. This implies a time and space complexity of the classical algorithm that solves the relaxation which is polynomial in $1/\epsilon$, in contrast to the general case, where the complexity is exponential in $1/\epsilon$ as remarked in Section \ref{sec:other bounds}.

The proof of this theorem is detailed in Appendix \ref{app:proof exp_conv_1d_weak} and we here sketch the main ideas.
The derivation is based on a block diagonalisation result of \cite{froehlich2019lieschwingerblockdiagonalizationgappedquantum}. In this work, the authors construct a unitary $U$ as a product of gates, one per interval  of the chain, such that $UHU^\dagger$ is block-diagonal with respect to the ground state of $H_0$ for sufficiently small $\lambda$. In the proof of \cref{thm:exp_conv_1d_weak}, we first show how this result can be interpreted as a sum-of-squares decomposition, $H-E_0\id=\sum_\gamma P_\gamma$, $P_\gamma\succeq 0$, where the sum is over intervals but $P_\gamma$ is supported over the whole chain. Exploiting the locality structure of $U$ and Lieb-Robinson bounds \cite{hastings2010localityquantumsystems,Nachtergaele_2019}, we show that the $P_\gamma$'s quasi-local, exponentially decaying around intervals. Note, however, that the expectation value functional $\omega_t(A)=\Tr(\mu_IA)$ is defined only for operators $A$ supported on intervals $I$ of size at most $t$, and sums thereof. To overcome this, we then introduce an extension of $\omega_t$ on the space of all operators, defined by applying $\omega_t$ on truncations of operators to intervals of length at most $t$. We finally apply this extension to the sum-of-squares decomposition of $H-E_0\id$ and show that, thanks to the quasi-locality of $P_\gamma$, $E_0 -\omega_t(H)$ decays exponentially with $t$, thus establishing the theorem.

\section{Distributed algorithm} \label{section:qmp}

In this section, we derive the distributed algorithm, first, for a simpler version where marginals are supported on the edges and nodes of the graph, and in \cref{subsection:clustering} we generalise the algorithm for marginals supported on larger sites.

In the discussions below, we restrict ourselves to geometrically local Hamiltonians acting on at most two sites, but we remark that it can be readily extended to short-range Hamiltonians acting on more than two sites by grouping sites together at the expense of a bigger on-site dimensionality.

\subsection{Two-local marginals}

Consider a graph $G=(V,E)$ and a 2-local qudit Hamiltonian $H = \sum_{e\in E}h_e + \sum_{i\in V} h_i$ with interactions supported on the edges and nodes of the graph. The first-order relaxation to the ground state energy is then

\begin{align} \label{eqn:gse_2_local}
    f_1^* = \min_{\mu\in\mathbb{L}(G)}
    \sum_{e\in E} \; \langle\mu_e,h_e\rangle+
    \sum_{i\in V} \; \langle \mu_i, h_i\rangle\,,
\end{align}
which can be efficiently solved by a semidefinite program to obtain a lower bound to the exact ground state energy $E_0$. 

We derive the distributed algorithm by exploiting \emph{duality}, which we now describe. For a constrained optimization program $\min_x f(x)$ with constraints $c_i(x) \leq b_i$ $(i = 1, \ldots, m)$, the \emph{Lagrangian} $L(x, \nu)$  is a function that incorporates both the original objective function and constraints into a single new objective, with new variables $\vec \nu = (v_1, \ldots, v_m)$ called \emph{Lagrange multipliers}. These can be thought of as penalty terms that negatively affect the value of the objective whenever the constraints are not satisfied. Solving the original (primal) problem is then equivalent to $\min_x \max_\nu L(x,\nu)$, and the \emph{dual program} swaps the order of the min and max: $\max_\nu \min_x L(x,\nu)$. As a result, the dual program generally results in a lower bound to the primal problem, yet, is a powerful transformation since the \emph{dual function} $Q(\nu) = \min_x  L(x,\nu)$ is always concave. Fortunately, under the mild condition that a strictly feasible point exists---a point $x$ satisfying $c_i(x) < b_i$ for all non-affine $c_i$ (and $c_i(x) \leq b_i$ for any affine $c_i$) for all $i \in [m]$---the optimal value of the primal and dual coincide and the bound is tight. This condition is known as \emph{Slater's condition}, and if met, the problem is said to satisfy \emph{strong duality}.

The minimisation program of \cref{eqn:gse_2_local} satisfies strong duality since $\mu_e = \id /d^2$ and $\mu_v = \id/d$ (for all $e \in E$ and $v \in V$) is a strictly feasible point.\footnote{All equality constraints imposed by $\mathbb{L}(G)$ are affine and can be written as inequalities of the required form, so they need only be satisfied, not necessarily strictly. The positivity constraint $-\mu \preceq 0$ does require strict feasibility (as it is not affine in the polyhedral sense, but with respect to the PSD cone), and $\mu_e = \id/d^2$, $\mu_v = \id/d$ satisfy this, being positive definite.} Instead of incorporating all constraints into the Lagrangian, we consider a \emph{partial} dualisation incorporating only the compatibility constraints. The ones for positivity and normalization of the marginals will be considered manually within the algorithm. The Lagrange multipliers lie in the same space as the constraints, and so for each constraint $\Tr_k(\mu_{k, \ell}) = \mu_\ell$ we introduce the Lagrange multiplier $\nu_{k \to \ell} \in \mathrm{Herm}(\mathbb{C}^{d \times d})$. By strong duality, our dual program is exact and yields $f_1^*$ .

To be concrete, our partial Lagrangian is then

\begin{align*}
    L(\mu,\nu)
    &=
    \sum_{e\in E}\langle\mu_e,h_e\rangle+\sum_{i\in V}\langle\mu_i, h_i\rangle + \sum_{(i,j) \in E} \langle v_{i \to j}, \Tr_i(\mu_{i,j}) - \mu_j\rangle + \langle v_{j \to i}, \Tr_j(\mu_{i,j}) - \mu_i \rangle \\
    &=
    \sum_{e\in E}\langle\mu_e,h_e\rangle+\sum_{i\in V}\langle\mu_i, h_i\rangle +\sum_{(i,j)\in E}
    \langle \mu_i,\nu_{j\to i}\rangle - 
    \langle\mu_{i,j}, \nu_{j\to i}\otimes \id\rangle
    +
    \langle \mu_j,\nu_{i\to j}\rangle - 
    \langle \mu_{i,j} ,\id\otimes \nu_{i\to j}\rangle \\
    &=
    \sum_{(i,j)\in E}\langle
    \mu_{i,j},
    h_{i,j}-\nu_{j\to i}\otimes \id
    -\id\otimes \nu_{i\to j}\rangle 
    +
    \sum_{i\in V}
    \left\langle\mu_i ,
    h_i
    +\sum_{j\in N(i)}\nu_{j\to i}
    \right\rangle \, ,
\end{align*}
\noindent
where $N(i)$ is the set of neighbours of $i$. The dual function $Q(\nu)=\min_{\mu} L(\mu, \nu)$ can be readily computed since $L$ is separable in $\mu$'s:

\begin{align}
\label{eqn:dual_function_non-smooth}
    Q(\nu)
    =
    \sum_{(i,j)\in E}\lambda_{\text{min}}
    (
    h_{i,j}-\nu_{j\to i}\otimes \id
    -\id\otimes \nu_{i\to j}) 
    +
    \sum_{i\in V}
    \lambda_{\text{min}}
    \left(
    h_i
    +\textstyle\sum_{j\in N(i)}\nu_{j\to i}
    \right)
    \,.
\end{align}
\noindent
where $\lambda_{\text{min}}(\cdot)$ refers to the minimum eigenvalue of the matrix within. This function is concave but non-smooth in general due to the possible multiplicity of the minimum eigenvalues. We solve this minimization problem in two different ways, each with their own trade-offs, but both resulting in distributed ``message-passing'' algorithms.

\subsubsection{Subgradient method}

Since the function is non-smooth, considering solving the equation via gradient descent instead results in subgradient descent. As a consequence, the objective does not decrease with every iteration, but generally does trend downward, and can be shown to converge for specific choices of hyperparameters. We first describe the algorithm and then discuss its convergence. 

Recall that the subgradient of a function $f$ at $x$ is any vector $g$ such that $f(y)\ge f(x) + \langle g, x-y\rangle$ for all $y$. We denote the set of subgradients by $\partial f(x)$. The simplest subgradient method to minimise $f$ updates $x$ as: $x\leftarrow x - \alpha g$ with $\alpha$ being a small step size.

Now, we apply the subgradient method to maximise $Q(\nu)$, i.e.~minimise $-Q(\nu)$. The $j\to i$ component of a subgradient of $Q(\nu)$ is

\begin{equation}
    -\Tr_j (\mu^*_{i,j}) + \mu^*_i  \, ,
\end{equation}
\noindent
where
\begin{equation}
    \mu^*_{i,j}\in \mathcal{G}(h_{i,j}-\nu_{j\to i}\otimes \id
    -\id\otimes \nu_{i\to j})
\,,\quad 
\mu_i^*
\in\mathcal{G}\left(
    h_i
    +\sum_{j\in N(i)}\nu_{j\to i}
    \right)\,,
\end{equation}
\noindent
and $\mathcal{G}(h)$ denotes the set of ground states of $h$.
The update of Lagrange multiplier $\nu_{j \to i}$ is hence
\begin{align}
    \nu_{j\to i}
    \leftarrow 
    % \nu_{j\to i}
    % -
    % \alpha 
    % (\Tr_j \mu^*_{(ij)} - \mu^*_i)
    % =
    \nu_{j\to i}
    +
    \alpha 
    \left[ \mu^*_i-\Tr_j ( \mu^*_{i,j}) \right]
    \,.
\end{align}

\begin{algorithm}[t]
\caption{Constant step size subgradient descent}
\label{alg:subgradient}
\begin{algorithmic}[1]
\INPUT A Hamiltonian $H = \sum_{i \in V} h_i + \sum_{(i,j) \in E} h_{i,j}$ supported upon on the nodes and edges of a graph $G = (V, E)$, number of iterations $T \in \mathbb{N}$, and step size $\alpha \in \mathbb{R}$.
\SET $\boldsymbol \nu^{(0)} \coloneqq \bigoplus_{i \in V} \bigoplus_{j \in \mathcal N(i)} \nu^{(0)}_{j \to i} = \mathbf{0}$, $q_{\mathrm{best}} = - \infty$.
\vspace{0.2cm}
\For{$k = 0, 1, \ldots, T-1$}
    \ForAll{site $i \in V$ and edge $(i,j) \in E$} \Comment{This step can be done in parallel}
        \State $\widehat H_i(\boldsymbol \nu^{(k)}) \gets h_i + \sum_{j \in \mathcal N(i)} \nu^{(k)}_{j \to i}$
        \State $\mu^{(k)}_{i} \gets$ $\ket{\psi_i } \bra{\psi_i}$, where $\ket{\psi_i}$ is a sampled zero-energy eigenvector of $\widehat H_i(\boldsymbol \nu^{(k)})$;
        \vspace{0.2cm}
        \State $\widehat H_{i,j}(\boldsymbol \nu^{(k)}) \gets 
        h_{i,j}
        - \nu^{(k)}_{j \to i} \otimes \id_j
        - \id_i \otimes \nu^{(k)}_{i \to j}$
        \State ${\mu}_{i,j}^{(k)} \gets$ $\ket{\psi_{i, j}} \bra{\psi_{i, j}}$, where $\ket{\psi_{i, j}}$ is a sampled zero-energy eigenvector of $\widehat H_{i,j}(\boldsymbol \nu^{(k)})$
    \EndFor

    % ==================================
    % I added the following section as the method really outputs the best multipliers after T iterations  
    \vspace{0.2cm}
    \State $q^{(k)} \gets
    \sum_{(i,j)\in E} \lambda_{\min} \left( \widehat H_{i,j}(\boldsymbol \nu^{(k)} ) \right) + \sum_{i\in V} \lambda_{\min} \left( \widehat H_i(\boldsymbol \nu^{(k)}) \right)$

    \If{$q^{(k)}>q_{\mathrm{best}}$}
    \State $q_{\mathrm{best}}\gets q^{(k)}$
    \State $\boldsymbol\nu_{\mathrm{best}}\gets\boldsymbol\nu^{(k)}$
    \EndIf
    % ==================================
    
    \ForAll{directed edge $j \to i$} \Comment{This step can be done in parallel}
    \vspace{0.1cm}
        \State $\nu_{j \to i}^{(k+1)} \gets \nu_{j \to i}^{(k)} + \alpha \left [\mu_i^{(k)} - \Tr_j(\mu_{ij}^{(k)}) \right]$
    \EndFor
\EndFor

% \State \Return $\boldsymbol \nu^{(T)}$
\State \Return $(\boldsymbol \nu_{\mathrm{best}}, q_{\mathrm{best}})$, where $q_{\mathrm{best}} = \max_{0 \le t < T} Q(\boldsymbol \nu^{(t)})$
\end{algorithmic}
\end{algorithm}

The algorithm iterates this procedure until the number of maximum iterations is reached, or other practically motivated condition is met (discussed in more detail below). The procedure is condensed in \cref{alg:subgradient}.

Observe that each multiplier update can be performed independently from all others using only local information coming from its neighbour. This information can be thought of as a ``message'', and can hence be thought of as a distributed message passing algorithm. Note that even though our algorithm is designed to estimate the ground state energy, it also yields the optimal marginals as byproduct.

% This algorithm can be interpreted as the quantum version of the classical min-sum (or alternatively max-product) algorithm.\footnote{The min-sum algorithm seeks a mode $x$ that maximises (or minimises) a joint probability distribution $p(x)$ \cite{JW}}. 

Regarding the algorithmic complexity, observe that an iteration of the algorithm simply consists of diagonalizing a $d \times d$ matrix for each multiplier and performing basic arithmetic. In a given instance, there are $2 |E|$ Lagrange multipliers, which in the worst-case scales as $\mathcal{O}(n^2)$ with the number of sites. But since we consider mainly graphs with a bounded degree, in most cases we have $|E| = \mathcal O (n)$. The question then is: how many iterations are required for convergence?

By adapting the proof technique of \cite{boyd2003subgradient} to our setting (see \cref{appendix:section_qmp}), we show that:

\begin{restatable}[Convergence guarantee for subgradient descent]{theorem}{convergencesubgradient}\label{thm:convergence_subgradient_complete}
    Given a target error density $\epsilon$, 
    \cref{alg:subgradient} 
    with the input step size $\alpha = c \epsilon$, where $0 < c < 1/\Delta_G$ is a constant independent of both $\epsilon$ and $n$
    and $\Delta_G \coloneqq \max_{i \in V} |N(i)|$, 
    produces an objective value $f_{\mathrm{subgrad}} \coloneqq \max_{0 \le t < T} Q \left ( \nu^{(t)} \right)$ 
    such that $\dfrac{f^*_1 - f_{\mathrm{subgrad}}}{n} \leq \epsilon$ 
    after $T$ iterations, for any integer $T \ge \left \lceil \dfrac{R^2}{2 c \left (1 - c \Delta_G \right) n \epsilon^2} \right \rceil$, 
    where $R \coloneqq \mathrm{dist} \left(\nu^{(0)}, \arg \max_\nu Q \left( \nu \right) \right)$. 
    Each iteration in \cref{alg:subgradient}
    requires $\mathcal O \left (nd^3 + |E|d^6 \right)$ arithmetic operations, where $d$ is the dimension of the local Hilbert space per site. %\footnote{\todo{Mention something like: under certain condition, such as when we know the spectral gap, we can extract the ground-state eigenvector with only $O \left ((d^2)^{2} \right)$}}
\end{restatable}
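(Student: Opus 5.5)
The approach is the standard constant-step-size subgradient analysis of \cite{boyd2003subgradient}, applied to the concave dual $Q$ in \cref{eqn:dual_function_non-smooth}. The $n$-dependence is tracked by bounding the squared norm of the subgradient by a quantity linear in $n$.

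Let $\nu^*\in\arg\max_\nu Q(\nu)$ be the maximiser closest to $\nu^{(0)}$, so that $\|\nu^{(0)}-\nu^*\|=R$ and $Q(\nu^*)=f_1^*$ by strong duality. Write $g^{(k)}$ for the supergradient used at step $k$, with components $g^{(k)}_{j\to i}=\mu_i^{(k)}-\Tr_j(\mu^{(k)}_{i,j})$. Expanding the squared distance in the Hilbert--Schmidt norm gives
\begin{equation*}
\|\nu^{(k+1)}-\nu^*\|^2=\|\nu^{(k)}-\nu^*\|^2+2\alpha\langle g^{(k)},\nu^{(k)}-\nu^*\rangle+\alpha^2\|g^{(k)}\|^2 .
\end{equation*}
Concavity gives $\langle g^{(k)},\nu^{(k)}-\nu^*\rangle\le Q(\nu^{(k)})-f_1^*$. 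Summing over $k=0,\dots,T-1$ and using $\|\nu^{(T)}-\nu^*\|^2\ge 0$ yields
\begin{equation*}
2\alpha\sum_k\bigl(f_1^*-Q(\nu^{(k)})\bigr)\le R^2+\alpha^2\sum_k\|g^{(k)}\|^2 .
\end{equation*}
Replacing each gap by the smallest one, $f_1^*-f_{\mathrm{subgrad}}$, then gives
\begin{equation*}
f_1^*-f_{\mathrm{subgrad}}\le \frac{R^2}{2\alpha T}+\frac{\alpha}{2T}\sum_k\|g^{(k)}\|^2 .
\end{equation*}

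The key step is a uniform bound on $\|g^{(k)}\|^2$. Each component is a difference of two density matrices: $\mu_i^{(k)}$ is a rank-one projector, and $\Tr_j(\mu^{(k)}_{i,j})$ is a state. Hence $\|g_{j\to i}\|_2^2=\Tr(\mu_i^2)+\Tr(\sigma^2)-2\Tr(\mu_i\sigma)\le 2$, and a little more care should give a sharper constant.

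Summing over directed edges gives $\|g\|^2\le 2\cdot 2|E|\le 2\Delta_G n$, using $2|E|=\sum_i|N(i)|\le \Delta_G n$. So $\|g\|^2\le 2n\Delta_G$. Substituting $\alpha=c\epsilon$ gives
\begin{equation*}
\frac{f_1^*-f_{\mathrm{subgrad}}}{n}\le \frac{R^2}{2c\epsilon nT}+c\epsilon\Delta_G .
\end{equation*}
Requiring the right-hand side to be at most $\epsilon$ forces $c\Delta_G<1$ and
\begin{equation*}
T\ge \frac{R^2}{2c(1-c\Delta_G)n\epsilon^2},
\end{equation*}
which is exactly the stated threshold. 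I expect the main obstacle to be pinning down the constant in $\|g\|^2\le \kappa\,n\Delta_G$ so that it matches the factor $(1-c\Delta_G)$ in the statement, which requires $\kappa=2$ overall. If the crude bound $\|g_{j\to i}\|^2\le 2$ overshoots after summation, I would use the finer identity above, or the fact that $\Tr(\rho^2)\le1$ for both terms, and recount per vertex rather than per edge.

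For the per-iteration cost, each site requires forming $\widehat H_i$, which is a sum of at most $|N(i)|$ matrices of size $d\times d$, and diagonalising it in $\mathcal O(d^3)$. Each edge requires forming and diagonalising a $d^2\times d^2$ matrix in $\mathcal O(d^6)$, plus a partial trace in $\mathcal O(d^4)$. Evaluating $q^{(k)}$ reuses these eigenvalues, and the message updates cost $\mathcal O(d^2)$ per directed edge. The total is therefore $\mathcal O(nd^3+|E|d^6)$.
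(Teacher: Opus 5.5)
Your proposal is correct and follows the paper's proof essentially step for step. Both use the constant-step analysis of \cite{boyd2003subgradient}, and both bound each component by $\|g_{j\to i}\|_F^2\le 2$ using $\Tr\rho^2,\Tr\sigma^2\le 1$ and $\Tr(\rho\sigma)\ge 0$. Summing over the $2|E|\le n\Delta_G$ directed edges gives $\|g\|_F^2\le 2n\Delta_G$, and solving $R^2/(2c\epsilon nT)+c\epsilon\Delta_G\le\epsilon$ for $T$ gives the threshold. Your worry about the constant is unfounded: the crude bound already yields exactly $\|g\|_F^2\le 2n\Delta_G$, which produces the factor $(1-c\Delta_G)$, so no sharper estimate is needed.
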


Notice that the above convergence bound depends on $R \coloneqq \mathrm{dist} \left( \nu^{(0)}, \arg \max_\nu Q \left (\nu \right) \right)$, the distance between the initial Lagrange multipliers used by \cref{alg:subgradient} and the dual optimal set. In general, this is a difficult value to bound, as it requires a priori estimates on the norm of the optimal multipliers. In our numerical experiments (see \cref{fig:multiplier_independence}), however, we have observed that the Frobenius norms of individual optimal multipliers show little dependence on system size $n$ over the instances studied. We therefore assume that the individual optimal multiplier norms to be uniformly $\mathcal O(1)$, and obtain $R = \mathcal O(\sqrt{n})$ for bounded-degree graphs with zero initialisation.

For bounded-degree graphs with fixed on-site Hilbert-space dimension $d$, we also have $\Delta_G = \mathcal O(1)$ and $|E| = \mathcal O(n)$. Each iteration costs $\mathcal O(n)$ as $d$ is treated as a constant. \cref{thm:convergence_subgradient_complete} then gives the following corollary.

\begin{restatable}{corollary}{corollarysubgradient}
    For bounded-degree graphs and $\mathcal O (1)$ optimal message norms, the target error density $\epsilon$ can be achieved in $\mathcal O \left (1 / \epsilon^2 \right)$ subgradient iterations, independently of the system size $n$, and the total arithmetic complexity scales as $\mathcal O \left (n / \epsilon^2 \right)$.
\end{restatable}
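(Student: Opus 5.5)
This corollary follows from \cref{thm:convergence_subgradient_complete} by bounding each quantity in the iteration bound under the stated assumptions, and then multiplying by the per-iteration cost.

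\emph{Step 1: bound $R$.} Initialise at $\nu^{(0)} = \mathbf{0}$ and pick any dual optimum $\nu^*$. The dual variable $\nu$ is a direct sum of $2|E|$ Hermitian $d\times d$ messages $\nu_{j\to i}$, with the Frobenius inner product on the direct sum. Hence
\begin{equation}
R^2 \le \|\nu^*\|^2 = \sum_{(i,j)\in E}\left(\|\nu^*_{j\to i}\|_F^2 + \|\nu^*_{i\to j}\|_F^2\right) \le 2|E|\,M^2,
\end{equation}
where $M=\mathcal{O}(1)$ is the assumed uniform bound on the optimal message norms. For a bounded-degree graph, $|E|\le \Delta_G n/2$, so $R^2 = \mathcal{O}(n)$.

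\emph{Step 2: bound the iteration count.} Fix a constant $c\in(0,1/\Delta_G)$, for instance $c=1/(2\Delta_G)$. Since $\Delta_G=\mathcal{O}(1)$, the factor $c(1-c\Delta_G)$ is a positive constant independent of $n$ and $\epsilon$. \cref{thm:convergence_subgradient_complete} then guarantees error density at most $\epsilon$ after
\begin{equation}
T = \left\lceil \frac{R^2}{2c(1-c\Delta_G)\,n\,\epsilon^2}\right\rceil = \left\lceil \frac{\mathcal{O}(n)}{n\,\epsilon^2}\right\rceil = \mathcal{O}(1/\epsilon^2)
\end{equation}
iterations, with no dependence on $n$. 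The factor $n$ in $R^2$ exactly cancels the $1/n$ coming from measuring error per site.

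\emph{Step 3: total cost.} By \cref{thm:convergence_subgradient_complete}, each iteration costs $\mathcal{O}(nd^3+|E|d^6)$ operations. With $d$ fixed and $|E|=\mathcal{O}(n)$, this is $\mathcal{O}(n)$ per iteration. Multiplying by the iteration count gives total complexity $\mathcal{O}(n/\epsilon^2)$.

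There is no real obstacle here; the argument is bookkeeping. The only point that needs care is Step 1: the uniform $\mathcal{O}(1)$ bound must hold for every optimal message of a single dual optimum, so that the squared norms sum to $\mathcal{O}(|E|)$. It is this sum, rather than the norm of any one message, that produces the cancellation in Step 2.
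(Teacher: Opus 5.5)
Your proposal is correct and is essentially the paper's argument. Zero initialisation plus $\mathcal{O}(1)$ optimal message norms give $R^2=\mathcal{O}(|E|)=\mathcal{O}(n)$, which cancels the $1/n$ in the iteration bound of \cref{thm:convergence_subgradient_complete}; multiplying the resulting $\mathcal{O}(1/\epsilon^2)$ iterations by the $\mathcal{O}(n)$ per-iteration cost then gives $\mathcal{O}(n/\epsilon^2)$.

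One small wording fix in Step 1 concerns the phrase ``any dual optimum''. $Q$ is invariant under $\nu_{j\to i}\mapsto\nu_{j\to i}+c\,\mathbf{1}$, since the edge term drops by $c$ and the node term rises by $c$, so the optimal set is unbounded. You should therefore take a particular optimum that satisfies the assumed bound, which is what your closing remark already implicitly does.
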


\subsubsection{Accelerated smoothing method}

The subgradient descent method described in the previous section incurs the penalty of a $\mathcal O \left(\epsilon^{-2} \right)$ dependency. This is not a unique problem, as standard naive subgradient methods in classical nonsmooth convex optimisation have the same $\mathcal O \left (\epsilon^{-2} \right)$ worst-case iteration bound. To improve this quadratic error scaling, we adopt the two-step approach optimisation method of \cite{jojic2010accelerated} from the classical literature, where we first smooth the dual function so that there exists a unique and well-behaved gradient for each of our local dual objective, before performing gradient descent with a Nesterov-type accelerated variation introduced in \cite{nesterov1983method}. We show that by combining the two approaches and splitting the total error budget between the approximation error as the result of smoothing, which can be shown to be bounded, and the optimisation error, we can achieve an improved error-density dependency of $\mathcal O \left (\epsilon^{-1} \right)$. %, with \todo{moderate caveats which we will discuss later} in subsequent sections.

The subsequent construction follows from the original dual function $Q \left (\nu \right)$ as stated in \cref{eqn:dual_function_non-smooth}. We begin by adding a strongly concave term to each local minimum eigenvalue problem $\lambda_{\mathrm{min}} \left (\cdot \right)$, the von Neumann entropy $S(\rho) = - \Tr [\rho \log \rho]$ scaled by inverse temperature $\beta$. Recall that the von Neumann entropy $S(\rho)$ is $1-$strongly concave with respect to the trace norm, and $2-$strongly concave with respect to the Frobenius norm by the quantum Pinsker's inequality. We define the \emph{soft} local minimum eigenvalue function as \begin{align}
    F_\beta \left (H \right) 
    &\coloneqq \min_{\rho \succeq 0, \Tr[\rho]=1} \left \{ \Tr \left [\rho H \right] - \frac{1}{\beta} S \left (\rho \right) \right\} \\
    &= - \frac{1}{\beta} \log \Tr \left[ e^{- \beta H} \right],
\end{align}
\noindent 
which proves to be equivalent to the equilibrium free energy of local Hamiltonians at inverse temperature $\beta$. Effectively, we are transforming the local ground state energy problem $\lambda_{\min} \left (\cdot \right)$ into its finite-temperature equilibrium free energy approximation $F_\beta \left (\cdot \right)$, where the inverse temperature $\beta$ acts as both the control of the approximation error and the \emph{smoothing parameter} of the local dual objectives that we can tune.

% \todo{Mention its well-behaved gradient and its Lipschitz constant here? $\| \nabla F_\beta(K) - \nabla F_\beta(K') \|_F \le \frac{\beta}{m_F} \| K - K' \|_F.$ To justify the term ``smoothing''.}

The smoothed dual function is then 
\begin{equation}
    Q_\beta \left (\nu \right) = \sum_{(i,j) \in E} F_\beta \left( \widehat H_{i,j} \left (\nu \right) \right) 
    + \sum_{i \in V} F_\beta \left( \widehat H_{i} \left(\nu \right) \right),
\end{equation}
\noindent
where $\widehat H_{i,j}\left (\nu \right) = h_{i,j} - \nu_{j \to i} \otimes \id_j - \id_i \otimes \nu_{i \to j}$, $\widehat H_{i} \left(\nu \right) = h_{i} + \sum_{j \in \mathcal N(i)} \nu_{j \to i}$ are the effective edge and node Hamiltonians, respectively. Replacing the original dual function with its smoothed counterpart introduces approximation error, but we now show that it can be bounded in terms of $\beta$, the interaction structure, and the local Hilbert space dimensions, as stated in the following lemma. 

% \addRicardo{I believe here is a good place to put Lemma 1, what do you think? We can connect it to the previous with something like:} Replacing the original dual for a smoothed one introduces error, yet it can be bounded as follows.

\begin{restatable}[Bound on smoothing error]{lemma}{smootherror}
\label{lem:bound_on_smoothing_error_2_local}
For any feasible $\nu$, 
    \begin{align}
        Q_\beta \left(\nu \right) \leq Q \left (\nu \right) \leq Q_\beta \left (\nu \right) + \frac{\Gamma_G}{\beta},
    \end{align} 
    \noindent
    where $\Gamma_G \coloneqq \sum_{(i,j) \in E} \log (d_i d_j) + \sum_{i \in V} \log (d_i)$ and $d_i = \dim(\mathcal H_i)$ is the dimension of the local Hilbert space at site $i$.
\end{restatable}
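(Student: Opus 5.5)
The plan is to prove a termwise bound for each local free-energy function and then sum over the edges and nodes. Both $Q$ and $Q_\beta$ are sums over $(i,j)\in E$ and $i\in V$ of, respectively, $\lambda_{\min}(\widehat H)$ and $F_\beta(\widehat H)$, with the same effective Hamiltonians $\widehat H_{i,j}(\nu)$ and $\widehat H_i(\nu)$. It therefore suffices to show that for any Hermitian $H$ acting on a space of dimension $D$,
\begin{equation}
    F_\beta(H) \le \lambda_{\min}(H) \le F_\beta(H) + \frac{\log D}{\beta}.
\end{equation}
Summing this with $D = d_i d_j$ for edge terms and $D = d_i$ for node terms then gives $Q_\beta(\nu)\le Q(\nu)\le Q_\beta(\nu)+\Gamma_G/\beta$. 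Note that the bound needs no feasibility condition on $\nu$, since every $\widehat H$ is Hermitian for any Hermitian multipliers.

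For the local inequality, I will use the variational formula $F_\beta(H)=\min_\rho\{\Tr[\rho H]-S(\rho)/\beta\}$ together with the entropy bounds $0\le S(\rho)\le\log D$.

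For the lower bound on $\lambda_{\min}$, plug the ground-state projector $\rho=\ket{\psi_0}\!\bra{\psi_0}$ into the variational formula. It has $S=0$, so $F_\beta(H)\le \lambda_{\min}(H)$.

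For the upper bound on $\lambda_{\min}$, observe that for every state $\rho$ we have $\Tr[\rho H]\ge\lambda_{\min}(H)$ and $S(\rho)\le\log D$. Hence every candidate value in the minimisation is at least $\lambda_{\min}(H)-\log D/\beta$, so $F_\beta(H)\ge\lambda_{\min}(H)-\log D/\beta$. Alternatively, this follows directly from the closed form: $\Tr e^{-\beta H}\le D\,e^{-\beta\lambda_{\min}}$, and taking $-\tfrac1\beta\log$ of both sides gives the same bound. The matching lower estimate $\Tr e^{-\beta H}\ge e^{-\beta\lambda_{\min}}$ reproduces the first inequality.

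There is no real obstacle here. The only points needing care are stating the maximal entropy $\log D$ with the correct local dimension for edge terms ($d_id_j$) versus node terms ($d_i$), and noting that the inequalities are uniform in $\nu$.
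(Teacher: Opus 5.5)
Your proposal is correct and follows the paper's proof: both reduce to the termwise bound $F_\beta(K)\le\lambda_{\min}(K)\le F_\beta(K)+\frac{1}{\beta}\log d$ and then sum it, using dimension $d_id_j$ for edge terms and $d_i$ for node terms. The paper proves the termwise bound from the spectral form $F_\beta(K)=\lambda_1-\frac{1}{\beta}\log\sum_k e^{-\beta(\lambda_k-\lambda_1)}$, noting that the sum lies in $[1,d]$; this is exactly your alternative closed-form argument, and your variational route via $0\le S(\rho)\le\log D$ is an equivalent rephrasing.
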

\noindent
The proof of this lemma is listed in \cref{appendix:section_qmp}.

With a now concave and smooth dual function $Q_\beta \left(\nu \right)$, the gradients are unique and well-defined. For the soft local minimum eigenvalue function $F_\beta \left (H \right)$, the gradient with respect to $H$ is
\begin{align}
    \nabla_H F_\beta \left (H \right) &= \rho_\beta \left (H \right) = \frac{e^{- \beta H}}{\Tr \left[e^{- \beta H} \right]}.
\end{align} 
\noindent
Therefore, the original local ground state subgradients are replaced by Gibbs states at inverse temperature $\beta$. The $j \to i$ component of the gradient of $Q_\beta \left (\nu \right)$ is thus
\begin{equation}
    \nabla_{\nu_{j \to i}} Q_\beta \left(\nu \right) = {\rho_\beta}_{i}(\nu) - \Tr_j \left[ {\rho_\beta}_{ij} \left (\nu \right) \right],
\end{equation} 
\noindent
where \begin{align}
    {\rho_\beta}_{ij} \left (\nu \right) = \frac{ \exp (- \beta \widehat H_{i,j}(\nu))}{\Tr \left[ \exp(- \beta \widehat H_{i,j}(\nu)) \right]}, \; 
    {\rho_\beta}_{i}\left (\nu \right) = \frac{\exp(- \beta \widehat H_{i}(\nu)) }{\Tr \left[ \exp (- \beta \widehat H_{i}(\nu)) \right]}
\end{align}
\noindent
are the Gibbs states of the corresponding effective edge and node Hamiltonians.

For the second stage of our two-step adaptation of \cite{jojic2010accelerated}'s method, we perform a Nesterov-type accelerated gradient descent method, instead of a naive gradient descent, to maximise our now-smoothed dual function $Q_\beta \left (\nu \right)$. Introduced in \cite{nesterov1983method}, Nesterov's  accelerated gradient descent is a provably-optimal first-order method that can yield a $\epsilon$-solution to a differentiable and convex function $f$ in $\mathcal O (\sqrt{L/\epsilon})$ number of iterations, where $L$ is the Lipschitz constant of $\nabla f$. For our setting, we let $f = - Q_{\beta} (\nu)$ and split the error-density budget $\epsilon$ equally between the smoothing approximation error and the optimisation error. We choose $\beta = 2 \Gamma_G / \left (n \epsilon \right)$ and $\widehat L_\beta = \beta \Lambda_G/2$, where $\Lambda_G \coloneqq \Delta_G + d$ with $\Delta_G \coloneqq \max_{i \in V} | N(i) |$ and $d$ the dimension of the local Hilbert space at each site, and run Nesterov's method for $\mathcal O \left(\sqrt{2 \widehat L_\beta / \left (n \epsilon \right )} \right)$ number of iterations. The method is summarised in \cref{alg:accelerated-smoothed-quantum-dual}. 

The convergence guarantee of the method is stated in the following theorem, which we prove in \cref{appendix:section_qmp}.

\begin{restatable}[Convergence guarantee for accelerated smoothing]{theorem}{convergencesmoothedgradient}\label{thm:convergence_smoothed_gradient_complete}
    Given a target error density $\epsilon$, 
    \cref{alg:accelerated-smoothed-quantum-dual} 
    with inputs 
    $\beta = 2 \Gamma_G/ \left ( n \epsilon \right)$ 
    % $\beta = \dfrac{2 \Gamma_G}{n \epsilon_{\mathrm{den}}}$ 
    and 
    $\widehat L_\beta = \beta \Lambda_G/2$ 
    % $\widehat L_\beta = \dfrac{\beta \Lambda_G}{2}$ 
    produces an objective value $f_{\mathrm{grad}} \coloneqq Q_\beta \left ( \nu^{(T)} \right)$ 
    such that $ \dfrac{f^*_1 - f_{\mathrm{grad}}}{n} \leq \epsilon$ 
    after $T$ iterations, for any integer
    $T \ge \left \lceil \dfrac{2 R_\beta \sqrt{\Gamma_G \left(\Delta_G + d \right) }}{n \epsilon } \right\rceil - 1$, 
    where $R_\beta \coloneqq \mathrm{dist} \left( \nu^{(0)}, \arg \max_\nu Q_\beta \left (\nu \right) \right)$, 
    $\Gamma_G \coloneqq \sum_{(i,j)\in E} \log(d^2) + \sum_{i \in V} \log (d)$,
    $\Delta_G \coloneq \max_{i \in V} | N(i) |$ and $d$ is the uniform on-site dimension.
    Each iteration in \cref{alg:accelerated-smoothed-quantum-dual}
    requires $\mathcal O \left (nd^3 + |E|d^6 \right)$ arithmetic operations.
\end{restatable}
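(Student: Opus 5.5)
The plan is to combine the smoothing bound of \cref{lem:bound_on_smoothing_error_2_local} with the standard convergence rate of Nesterov's accelerated method for $L$-smooth concave objectives. Splitting the error budget, we write
\begin{equation*}
    f_1^* - Q_\beta(\nu^{(T)}) = \bigl(f_1^* - \max_\nu Q_\beta(\nu)\bigr) + \bigl(\max_\nu Q_\beta(\nu) - Q_\beta(\nu^{(T)})\bigr).
\end{equation*}
By strong duality (Slater's condition, verified earlier), $f_1^* = \max_\nu Q(\nu)$. By \cref{lem:bound_on_smoothing_error_2_local}, $Q \le Q_\beta + \Gamma_G/\beta$ pointwise, so $\max Q \le \max Q_\beta + \Gamma_G/\beta$. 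With $\beta = 2\Gamma_G/(n\epsilon)$ the first term is therefore at most $n\epsilon/2$. The lower bound $Q_\beta \le Q$ also gives $f_{\mathrm{grad}} \le f_1^*$, so the quantity we control is nonnegative.

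For the second term we need the Lipschitz constant of $\nabla Q_\beta$ with respect to the Frobenius norm on the stacked multipliers. First, since $S$ is $2$-strongly concave in Frobenius norm, conjugate duality gives that $F_\beta$ has a $\beta/2$-Lipschitz gradient $H\mapsto\rho_\beta(H)$. Second, $Q_\beta = \sum F_\beta\circ A_k$, where each $A_k$ is the affine map $\nu \mapsto \widehat H_{i,j}(\nu)$ or $\nu\mapsto \widehat H_i(\nu)$. The gradient Lipschitz constant is then bounded by $\tfrac{\beta}{2}\|\sum_k A_k^* A_k\|$. We then bound the operator norm of $\sum_k A_k^*A_k$ on the message space. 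For a node term, $\|\sum_{j\in N(i)}\nu_{j\to i}\|_F^2 \le |N(i)|\sum_j\|\nu_{j\to i}\|_F^2$, and each message appears in exactly one node term, which contributes at most $\Delta_G$. For an edge term, $\|\nu_{j\to i}\otimes\id + \id\otimes\nu_{i\to j}\|_F^2$ carries a factor $d$ from $\|\nu\otimes\id_d\|_F^2 = d\|\nu\|_F^2$. Each message appears in exactly one edge term, and the cross term is handled by $\langle \nu\otimes\id,\id\otimes\nu'\rangle = \Tr\nu\,\Tr\nu'$ together with Cauchy–Schwarz. Assembling these gives $\|\sum_k A_k^*A_k\|\le \Delta_G + d = \Lambda_G$ (possibly up to a constant factor), hence $L_\beta \le \beta\Lambda_G/2 = \widehat L_\beta$.

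This constant-tracking step is the main obstacle. The edge cross terms and the exact factor of $d$ must be checked carefully. If the naive bound gives $2d$, I would tighten it by noting that the trace parts of messages are gauge directions, and restrict to or project out those directions.

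The standard Nesterov guarantee with step $1/\widehat L_\beta$ then gives
\begin{equation*}
    \max Q_\beta - Q_\beta(\nu^{(T)}) \le \frac{2\widehat L_\beta R_\beta^2}{(T+1)^2}.
\end{equation*}
Requiring this to be at most $n\epsilon/2$ and substituting $\widehat L_\beta = \Gamma_G\Lambda_G/(n\epsilon)$ gives $(T+1)^2 \ge 4\Gamma_G\Lambda_G R_\beta^2/(n\epsilon)^2$, which is exactly the stated bound on $T$. Dividing by $n$ yields $(f_1^*-f_{\mathrm{grad}})/n\le\epsilon$.

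Finally, for the per-iteration cost, each iteration computes one Gibbs state and one partial trace per edge and one Gibbs state per node, then performs $\mathcal O(1)$ vector operations on the messages. The edge Gibbs states require a matrix exponential or eigendecomposition of a $d^2\times d^2$ matrix, costing $\mathcal O(d^6)$ each; the node terms cost $\mathcal O(d^3)$ each. The total is $\mathcal O(nd^3+|E|d^6)$.
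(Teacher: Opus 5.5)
Your proposal is correct and follows essentially the same route as the paper. It splits the error into the smoothing gap $\Gamma_G/\beta$ and the Nesterov/Tseng bound $2\widehat L_\beta R_\beta^2/(T+1)^2$, with $\widehat L_\beta$ obtained from the $\beta/2$-Lipschitz Gibbs map composed with the message-to-Hamiltonian map $\mathcal B$, where $\|\mathcal B\|^2\le\Delta_G+d$. The paper resolves your $2d$-versus-$d$ concern exactly as your fallback suggests: it works on the space of traceless multipliers, where the cross term $\Tr\nu\,\Tr\nu'$ vanishes, and this is consistent with the algorithm because the iterates start at zero and every gradient component $\rho_i-\Tr_j\rho_{ij}$ has trace $1-1=0$.
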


Consider a family of bounded-degree graphs with fixed on-site Hilbert-space dimension $d$. Then $\Delta_G = \mathcal O(1)$, $|E| = \mathcal O(n)$, and $\Gamma_G = \mathcal O(n)$. Each iteration costs $\mathcal O(n)$ as $d$ is treated as a constant. Assume further that, for the same reason as with the subgradient case, the optimal Lagrange multipliers $\arg \max_\nu Q_\beta  \left (\nu \right)$ have components with $\mathcal O(1)$ norm that is independent of the system size $n$ and the smoothing parameter $\beta = 2 \Gamma_G / \left (n \epsilon \right)$ under consideration. Then $R_\beta = \mathcal O(\sqrt n)$ uniformly with zero initialisation. \cref{thm:convergence_smoothed_gradient_complete} then gives the following corollary.

\begin{restatable}{corollary}{corollarygradient}
    For bounded-degree graphs and $\mathcal O(1)$ optimal message norms, the target error density $\epsilon$ can be achieved in $\mathcal O(1 / \epsilon)$ accelerated gradient iterations, independently of the system size $n$, and the total arithmetic complexity scales as $\mathcal O(n / \epsilon)$.
\end{restatable}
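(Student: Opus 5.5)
The plan is to instantiate \cref{thm:convergence_smoothed_gradient_complete} and check that each quantity in its iteration bound has the claimed scaling under the stated assumptions. By that theorem, running \cref{alg:accelerated-smoothed-quantum-dual} with $\beta = 2\Gamma_G/(n\epsilon)$ and $\widehat L_\beta = \beta\Lambda_G/2$ for
\begin{equation*}
T = \left\lceil \frac{2R_\beta\sqrt{\Gamma_G(\Delta_G+d)}}{n\epsilon}\right\rceil - 1
\end{equation*}
iterations already guarantees $(f_1^* - f_{\mathrm{grad}})/n \le \epsilon$. It therefore suffices to show that the numerator $2R_\beta\sqrt{\Gamma_G(\Delta_G+d)}$ is $\mathcal O(n)$ with a constant independent of $n$ and $\epsilon$.

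First I bound the graph quantities. For a bounded-degree family, $\Delta_G = \mathcal O(1)$, and the handshake lemma gives $|E| \le \Delta_G n/2 = \mathcal O(n)$. With $d$ fixed, this yields
\begin{equation*}
\Gamma_G = |E|\log(d^2) + n\log d \le (\Delta_G + 1)\, n\log d = \mathcal O(n),
\end{equation*}
and also $\Delta_G + d = \mathcal O(1)$. Hence $\sqrt{\Gamma_G(\Delta_G+d)} = \mathcal O(\sqrt n)$.

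Next I bound $R_\beta$. The dual variable consists of the $2|E| \le \Delta_G n$ messages $\nu_{j\to i}$. Initialise at $\nu^{(0)} = 0$ and pick any maximiser $\nu^*$ of $Q_\beta$. Measuring distance in the Frobenius norm on the direct sum, $R_\beta^2 \le \sum_{j\to i}\|\nu^*_{j\to i}\|_F^2$. Under the assumption that each optimal message has $\mathcal O(1)$ norm, say at most $K$, this sum is at most $\Delta_G n K^2$, so $R_\beta = \mathcal O(\sqrt n)$. Combining with the previous step, the numerator is $\mathcal O(\sqrt n\cdot\sqrt n) = \mathcal O(n)$, and therefore $T = \mathcal O(1/\epsilon)$ independently of $n$.

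Finally, \cref{thm:convergence_smoothed_gradient_complete} states that each iteration costs $\mathcal O(nd^3 + |E|d^6)$ operations. With $d$ constant and $|E| = \mathcal O(n)$ this is $\mathcal O(n)$, so the total arithmetic cost is $\mathcal O(n/\epsilon)$.

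The only delicate point is the bound on $R_\beta$. The maximiser set $\arg\max_\nu Q_\beta$ depends on $\beta = 2\Gamma_G/(n\epsilon)$, which grows as $\epsilon \to 0$. So the $\mathcal O(1)$ message-norm assumption must hold uniformly in $\beta$, i.e.\ in $\epsilon$, and not merely for the unsmoothed $Q$. This uniformity is exactly what the corollary assumes, and it is what the numerics in \cref{fig:multiplier_independence} are meant to support. I would state it explicitly in the proof rather than derive it.
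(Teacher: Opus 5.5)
Your proposal is correct and follows the paper's own argument: instantiate \cref{thm:convergence_smoothed_gradient_complete} using $\Gamma_G = \mathcal O(n)$, $\Delta_G + d = \mathcal O(1)$ and $R_\beta = \mathcal O(\sqrt n)$ from zero initialisation to get $T = \mathcal O(1/\epsilon)$, then multiply by the $\mathcal O(n)$ per-iteration cost. Your remark that the $\mathcal O(1)$ bound on the optimal messages must hold uniformly in $\beta = 2\Gamma_G/(n\epsilon)$ matches the paper, which likewise states this as an explicit assumption rather than deriving it.
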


\begin{algorithm}[t]
\caption{Accelerated gradient descent for the smoothed quantum dual}
\label{alg:accelerated-smoothed-quantum-dual}
\begin{algorithmic}[1]
\INPUT A Hamiltonian $H = \sum_{i \in V} h_i + \sum_{(i,j) \in E} h_{i,j}$ supported upon on the nodes and edges of a graph $G = (V, E)$, smoothing parameter $\beta > 0$, Lipschitz bound $\widehat L_\beta$, number of iterations $T \in \mathbb N$.
\SET $\boldsymbol \nu^{(0)} \coloneqq \bigoplus_{i \in V} \bigoplus_{j \in \mathcal N(i)} \nu^{(0)}_{j \to i} = \mathbf{0}$, $\boldsymbol \zeta^{(0)} \coloneqq \bigoplus_{i \in V} \bigoplus_{j \in \mathcal N(i)} \zeta^{(0)}_{j \to i} = \mathbf{0}$ and $\theta_0 = 1 \in \mathbb R$.
\vspace{0.2cm}
\For{$k = 0, 1, \ldots, T-1$}
    \State $\boldsymbol \eta^{(k)} \gets (1-\theta_k) \boldsymbol \nu^{(k)} + \theta_k \boldsymbol \zeta^{(k)}$
    \ForAll{site $i \in V$ and edge $(ij) \in E$} \Comment{This step can be done in parallel}
        \State $\widehat H_i(\boldsymbol \eta^{(k)}) \gets h_i + \sum_{j \in \mathcal N(i)} \eta^{(k)}_{j \to i}$
        \vspace{0.2cm}
        \State ${\rho_\beta}_i^{(k)} \gets 
        \dfrac{\exp(- \beta \widehat H_i(\boldsymbol \eta^{(k)}))}
        {\Tr \left[ \exp(- \beta \widehat H_i(\boldsymbol \eta^{(k)})) \right]} \; ;$
        \vspace{0.2cm}
        \State $\widehat H_{ij}(\boldsymbol \eta^{(k)}) \gets 
        h_{ij}
        - \eta^{(k)}_{j \to i} \otimes \id_j
        - \id_i \otimes \eta^{(k)}_{i \to j}$
        \vspace{0.2cm}
        \State ${\rho_\beta}_{ij}^{(k)} \gets 
        \dfrac{\exp(- \beta \widehat H_{i,j}(\boldsymbol \eta^{(k)}))}
        {\Tr \left[ \exp(- \beta \widehat H_{i,j}(\boldsymbol \eta^{(k)})) \right]}$
    \EndFor
    \ForAll{directed edge $j \to i$} \Comment{This step can be done in parallel}
        \State
        $g_{j\to i}^{(k)}
        \gets
        {\rho_\beta}_i^{(k)} - \Tr_j \left[ {\rho_\beta}_{ij}^{(k)} \right]$
    \EndFor
    
    \State $\boldsymbol \zeta^{(k+1)} \gets \boldsymbol \zeta^{(k)} + \dfrac{1}{\theta_k \widehat L_\beta} \boldsymbol g^{(k)}$
    \State $\boldsymbol \nu^{(k+1)} \gets (1-\theta_k) \boldsymbol \nu^{(k)} + \theta_k \boldsymbol \zeta^{(k+1)}$
    \State $\theta_{k+1} \gets \dfrac{\sqrt{\theta_k^4+4\theta_k^2}-\theta_k^2}{2}$
\EndFor
\State \Return $\boldsymbol \nu^{(T)}$
\end{algorithmic}
\end{algorithm}

Thus, under mild assumptions, accelerated gradient descent with objective smoothing improves the iteration bound for achieving a target error density of $\epsilon$ from $\mathcal O(\epsilon^{-2})$ of naive subgradient descent to $\mathcal O(\epsilon^{-1})$, while maintaining the worst-case per-iteration arithmetic cost.

% \todo{Emphasise that all messages will stay traceless if initialised as traceless, which is crucial for obtaining the Lipschitz constant upper bound.}

% \todo{Insert discussion on the cost of the algorithm per iteration. Point out that, perhaps somewhat surprisingly, that per-iteration cost for subgrad and grad is the same, cubic in the dimension of the matric being sampled ground-state eigenvalue/computed Gibbs state.}

% \todo{Stress that parts of the algorithm can be computed in parallel? ``Distributed''}

\subsection{Higher-order marginals} \label{subsection:clustering}

We now discuss our algorithms applied to higher-order relaxations, in particular for any cluster family $C$. These ideas are typically referred to as clustering methods, augmented hypergraph methods, or generalised belief propagation in the classical case \cite{JW}.

The program we are trying to evaluate is the dual of \cref{eqn:cluster_relaxation_energy}, resulting in an approximation to the ground state of $f_C^*$. To construct the partial Lagrangian, for every hyperedge $e$, we introduce a Lagrange multiplier $\nu_{e \to f}$ for all $f \in \mathcal{D}(e)$. Each of these has support on $f$. The Lagrangian is then 

\begin{align*}
    L(\mu,\nu)
    &=
    \sum_{e\in E}\langle \mu_e, h_e\rangle 
    +
    \sum_{e\in E}\sum_{f\in \mathcal{D}(e)}
    \Tr\left[\left(\mu_f-\Tr_{f^c}(\mu_e) \right)\nu_{e\to f}\right] \label{eqn:main_lagrangian}
    \\
    &=
    \sum_{e\in E}\langle \mu_e, h_e\rangle 
    +
    \sum_{e\in E}\sum_{f\in \mathcal{D}(e)}
    [
    \langle \mu_f ,\nu_{e\to f}\rangle
    -
    \langle \mu_e, \id_{f^c}\otimes\nu_{e\to f}\rangle]
    \\
    &=
    \sum_{e\in E}\left\langle \mu_e, h_e
    -
    \sum_{f\in \mathcal{D}(e)}
    \id_{f^c}\otimes\nu_{e\to f}
    \right\rangle 
    +
    \sum_{e\in E}\sum_{f\in \mathcal{D}(e)}
    \langle \mu_f ,\nu_{e\to f}\rangle
    \\
    &=
    \sum_{e\in E}\left\langle \mu_e, h_e
    -
    \sum_{f\in \mathcal{D}(e)}
    \id_{f^c}\otimes\nu_{e\to f}
    \right\rangle 
    +
    \sum_{f\in E}\sum_{e\in \mathcal{A}(f)}
    \langle \mu_f ,\nu_{e\to f}\rangle
    \\
    &=
    \sum_{e\in E}\left\langle \mu_e, h_e
    -
    \sum_{f\in \mathcal{D}(e)}
    \id_{f^c}\otimes\nu_{e\to f}
    \right\rangle 
    +
    \sum_{e\in E}\sum_{g\in \mathcal{A}(e)}
    \langle \mu_e ,\nu_{g\to e}\rangle
    \\
    &=
    \sum_{e\in E}\left\langle \mu_e, h_e
    -
    \sum_{f\in \mathcal{D}(e)}
    \id_{f^c}\otimes\nu_{e\to f}
    +
    \sum_{g\in \mathcal{A}(e)}\nu_{g\to e}
    \right\rangle 
    \,.
\end{align*}
\noindent
where most equalities follow simply from relabelling and rearranging the elements of the sums, and the equality from the third to the fourth line follows since there is a term $\langle \mu_f ,\nu_{e\to f}\rangle$ for each $e\in E$ and $f\in \mathcal{D}(e)$, or equivalently for each $f\in E$ and $e\in \mathcal{A}(f)$.

Defining the local augmented Hamiltonian as
\begin{align}
    \theta_e(\{\nu_{e\to f},
    \nu_{g\to e}\})
    \coloneqq 
    h_e
    -
    \sum_{f\in \mathcal{D}(e)}
    \id_{f^c}\otimes\nu_{e\to f}
    +
    \sum_{g\in \mathcal{A}(e)}\nu_{g\to e}
    \, ,
\end{align}
\noindent
the dual function can be written as
\begin{align}
    Q_C(\nu) 
    =
    \min_{\mu} L(\mu,\nu)
    =
    \sum_{e\in E}\lambda_{\text{min}}
    (
    \theta_e(\{\nu_{e\to f},
    \nu_{g\to e}\})
    ) . 
\end{align}

Before elaborating on how this function can be solved via both subgradient and smooth methods, we remark that in practice, only the constraints between consecutive levels in the poset diagram are needed. To see why, consider the example shown in \cref{fig:poset chain 3} and observe that for edge $(1,2)$, we must enforce the constraints: $\Tr_2 (\mu_{1,2}) = \mu_1$ and $\Tr_1 (\mu_{1,2}) = \mu_2$. Then, constraints like $\Tr_{2,3} (\mu_{1,2,3}) = \mu_1$ are unnecessary as they are implied by both $\Tr_3 (\mu_{1,2,3}) = \mu_{1,2}$ and $\Tr_2 (\mu_{1,2}) = \mu_1$.

\subsubsection{Subgradient method}

For any $e \in E$ where $f \in \mathcal{D}(e)$, the subgradient is
\begin{align}
    \partial_{\nu_{e\to f}} Q_C
    \ni 
    \mu_f^* - \Tr_{f^c}\mu_e^* \, , 
\end{align}
where $\mu_e^*$ is a ground state of $\theta_e(\{\nu_{e\to f},
    \nu_{g\to e}\})$. Thus, the update rule is
\begin{align}
    \nu_{e\to f}
    \leftarrow 
    \nu_{e\to f}
    +
    \alpha 
    (\mu^*_f-\Tr_{f^c} \mu^*_{e})
    \,.
\end{align}

The convergence result is summarised in the following theorem.

\begin{restatable}[Convergence guarantee for higher-order subgradient descent]{theorem}{convergencesubgradienthigherorder}\label{thm:convergence_subgradient_complete_higherorder}
    Given a target error density $\epsilon$, 
    the higher-order variant of \cref{alg:subgradient} 
    with the input step size $\alpha = c \epsilon$, where $0 < c < 1/\kappa_G$ is a constant independent of both $\epsilon$ and $n$
    and $\kappa_G \coloneqq \max_{i \in V} \left | \left\{ e \to f: e \in E, f \in \mathcal D(e), i \in f \right\} \right|$, 
    produces an objective value $f_{\mathrm{subgrad}} \coloneqq \max_{0 \le t < T} Q \left ( \nu^{(t)} \right)$ 
    such that $\dfrac{f^*_1 - f_{\mathrm{subgrad}}}{n} \leq \epsilon$ 
    after $T$ iterations, for any integer $T \ge \left \lceil \dfrac{R_C^2}{2 c \left (1 - c \kappa_G \right) n \epsilon^2} \right \rceil$, 
    where $R_C \coloneqq \mathrm{dist} \left(\nu^{(0)}, \arg \max_\nu Q_C \left( \nu \right) \right)$. 
    Each iteration in the algorithm
    requires $\mathcal O \left ( \sum_{e \in E} D_e^3 \right)$ arithmetic operations, where $D_e = \dim (\mathcal H_e)$ for $e \in E$.
\end{restatable}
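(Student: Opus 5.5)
The argument is the standard constant-step subgradient analysis of \cite{boyd2003subgradient}, adapted to the cluster dual $Q_C$. The only new ingredient is a bound on the norm of the subgradient, and that is where $\kappa_G$ comes in.

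Fix an optimal multiplier $\nu^\star \in \arg\max_\nu Q_C(\nu)$ with $\|\nu^{(0)}-\nu^\star\| = R_C$. By strong duality (Slater holds with maximally mixed marginals, exactly as in the two-local case) we have $Q_C(\nu^\star) = f^*_C$. At iteration $k$ let $g^{(k)}$ have components $g^{(k)}_{e\to f} = \mu_f^{(k)} - \Tr_{f^c}\mu_e^{(k)}$, where $\mu^{(k)}_e$ are the sampled pure ground states of the augmented Hamiltonians $\theta_e$. Since $L(\mu^{(k)},\cdot)$ is affine in $\nu$ and lies above $Q_C$ with equality at $\nu^{(k)}$, $g^{(k)}$ is a supergradient of the concave function $Q_C$.

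Expanding the update $\nu^{(k+1)} = \nu^{(k)} + \alpha g^{(k)}$ gives
\begin{equation*}
\|\nu^{(k+1)}-\nu^\star\|^2 \le \|\nu^{(k)}-\nu^\star\|^2 - 2\alpha\bigl(Q_C(\nu^\star)-Q_C(\nu^{(k)})\bigr) + \alpha^2\|g^{(k)}\|^2 .
\end{equation*}
Telescoping over $k=0,\dots,T-1$ and using $\max_k Q_C(\nu^{(k)}) \ge$ the average, we get
\begin{equation*}
f^*_C - f_{\mathrm{subgrad}} \le \frac{R_C^2}{2\alpha T} + \frac{\alpha}{2T}\sum_k \|g^{(k)}\|^2 .
\end{equation*}

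The key step, and the main obstacle, is a bound $\|g^{(k)}\|^2 \le 2\kappa_G n$ that is uniform in $\nu$; the constants need care to match the statement. I would argue as follows. Each component is a difference of two density matrices on $f$, so $\|g_{e\to f}\|_F^2 \le \|\mu_f\|_F^2 + \|\Tr_{f^c}\mu_e\|_F^2 \le 2$. This uses $\langle \rho,\sigma\rangle \ge 0$ for positive semidefinite $\rho,\sigma$, together with purity at most one. Summing over all messages $e\to f$ gives $2\,|\{e\to f\}|$. The number of messages is at most $\sum_{i\in V} |\{e\to f : i\in f\}| \le \kappa_G n$, because every $f$ contains at least one vertex. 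Hence $\|g^{(k)}\|^2 \le 2\kappa_G n$, the analogue of $\Delta_G$ in the two-local proof.

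Plugging in $\alpha = c\epsilon$ and requiring the right-hand side to be at most $n\epsilon$, the error term contributes $c\kappa_G n\epsilon$. The condition then becomes $R_C^2/(2c\epsilon T) \le (1-c\kappa_G)n\epsilon$, which is exactly $T \ge R_C^2/\bigl(2c(1-c\kappa_G)n\epsilon^2\bigr)$, with $c<1/\kappa_G$ ensuring positivity. The statement writes $f^*_1$ where the relevant optimum is $f^*_C$; I would prove it for $f^*_C = \max Q_C$.

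For the per-iteration cost, each $\theta_e$ is a $D_e\times D_e$ Hermitian matrix. Forming it and diagonalizing it to extract a ground state costs $\mathcal O(D_e^3)$. The partial traces and message updates cost at most $\mathcal O(D_e^2)$ per message, and the number of messages per $e$ is bounded in terms of the local structure. Summing gives $\mathcal O\bigl(\sum_e D_e^3\bigr)$.
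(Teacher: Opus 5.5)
Your proposal is correct and takes the paper's route: the paper only states that this theorem follows like its two-local proof (Boyd's constant-step bound, each component a difference of density matrices with $\|g_{e\to f}\|_F^2\le 2$, and a message count in which $n\Delta_G$ becomes $n\kappa_G$), and that adaptation is exactly what you carry out. Reading $f_1^*$ as $f_C^*=\max_\nu Q_C$ is the intended interpretation, and for the cost claim you need no bounded-incidence assumption: for local dimensions at least $2$ one has $|\mathcal D(e)|\le 2^{|e|}\le D_e$ and $\sum_e|\mathcal A(e)|D_e^2\le\sum_g|\mathcal D(g)|D_g^2$, so all message bookkeeping stays within $\mathcal O\bigl(\sum_e D_e^3\bigr)$.
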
 
\noindent
Its proof technique follows similarly to that of \cref{thm:convergence_subgradient_complete}.

Consider a family of bounded-degree hypergraphs with uniform on-site Hilbert-space dimension $d$, bounded maximum cluster size $k$, and bounded local incidence. It follows that $\kappa_G = \mathcal O(1)$, $|E| = \mathcal O (n)$, and $\sum_{e \in E} |\mathcal D(e)| = \mathcal O(n)$. Assume further that $R_C = \mathcal O(\sqrt{N})$. \cref{thm:convergence_subgradient_complete_higherorder} then gives the following corollary.

\begin{restatable}{corollary}{corollarysubgradienccccct}
    For bounded-degree hypergraphs, bounded maximum cluster size and $\mathcal O (1)$ optimal message norms, the target error density $\epsilon$ can be achieved in $\mathcal O \left (1 / \epsilon^2 \right)$ subgradient iterations, independently of the system size $n$, and the total arithmetic complexity scales as $\mathcal O \left (n / \epsilon^2 \right)$.
\end{restatable}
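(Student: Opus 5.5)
The proof is a corollary of \cref{thm:convergence_subgradient_complete_higherorder}, in the same way the two-local corollary follows from \cref{thm:convergence_subgradient_complete}. The plan is to (i) bound the combinatorial constants $\kappa_G$ and $\sum_e |\mathcal D(e)|$ under the bounded-degree and bounded-cluster-size hypotheses, (ii) bound $R_C$ using the $\mathcal O(1)$ optimal-message-norm assumption, and (iii) substitute both into the iteration bound and the per-iteration cost.

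\emph{Step (i): combinatorial constants.} Suppose every cluster has size at most $k$ and every site lies in at most $r=\mathcal O(1)$ clusters. This is what bounded degree and bounded local incidence give.
\begin{itemize}
\item Each cluster $e$ has at most $2^k$ descendants.
\item A site $i$ lies in at most $r$ descendants $f$, and each such $f$ has at most $r$ ancestors $e$.
\item Hence $\kappa_G\le r^2=\mathcal O(1)$.
\item Since each site is in at most $r$ clusters, $|E|\le rn=\mathcal O(n)$, and the number of messages is $\sum_e|\mathcal D(e)|\le 2^k|E|=\mathcal O(n)$.
\end{itemize}
We may therefore fix a constant $c\in(0,1/\kappa_G)$, say $c=1/(2\kappa_G)$. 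Then $c(1-c\kappa_G)=1/(4\kappa_G)$ is a positive constant independent of $n$ and $\epsilon$.

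\emph{Step (ii): bounding $R_C$.} Initialise at $\nu^{(0)}=0$ and pick some optimal $\nu^*$. By assumption each of its components satisfies $\|\nu^*_{e\to f}\|_F\le M=\mathcal O(1)$. Then
\begin{equation*}
R_C^2\le\sum_{e}\sum_{f\in\mathcal D(e)}\|\nu^*_{e\to f}\|_F^2\le M^2\cdot \mathcal O(n),
\end{equation*}
so $R_C=\mathcal O(\sqrt n)$. This is the assumption $R_C=\mathcal O(\sqrt N)$ with $N$ read as $n$.

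\emph{Step (iii): conclusion.} Substituting into \cref{thm:convergence_subgradient_complete_higherorder} gives
\begin{equation*}
T=\left\lceil \frac{R_C^2}{2c(1-c\kappa_G)n\epsilon^2}\right\rceil=\mathcal O\!\left(\frac{n}{n\epsilon^2}\right)=\mathcal O(1/\epsilon^2)
\end{equation*}
iterations, with no dependence on $n$. Each iteration costs $\mathcal O(\sum_e D_e^3)$ operations. Since $D_e\le d^k=\mathcal O(1)$ and $|E|=\mathcal O(n)$, this is $\mathcal O(n)$. The message updates also take $\mathcal O(n)$ work, as each involves a partial trace of an $\mathcal O(1)$-size matrix over $\mathcal O(n)$ messages. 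The total arithmetic complexity is therefore $\mathcal O(n/\epsilon^2)$.

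\emph{Main obstacle.} The only nontrivial step is making "bounded local incidence" precise enough to get $\kappa_G=\mathcal O(1)$ and $\mathcal O(n)$ messages. I would state it explicitly as each site belonging to $\mathcal O(1)$ clusters. The remaining ingredient, the $\mathcal O(1)$ message-norm bound, is taken as a hypothesis rather than proved.
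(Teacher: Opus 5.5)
Your proposal is correct and follows the paper's route. The paper likewise obtains this corollary by substituting $\kappa_G=\mathcal O(1)$, $|E|=\mathcal O(n)$, $\sum_{e}|\mathcal D(e)|=\mathcal O(n)$ and $R_C=\mathcal O(\sqrt n)$ into the higher-order subgradient theorem, together with the per-iteration cost $\mathcal O(\sum_e D_e^3)=\mathcal O(n d^{3k})$. Your version only adds what the paper asserts or assumes without argument: the explicit counting (each site lies in $\mathcal O(1)$ hyperedges) and the derivation of $R_C=\mathcal O(\sqrt n)$ from the per-message norm bound.
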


In particular, for uniform on-site dimension $d$, maximum cluster
size $k$, and a bounded number of inclusions per cluster, the per-iteration cost of the algorithm becomes
\[
\mathcal O \left( n (d^{k})^3\right) = \mathcal O \left( n d^{3k}\right).
\]

\subsubsection{Accelerated smoothing method}

In this section, we extend the accelerated smoothing method introduced above to higher-order relaxations. As many of the main ideas carry over directly, our treatment here would be brief and focus only on the required modifications.

In order to smoothen our objective, we replace the minimum eigenvalue of the local augmented Hamiltonians  in the dual function $Q_C \left (\nu \right)$ with its free energy approximation, resulting in the now-smoothed dual function \begin{align}
    Q_{C, \beta} (\nu)
    =
    \sum_{e\in E} F_\beta
    \left(
    \theta_e(\{\nu_{e\to f},
    \nu_{g\to e}\})
    \right).
\end{align}
\noindent
The following lemma bounds the approximation error introduced by $Q_{C, \beta} \left (\nu \right)$.

\begin{restatable}[Bound on smoothing error]{lemma}{smootherror_higherorder}
For any feasible $\nu$, 
    \begin{align}
        Q_{C, \beta} \left (\nu \right) \leq Q_C \left (\nu \right) \leq Q_{C, \beta} \left (\nu \right) + \frac{\Gamma_G}{\beta},
    \end{align} 
    \noindent
    where $\Gamma_G \coloneqq \sum_{e \in E} \log D_e$ and $D_e = \dim (\mathcal H_e)$ is the dimension of the local Hilbert space associated with hyperedge $e$.
\end{restatable}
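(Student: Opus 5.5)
The plan is to prove the bound term by term. Both $Q_C$ and $Q_{C,\beta}$ are sums over hyperedges $e\in E$ of a scalar function applied to the same Hermitian operator $\theta_e := \theta_e(\{\nu_{e\to f},\nu_{g\to e}\})$ acting on $\mathcal H_e$. It therefore suffices to show, for every Hermitian $H$ on a space of dimension $D$,
\begin{equation*}
F_\beta(H) \le \lambda_{\min}(H) \le F_\beta(H) + \frac{\log D}{\beta}.
\end{equation*}
Summing this over $e\in E$ with $H=\theta_e$ and $D=D_e$ then gives exactly the claim, with $\Gamma_G=\sum_{e}\log D_e$. The argument is the same as for Lemma~\ref{lem:bound_on_smoothing_error_2_local}. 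Feasibility of $\nu$ plays no role beyond making each $\theta_e$ Hermitian, which holds because the messages are Hermitian.

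For the per-term inequality I would use the variational form $F_\beta(H)=\min_{\rho}\{\Tr[\rho H]-\beta^{-1}S(\rho)\}$, taken over density matrices $\rho$ on $\mathcal H_e$.

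For the lower bound $F_\beta(H)\le\lambda_{\min}(H)$, plug a ground-state projector $\rho=\ketbra{\psi}$ into the minimisation. It is pure, so $S(\rho)=0$, and the objective equals $\lambda_{\min}(H)$.

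For the upper bound, use $0\le S(\rho)\le\log D$ for every density matrix $\rho$, together with $\Tr[\rho H]\ge\lambda_{\min}(H)$. These give
\begin{equation*}
\Tr[\rho H]-\beta^{-1}S(\rho)\ge\lambda_{\min}(H)-\beta^{-1}\log D
\end{equation*}
for all $\rho$, hence $F_\beta(H)\ge\lambda_{\min}(H)-\beta^{-1}\log D$.

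As a cross-check one can also argue from the closed form $F_\beta(H)=-\beta^{-1}\log\sum_k e^{-\beta\lambda_k}$. Bounding the sum between $e^{-\beta\lambda_{\min}}$ and $D e^{-\beta\lambda_{\min}}$ gives the same two inequalities directly.

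There is no real obstacle here. The only point needing care is bookkeeping: each hyperedge $e$, including the singletons and lower-level members of the cluster family, must contribute exactly one $\log D_e$. This holds because $Q_C$ was written as a single sum over all $e\in E$ of $\lambda_{\min}(\theta_e)$, and $Q_{C,\beta}$ replaces each such term by $F_\beta(\theta_e)$ with the same index set.
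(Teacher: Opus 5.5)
Your proof is correct and follows the paper's argument: you prove $F_\beta(\theta_e)\le\lambda_{\min}(\theta_e)\le F_\beta(\theta_e)+\beta^{-1}\log D_e$ for each term and sum over $e\in E$. The paper proves this per-term bound via the closed form, bounding $\sum_k e^{-\beta(\lambda_k-\lambda_1)}$ between $1$ and $D_e$ (exactly your cross-check), and your main route through the variational form with $0\le S(\rho)\le\log D_e$ is an equivalent restatement of the same fact.
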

\noindent
The proof technique follows that of \cref{lem:bound_on_smoothing_error_2_local}.

The $e \to f$ component of the gradient of $Q_{C, \beta} (\nu)$ is then \begin{equation}
    \nabla_{\nu_{e \to f}} Q_{C, \beta} \left (\nu \right) = {\rho_\beta}_f \left (\nu \right) - \Tr_{f^c} \left[ {\rho_\beta}_e \left (\nu \right)\right],
\end{equation}
\noindent 
where \begin{align}
    {\rho_\beta}_{e} \left(\nu \right) 
    = \dfrac{\exp \left (- \beta \theta_e(\{\nu_{e\to f}, \nu_{g \to e}\}) \right)}{\Tr \left[ \exp \left (- \beta  \theta_e(\{\nu_{e\to f}, \nu_{g \to e}\}) \right) \right]}
\end{align}
\noindent
is the Gibbs state of the local augmented Hamiltonian supported on hyperedge $e$.

In order to improve the quadratic error scaling of the naive subgradient method, we again adopt Nesterov's accelerated gradient descent to maximise $Q_{C, \beta} \left (\nu \right)$. As before, we split the error-density budget $\epsilon$ equally between the smoothing approximation error and the optimisation error. We choose $\beta = 2 \Gamma_G / \left (n \epsilon \right)$ and $\widehat L_\beta = \beta \Lambda_G/2$, where $\Lambda_G \coloneqq \max_{e \to f} \left\{ m_f + m_e D_e/D_f \right\}$ with $m_e \coloneqq | \mathcal A(e)| + |\mathcal D(e)|$, and run Nesterov's method for $\mathcal O \left(\sqrt{2 \widehat L_\beta / \left (n \epsilon \right ) } \right)$ many iterations.

The convergence result is summarised in the following theorem.

\begin{restatable}[Convergence guarantee for higher-order accelerated smoothing]{theorem}{convergencesmoothedgradienthigherordermarginals}\label{thm:convergence_smoothed_gradient_higher-order_marginals}
    Given a target error density $\epsilon$, the higher-order variant of \cref{alg:accelerated-smoothed-quantum-dual} 
    with inputs 
    $\beta = 2 \Gamma_G/ \left ( n \epsilon \right)$
    % $\beta = \dfrac{ 2 \Gamma_G}{ N \epsilon_{\mathrm{den}} }$ 
    and 
    $\widehat L_\beta = \beta \Lambda_G/2$ 
    % $\widehat L_\beta = \dfrac{\beta \Lambda_G}{2}$ 
    produces an objective value $f_{\mathrm{grad}} \coloneq Q_{C, \beta} \left (\nu^{(T)} \right)$ such that 
    $ \dfrac{f_C^* - f_{\mathrm{grad}}}{n} \leq \epsilon$ after $T$ iterations, for any
    $T \ge \left \lceil \dfrac{2 R_{C, \beta} \sqrt{\Gamma_G \Lambda_G }}{n \epsilon } \right \rceil - 1$, where 
    $R_{C, \beta} \coloneq \mathrm{dist} \left( \nu^{(0)}, \arg \max_\nu Q_{C, \beta} \left (\nu \right) \right)$, 
    $\Gamma_G \coloneqq \sum_{e \in E} \log \left (D_e \right)$ 
    with $D_e = \dim \left(\mathcal H_e \right)$ and 
    $\Lambda_G \coloneqq \max_{e \to f} \left \{ m_f + m_e D_e / D_f \right\}$ with 
    $m_e \coloneqq | \mathcal A(e)| + |\mathcal D(e)|$.
    Each iteration in the algorithm
    requires $\mathcal O \left( \sum_{e \in E } D_e^3 \right)$ arithmetic operations.
\end{restatable}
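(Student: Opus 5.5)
The plan is to follow the template of the proof of \cref{thm:convergence_smoothed_gradient_complete}, with the two-local quantities replaced by their hypergraph analogues. The error splits as
\begin{equation*}
f_C^* - Q_{C,\beta}(\nu^{(T)}) = \bigl[f_C^* - \max_\nu Q_{C,\beta}(\nu)\bigr] + \bigl[\max_\nu Q_{C,\beta}(\nu) - Q_{C,\beta}(\nu^{(T)})\bigr].
\end{equation*}

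For the first bracket, strong duality holds because the maximally mixed marginals $\mu_e = \id/D_e$ are strictly feasible. Hence $f_C^* = \max_\nu Q_C(\nu)$. The higher-order smoothing-error lemma then gives $\max_\nu Q_C - \max_\nu Q_{C,\beta} \le \Gamma_G/\beta$. With $\beta = 2\Gamma_G/(n\epsilon)$, this contributes at most $n\epsilon/2$.

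For the second bracket, I invoke the standard guarantee for Nesterov's method (the $\theta_k$ scheme of \cref{alg:accelerated-smoothed-quantum-dual}). For a concave function whose gradient is $L$-Lipschitz, run with any $\widehat L \ge L$, the suboptimality after $T$ steps is at most $2\widehat L R^2/(T+1)^2$. Setting this to $\le n\epsilon/2$ with $\widehat L_\beta = \beta\Lambda_G/2 = \Gamma_G\Lambda_G/(n\epsilon)$ gives
\begin{equation*}
(T+1)^2 \ge \frac{4 R_{C,\beta}^2 \Gamma_G \Lambda_G}{n^2\epsilon^2},
\end{equation*}
which is exactly the stated bound on $T$.

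The main obstacle is verifying that $\nabla Q_{C,\beta}$ is $\beta\Lambda_G/2$-Lipschitz in the Frobenius norm. I would write $Q_{C,\beta}(\nu) = \sum_e F_\beta(\mathcal{T}_e\nu + h_e)$, where $\mathcal{T}_e$ is the linear map $\nu \mapsto -\sum_{f\in\mathcal D(e)} \id_{f^c}\otimes \nu_{e\to f} + \sum_{g\in\mathcal A(e)} \nu_{g\to e}$.

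First, I would show that $\nabla F_\beta(H) = \rho_\beta(H)$ is $\beta/2$-Lipschitz. This is the conjugate-duality consequence of the $2$-strong concavity of the entropy in Frobenius norm, as in the two-local case.

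Second, I would bound $\sum_e \|\mathcal T_e \delta\|_F^2 \le \Lambda_G \|\delta\|_F^2$. For this I would use $\|\id_{f^c}\otimes X\|_F^2 = (D_e/D_f)\|X\|_F^2$ together with Cauchy--Schwarz, which gives $\|\mathcal T_e\delta\|_F^2 \le m_e\bigl(\sum_{f}(D_e/D_f)\|\delta_{e\to f}\|^2 + \sum_g \|\delta_{g\to e}\|^2\bigr)$. Then I regroup by message $e\to f$. The message appears in $\mathcal T_e$ with weight $m_e D_e/D_f$ and in $\mathcal T_f$ with weight $m_f$, which yields the constant $\max_{e\to f}\{m_f + m_e D_e/D_f\} = \Lambda_G$.

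Combining the two steps, the Lipschitz constant is $\frac{\beta}{2}\|\mathcal T\|^2 \le \beta\Lambda_G/2$. Care is needed with tracelessness, or with identity components of the messages. Since the identity shifts cancel in each $\mathcal T_e$ pairing, I would either restrict to the traceless subspace, as in the two-local proof, or observe that the bound holds regardless.

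The per-iteration cost is dominated by computing the matrix exponential of each $\theta_e$, a $D_e\times D_e$ Hermitian matrix, at $\mathcal O(D_e^3)$ via eigendecomposition. Partial traces and message updates are cheaper, giving $\mathcal O(\sum_e D_e^3)$ per iteration.
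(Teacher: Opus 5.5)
Your proposal is correct and follows the paper's route. The paper simply transplants the proof of \cref{thm:convergence_smoothed_gradient_complete}: the smoothing-error lemma with $\beta = 2\Gamma_G/(n\epsilon)$, Tseng's $2\widehat L_\beta R_{C,\beta}^2/(T+1)^2$ bound for the accelerated step, and $\widehat L_\beta = \frac{\beta}{2}\|\mathcal T\|^2$ obtained from the $\beta/2$-Lipschitz Gibbs map. Your Cauchy--Schwarz-and-regroup-per-message estimate of $\|\mathcal T\|^2$ is exactly what produces $\Lambda_G = \max_{e\to f}\{m_f + m_e D_e/D_f\}$. As you observe, it needs no tracelessness, which is why this formula specialises to $\Delta_G + 2d$ in the two-local case rather than the sharper $\Delta_G + d$ that tracelessness gives there.
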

\noindent
Its proof technique follows similarly to that of \cref{thm:convergence_smoothed_gradient_complete}.

Consider a family of bounded-degree hypergraphs with uniform on-site Hilbert-space dimension $d$, bounded maximum cluster size $k$, and bounded local incidence. It follows that $\Gamma_G = \mathcal O(n)$, $\Lambda_G = \mathcal O(1)$, $|E| = \mathcal O (n)$, and $\sum_{e \in E} |\mathcal D(e)| = \mathcal O(n)$. Assume further that $R_{C, \beta} = \mathcal O(\sqrt{n})$. \cref{thm:convergence_smoothed_gradient_higher-order_marginals} then gives the following corollary.

\begin{restatable}{corollary}{corollarysubgradienccccct}
    For bounded-degree hypergraphs, bounded maximum cluster size and $\mathcal O (1)$ optimal message norms, the target error density $\epsilon$ can be achieved in $\mathcal O \left (1 / \epsilon \right)$ accelerated gradient iterations, independently of the system size $n$, and the total arithmetic complexity scales as $\mathcal O \left (n / \epsilon \right)$.
\end{restatable}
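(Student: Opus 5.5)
This corollary follows from \cref{thm:convergence_smoothed_gradient_higher-order_marginals}. The plan is to check that, under the stated structural assumptions, the iteration bound there is $\mathcal O(1/\epsilon)$ uniformly in $n$, and that the per-iteration cost is $\mathcal O(n)$.

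\textbf{Graph quantities.} First I would bound each quantity that enters the theorem.
\begin{itemize}
\item Every hyperedge has at most $k$ sites, so $D_e \le d^k$ and $\log D_e \le k\log d$.
\item Bounded degree and bounded local incidence give $|E| = \mathcal O(n)$.
\item Together these give $\Gamma_G = \sum_{e\in E}\log D_e \le |E|\,k\log d = \mathcal O(n)$.
\item For $\Lambda_G = \max_{e\to f}\{m_f + m_e D_e/D_f\}$, bounded incidence makes $m_e = |\mathcal A(e)|+|\mathcal D(e)| = \mathcal O(1)$ for every cluster. Since $D_f \ge 1$, we also have $D_e/D_f \le d^k$. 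Hence $\Lambda_G = \mathcal O(1)$.
\end{itemize}

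\textbf{Iteration count.} By the assumption on the optimal messages, each component of $\arg\max_\nu Q_{C,\beta}$ has $\mathcal O(1)$ Frobenius norm, uniformly in $n$ and in the chosen $\beta = 2\Gamma_G/(n\epsilon)$. The number of messages is $\sum_{e}|\mathcal D(e)| = \mathcal O(n)$. With zero initialisation this gives
\[
R_{C,\beta}^2 = \sum_{e\to f}\|\nu^*_{e\to f}\|_F^2 = \mathcal O(n), \qquad R_{C,\beta} = \mathcal O(\sqrt n).
\]
Substituting into the bound of \cref{thm:convergence_smoothed_gradient_higher-order_marginals} gives
\[
T \ge \left\lceil \frac{2\,\mathcal O(\sqrt n)\sqrt{\mathcal O(n)\,\mathcal O(1)}}{n\epsilon}\right\rceil - 1 = \mathcal O(1/\epsilon).
\]
The factors of $n$ cancel, so this iteration count suffices for error density $\epsilon$ independently of $n$.

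\textbf{Total cost.} Each iteration costs $\mathcal O(\sum_e D_e^3) \le \mathcal O(|E|\, d^{3k}) = \mathcal O(n)$, treating $d$ and $k$ as constants. This step also covers the message updates, the partial traces $\Tr_{f^c}$, and forming each $\theta_e$; each of these touches only $\mathcal O(1)$ neighbouring messages of dimension at most $d^{k}$. Multiplying by the iteration count gives a total arithmetic complexity of $\mathcal O(n/\epsilon)$.

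\textbf{Main obstacle.} The only delicate point is $R_{C,\beta}$. The assumption must hold uniformly in $\beta$, and $\beta$ itself is $\mathcal O(1/\epsilon)$ because $\Gamma_G/n = \mathcal O(1)$. The maximiser of the smoothed dual could in principle drift with $\beta$, so I would state explicitly that the $\mathcal O(1)$ message-norm hypothesis is imposed on $\arg\max Q_{C,\beta}$ for all $\beta$ considered, as in the 2-local corollary, rather than try to derive it. Everything else is direct substitution.
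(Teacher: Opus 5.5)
Your proposal is correct and follows the same route as the paper. Both arguments show that $\Gamma_G = \mathcal O(n)$, $\Lambda_G = \mathcal O(1)$, $|E| = \mathcal O(n)$ and $R_{C,\beta} = \mathcal O(\sqrt n)$ under the stated assumptions, then substitute these into the iteration and per-iteration cost bounds of \cref{thm:convergence_smoothed_gradient_higher-order_marginals}; you are slightly more explicit than the paper, for instance in bounding $D_e/D_f \le d^k$ and in noting that the message-norm hypothesis must hold uniformly in $\beta$.
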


In particular, for uniform on-site dimension $d$, maximum cluster
size $k$, and a bounded number of inclusions per cluster, the per-iteration cost of the algorithm becomes
\[
\mathcal O \left( n (d^{k})^3\right) = \mathcal O \left( n d^{3k}\right).
\]

\subsection{Comparison between the subgradient and accelerated smoothing methods}

Our analysis on the convergence guarantees of the subgradient method and the accelerated smoothing method above shows a convincing advantage for the latter. For a target error density $\epsilon$, the iteration bound for the subgradient method scales as $\mathcal O(1/\epsilon^2)$, whereas the iteration bound for accelerated smoothing only scales as $\mathcal O(1/\epsilon)$. In the higher-order relaxation regime, there is theoretical evidence that accelerated smoothing suffers more when the size of the clusters is large, but the disadvantage is mild under most circumstances, and it can be easily compensated for by the improved $\epsilon$ dependence in any case. In terms of per-iteration cost, both algorithms have the same asymptotical scaling of $\mathcal O(n d^3)$, where $n$ is the system size and $d$ is the dimension of the matrices they are computing. However, in practical settings, there can be cases where the per-iteration cost for the subgradient method is cheaper, but the specific conditions under which these advantages occur remain unclear. We designate this question to future work, and refer our readers to \cref{section:numerics_comparisons} on the relevant numerical experiments we have done so far on the comparison between the two methods.

\section{Numerics} \label{section:numerics}

In this section we provide numerical evidence to some of the things mentioned throughout the text.

\subsection{Full vs.\ geometrically local relaxations}

In \cref{section:gse}, we mentioned that although at each level of the hierarchy, the ground state energy relaxation with marginal set $\mathbb{L}_t(G)$ can be solved in polynomial time, the large amount of constraints needed to implement the relaxation scale as $\mathcal{O}(n^t)$, which in practice can become computationally expensive for large $n$ and moderate $t$. We also mentioned that one instead typically considers optimising over a cluster family, among which we highlighted the hierarchy of consecutive interval clusters of size $t$, and denoted the set of marginals as $\mathbb{Y}_t$; see \cref{eqn:interval_set_defn}. There, we mentioned that although the corresponding optimisation program would result is a worse approximation to the ground state energy, the loss of fidelity compared to the speed gain is overall positive. Here, we present numerical evidence for this fact. To simplify notation for what follows, we equate the optimal value of the relaxation with the set used, e.g.\ $\mathbb{Y}_t$ is the optimal value of the ground state energy relaxation with this set of marginals.

In \cref{fig:relative_convergence}, we plot the energy difference between the approximation found with the 3-local consecutive clusters against the one with $\mathbb{L}_3$, and also $\mathbb{Y}_4$ and $\mathbb{Y}_5$ for a chain of varying sizes with random Hamiltonian terms. The image shows that while $\mathbb{L}_3$ does produce a better approximation than $\mathbb{Y}_3$, it is overcome by the relaxation of $\mathbb{Y}_4$, which from \cref{fig:convergence_time} we see that is also (up to $8 \times$) faster.

We also observed that for Hamiltonians with more symmetries, e.g.\ Heisenberg and Ising, the less difference there was between the relaxations with $\mathbb{L}_t$ and $\mathbb{Y}_t$.

\begin{figure}[t]
    \centering
    \begin{subfigure}{0.48\textwidth}
        \centering
        \includegraphics[width=1\linewidth]{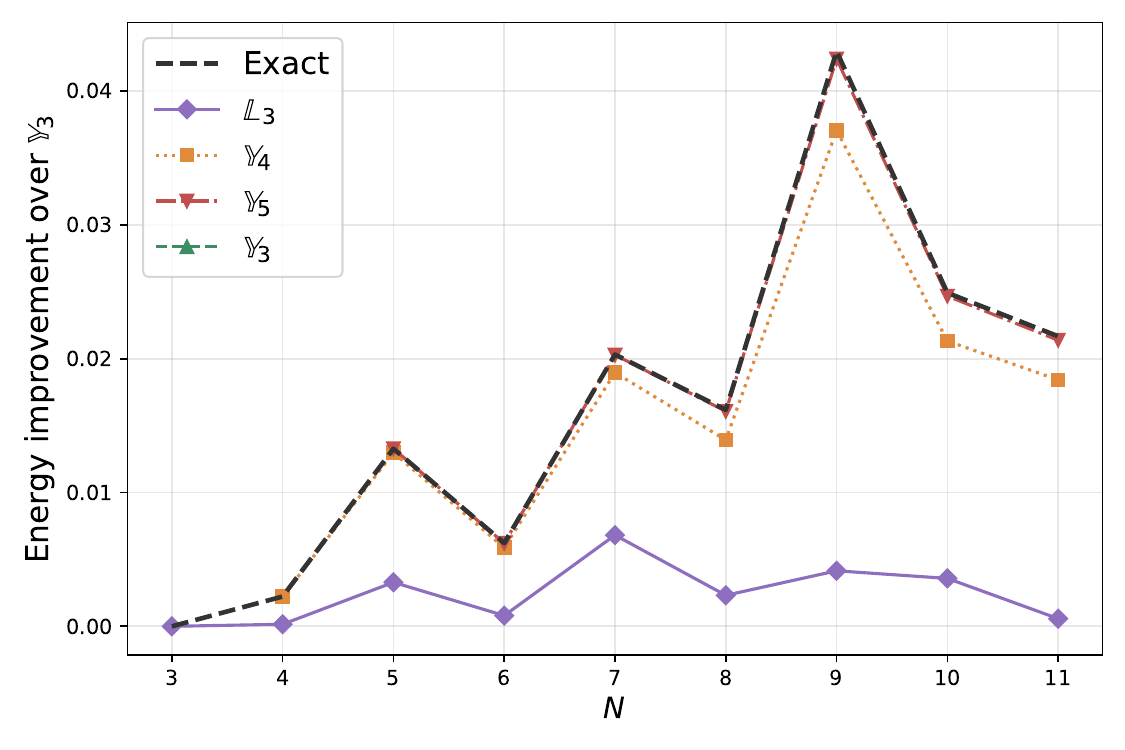}
        \caption{}
        \label{fig:relative_convergence}
    \end{subfigure}
    \begin{subfigure}{0.48\textwidth}
        \centering
        \includegraphics[width=1\linewidth]{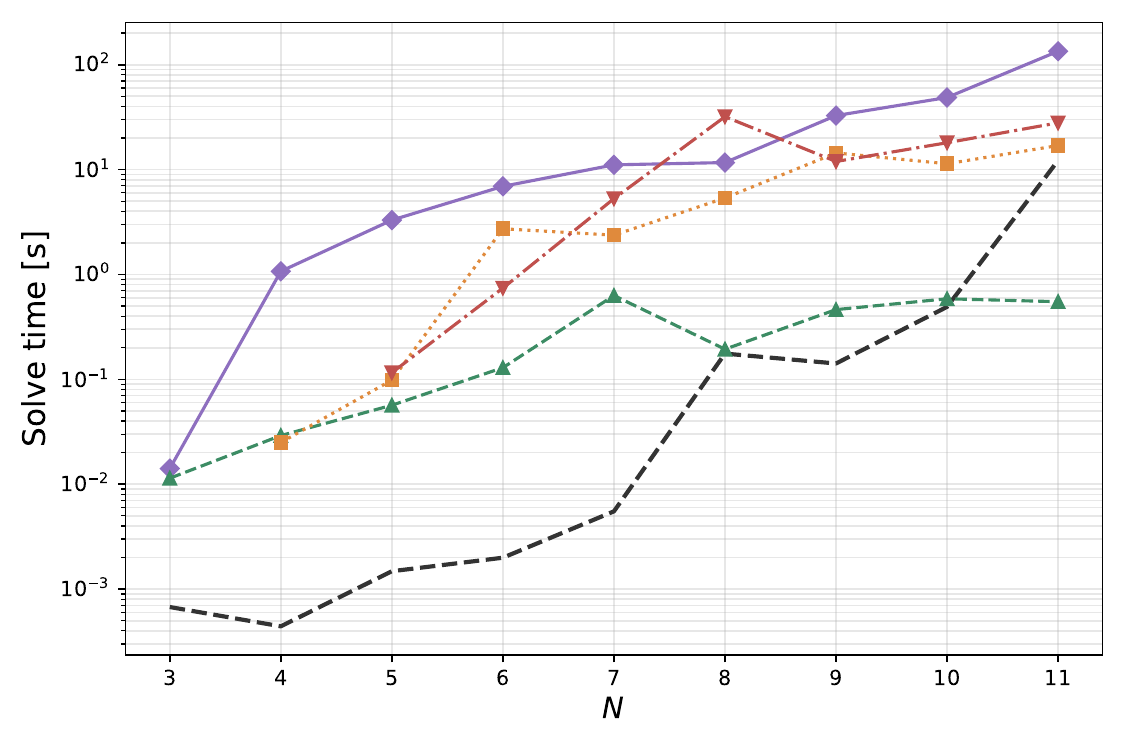}
        \caption{}
        \label{fig:convergence_time}
    \end{subfigure}
    \caption{Comparison of the relaxations $\mathbb{L}_3$, $\mathbb{Y}_3$, $\mathbb{Y}_4$ and $\mathbb{Y}_5$ for a chain of size $N$ with random interactions on each edge. (a) Compares the energy improvement of the different optimisations relative to $\mathbb{Y}_3$. (b) Wall-clock time it took CVXPY to solve these relaxations.}
    \label{fig:full_vs_geometric_hierarchies}
\end{figure}

\subsection{Size independence of the optimal Lagrange multiplier norm}

In \cref{section:qmp}, a relevant part of the of the convergence guarantees for the subgradient and smooth algorithms, depended on the optimal Lagrange multiplier norm. Here, we provide numerical evidence that this quantity is independent of size.

This is shown in \cref{fig:multiplier_independence}, which plots the average optimal multiplier Frobenius norm as a function of size for a 1D chain of qubits for several typical Hamiltonians. The image demonstrates that there is no visible correlation.

\begin{figure}[t]
    \center
    \includegraphics[width = 0.5\linewidth]{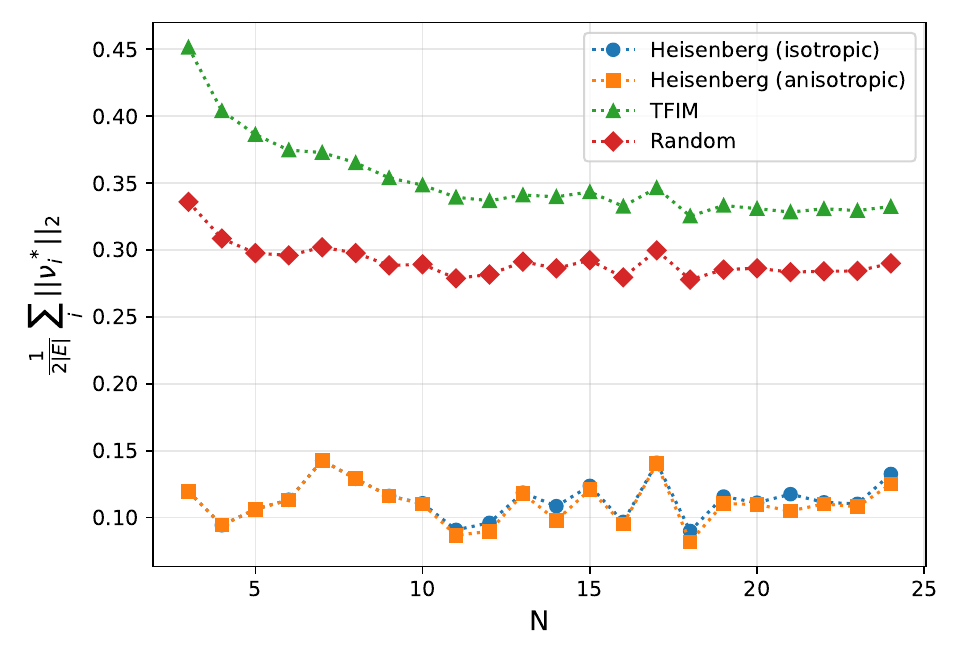}
    \caption{Average optimal Lagrange multiplier norm for a chain of various lengths $N$ for four Hamiltonians: isotropic Heisenberg $(J = 1)$, anisotropic Heisenberg  $(J_x,J_y, J_z) = (2,-3,0.5)$,  critical transverse field Ising model (TFIM) $(J =h =1)$, and a random Hamiltonian with independently sampled Hermitian terms on each edge.}
    \label{fig:multiplier_independence}
\end{figure}

\subsection{Comparison of the subgradient and smooth methods}
\label{section:numerics_comparisons}

In \cref{section:qmp}, we demonstrated that in the worst-case, the number of iterations needed to obtain $\epsilon$ convergence of the subgardient and Nesterov-type accelerated algorithms were, respectively, $\mathcal{O}(1/\epsilon^2)$ and $\mathcal{O}(1/\epsilon)$, in the case of bounded degree graphs and under the assumption that the optimal Lagrange multipliers were independent of system size. Furthermore, the cost per iteration was also similar. In theory then, it seems like the smooth algorithm prevails over the subgradient one. Here, we provide numerical evidence that this is also generally the case in practice, except for some adversarially selected Hamiltonians.

% First, we considered the iteration growth to achieve different convergence values of $\epsilon$

First, we considered the practical dependence of the number of iterations with the desired accuracy $\epsilon \in [0.02,0.002]$ for standard 1D hamiltonians, like isotropic and anisotropic Heisenberg models, the transverse field Ising model, as well as random Hamiltonians and free-fitted polynomials of $1/\epsilon$. 

In the subgradient method, the largest exponent observed was of $2$ for random Hamiltonians, while the exponents for the others ranged from $1.33$ to $1.8$. For the smooth method, we had similar observations, where the largest exponent of $0.8$ was achieved for random Hamiltonians with coefficients of $\approx 0.5$ for the other Hamiltonians. \cref{fig:eps_dependence} shows these curves for the upper and lower limits observed.

Our numerical experiments show that there are Hamiltonians that can meet the theoretical upper bounds, but this bound is also loose for naturally occurring Hamiltonians. Furthermore, they also confirm that the smooth method is also better in practice than the subgradient method, at least for standard Hamiltonians. It remains an open question whether there are specific cases where the subgradient algorithm outperforms the smooth one. We leave this as a direction for future work.

\begin{figure}[t]
    \center
    \begin{subfigure}{0.48\textwidth}
        \center
        \includegraphics[width=1\linewidth]{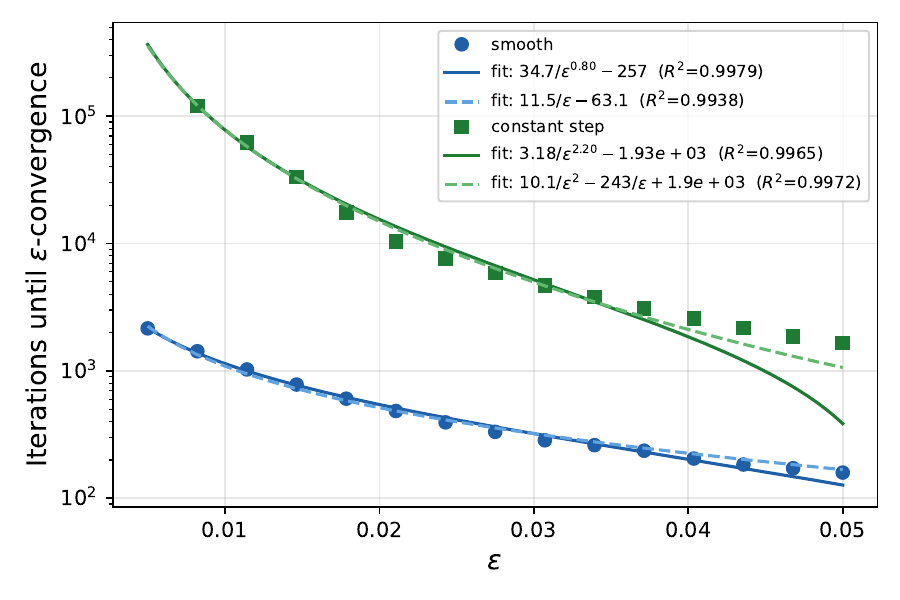}
        \caption{}
        \label{fig:eps_dependence_random}
    \end{subfigure}
    \hspace{-0.3cm}
    \begin{subfigure}{0.48\textwidth}
        \center
        \includegraphics[width=1\linewidth]{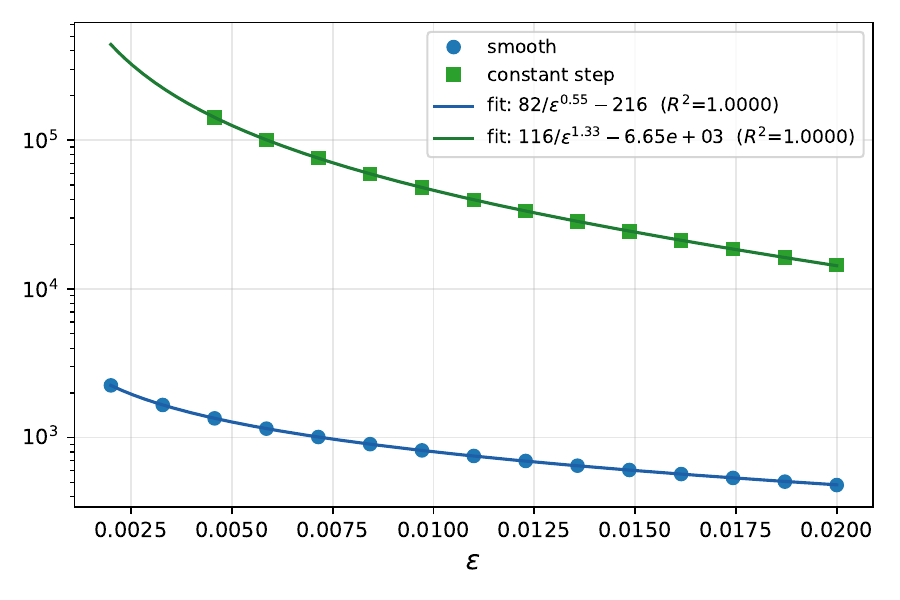}
        \caption{}
        \label{fig:eps_dependence_ising}
    \end{subfigure}
    \caption{Dependence of the number of iterations needed for $\epsilon$ convergence to the exact optimal value as a function of $\epsilon$ for (a) a random Hamiltonian on a chain, and (b) an isotropic Heisenberg Hamiltonian ($J = 1$) with an on site $X$ potential with strength $h = 0.5$. The solid lines represent the free-fits, and in (a) the dashed ones represent the constrained quadratic fits. We note that for the random Hamiltonian of the first panel, the free-fit indicates an exponent of $2.2$ that surpasses the theoretical bound. This is likely due to the noise associated with such Hamiltonians, or the need for more data points. The quadratic fit, however, gives a better $R^2$ value and hence confidence that the theoretical upper bound we have derived is not violated.}
    \label{fig:eps_dependence}
\end{figure}

\subsection{Benchmark against CVXPY}

We have yet to provide a meaningful performance comparison of our algorithms against CVXPY. We have found that CVXPY performs a number of under-the-hood optimisations---exploiting sparsity or solver-specific heuristics---on the problems we consider here, which currently allow it to outperform our current rustic Python implementation. Improving the performance of our algorithm is left to future work, which include allowing GPU parallelisation, just-in-time compilation, or perhaps even rewriting it in a low-level language instead. One particularly promising optimisation is to consider tensor network messages, which is discussed in more detail in the following section.

\section{Discussion and future work}

In this work, we have derived novel convergence guarantees for locally consistent marginal relaxations. We have also introduced highly parallelisable message passing algorithms to solve those relaxations, and derived their algorithmic complexity.

This work opens several future research directions. First, it is still an open question whether one can provide exactness of higher levels of the hierarchy for some specific classes of Hamiltonians, beyond commuting Hamiltonians on a tree.

Second, it would be interesting to extend the proof of exponential convergence of the interval hierarchy for gapped weakly-interacting chains to strongly interacting systems and higher dimensions. A promising direction is to use the higher dimensional generalisation of the block diagonalisation used in the proof of theorem \ref{thm:exp_conv_1d_weak} derived in \cite{Del_Vecchio_2022}.

A problem that occurs in the relaxations we have discussed, is that the size of the marginals grows exponentially with the level of the hierarchy, which in the algorithms, is reflected in the exponentially growing cost runtime and memory. A promising avenue for improvement is to consider the dual program of the optimisation program considered in \cite{kull2024lower} (specifically Section II) which parameterises the locally consistent reduce density matrices as matrix product states (if in 1D) or tree tensor networks (for systems in higher dimensional geometries). The advantage being that these can be described with a polynomial number of parameters. Hence, we can conceive a scenario where solving the dual of the their resulting optimization program and applying similar methods to what we have done here, it may result in a message passing algorithm where the messages are themselves tensor network states. 

Finally, we reiterate the need to further optimise the runtime of our algorithms for successful benchmark comparisons against CVXPY. The most promising route being a GPU implementation to exploit the inherently local, and hence distributed structure of our algorithms. \label{section:outro}

\section*{Acknowledgements}

We thank Daniel Nagaj, Andrej Gendiar, and Jordi Tura for helpful discussions. RRC was supported by the Slovak Research and Development Agency through the project APVV-22-0570 (DeQHOST) and project VEGA 2/0164/25 (Quantum Structures).
SKL and RB acknowledge support from the EPSRC Grant number EP/W032643/1. This work was supported in part by a grant of access to OpenAI models through the ChatGPT for Academic Researchers program. We acknowledge the use of Claude and ChatGPT to provide sanity checks, improve grammar and presentation, aid in the literature review, and help with figures and the proof of Theorem \ref{thm:exp_conv_1d_weak}. In the Python code, they were used to debug, improve structure and speed, and generate some tests. All the text has been written by the authors.

\newpage
\printbibliography

\newpage
\appendix

\section{Proofs of theorems} \label[appendix]{appendix:thmproofs}

\subsection{Proof of \cref{thm:lpi}}
An important part of the proof of \cref{thm:lpi} has to deal with the decomposition of local Hilbert spaces of quantum Markov states on a tree. First, let us state a lemma used in that proof about the decomposition of the space for star graphs.

\begin{lemma}[Star graph decomposition] \label{lemma:stardecomposition}
    For a quantum Markov state on a star graph where the centre node is $i$ and neighbours $s$ leafs labelled $j_1, \ldots, j_s $, the Hilbert space of site $i$ decomposes as
    \begin{equation} \label{eqn:BVstar}
        \mathcal{H}_i = \bigoplus_{\alpha_i} \bigotimes_{k=1}^{s} \mathcal{H}_{i \to j_k}^{\alpha_i}.
    \end{equation}
    \noindent
    Here, $\mathcal{H}_{i \to j_k}^{\alpha_i}$ can be thought of as the space in sector $\alpha_i$ (a multi-index with $s-1$ components) associated with edge $(i,j_k) \in E$. Furthermore, the quantum Markov state on this graph takes the form

    \begin{equation} \label{eqn:QMN_state_decomp}
        \mu_{i \cup J} = \bigoplus_{\alpha_i} p_{\alpha_i} \bigotimes_{k=1}^s \mu_{i \to j_k, j_k}^{\alpha_i}
    \end{equation}
    \noindent
    where $p_{\alpha_i}$ is a probability distribution, i.e.\ $\sum_{\alpha_i} p_{\alpha_i} = 1$.
\end{lemma}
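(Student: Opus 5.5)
We will prove \cref{lemma:stardecomposition} by induction on the number of leaves $s$, applying at each step the structure theorem for states with vanishing conditional mutual information from \cite{QMN}, as recalled after \cref{thm:lpi}. That theorem says: if $I(A:C|B)=0$, then $\mathcal{H}_B=\bigoplus_\alpha \mathcal{H}^\alpha_{B\to A}\otimes\mathcal{H}^\alpha_{B\to C}$ and $\rho_{ABC}=\bigoplus_\alpha p_\alpha\, \rho^\alpha_{A,B\to A}\otimes\rho^\alpha_{B\to C,C}$.

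In the star graph with centre $i$ and leaves $J=\{j_1,\dots,j_s\}$, the centre $\{i\}$ shields any set of leaves from any disjoint set of leaves. The Markov property therefore gives $I(j_1 : j_2\cdots j_s \,|\, i)=0$.

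For $s=1$ the statement is trivial: take a single sector with $p=1$. For $s\ge 2$, we apply the structure theorem with $A=j_1$, $B=i$, $C=\{j_2,\dots,j_s\}$. This yields
\[
\mathcal{H}_i=\bigoplus_{\beta}\mathcal{H}^{\beta}_{i\to j_1}\otimes\mathcal{K}^{\beta}
\quad\text{and}\quad
\mu_{iJ}=\bigoplus_\beta q_\beta\, \mu^{\beta}_{i\to j_1,j_1}\otimes\sigma^{\beta}_{\mathcal{K},j_2\cdots j_s}.
\]

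Next we claim that, for each $\beta$ with $q_\beta>0$, the state $\sigma^\beta$ is again a quantum Markov state on the star with centre system $\mathcal{K}^\beta$ and leaves $j_2,\dots,j_s$. To see this, compute $I(j_2:j_3\cdots j_s|i)$ for the block-diagonal state. Because the sectors are orthogonal on $\mathcal{H}_i$ and the factor $\mu^\beta_{i\to j_1,j_1}$ is tensored off, the classical sector label $\beta$ can be read off from $i$ alone. Hence the CMI decomposes as $\sum_\beta q_\beta\, I(j_2:j_3\cdots j_s|\mathcal{K}^\beta)_{\sigma^\beta}$. The tensor factor $\mathcal{H}^\beta_{i\to j_1}$ contributes nothing, since in $\mu_{i j_2\cdots j_s}$ it appears as the product factor $\Tr_{j_1}\mu^\beta_{i\to j_1,j_1}$. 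The left-hand side vanishes by the Markov property of $\mu$, and each summand is nonnegative, so each summand vanishes. The same argument gives every shielding condition among $j_2,\dots,j_s$.

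We then invoke the induction hypothesis on each $\sigma^\beta$. This decomposes
\[
\mathcal{K}^\beta=\bigoplus_{\gamma}\bigotimes_{k=2}^s\mathcal{H}^{\beta,\gamma}_{i\to j_k}
\quad\text{and}\quad
\sigma^\beta=\bigoplus_\gamma r^\beta_\gamma\bigotimes_{k\ge2}\mu^{\beta,\gamma}_{i\to j_k,j_k}.
\]
Set $\alpha_i=(\beta,\gamma)$, which is a multi-index with $s-1$ components. Set $\mathcal{H}^{\alpha_i}_{i\to j_1}=\mathcal{H}^\beta_{i\to j_1}$ and $p_{\alpha_i}=q_\beta r^\beta_\gamma$. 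Distributing the tensor product over the direct sum then gives \cref{eqn:BVstar} and \cref{eqn:QMN_state_decomp}, and the $p_{\alpha_i}$ sum to one.

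Sectors with $q_\beta=0$ are discarded, or equivalently assigned an arbitrary tensor factorisation. This is harmless because they carry no weight in the state.

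The main obstacle is the inheritance step, namely showing that each sector state $\sigma^\beta$ is still Markov. It requires careful bookkeeping: the CMI of a state that is block diagonal on the conditioning system is the $q_\beta$-weighted average of the sectorwise CMIs, because the $\beta$-dependent Shannon terms cancel. It also requires that the product factor on $\mathcal{H}^\beta_{i\to j_1}$ drops out of every entropy difference. Everything else in the proof is direct substitution.
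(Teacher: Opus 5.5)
Your proof is correct and takes the same route as the paper. You peel off one leaf at a time with the zero-CMI structure theorem, nest the sector labels into an $(s-1)$-component multi-index, and multiply the sector weights to obtain $p_{\alpha_i}$. Where you go further is in explicitly checking that each sector state $\sigma^\beta$ is again Markov, via $I(j_2:j_3\cdots j_s|i)_\mu=\sum_\beta q_\beta\, I(j_2:j_3\cdots j_s|\mathcal{K}^\beta)_{\sigma^\beta}$; this justifies the refinement-within-each-sector step that the paper only asserts.
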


\begin{proof}
    First, observe that for a quantum Markov network in such a graph, the conditional mutual information between any two disjoint subsets of the leaf nodes conditioned on site $i$ always vanishes. The proof idea is to recover \cref{eqn:BVstar} by repeatedly considering the space decompositions of \cite{QMN} for successively smaller subspaces. 

    We begin by considering the subsets $L = \{j_1\}$ and $R = \{j_2, \ldots, j_s\}$. As mentioned, $I(L:R \,|\, i) = 0$, and the \cite{QMN} decomposition implies that the space at site $i$ can be decomposed as

    \begin{equation}
        \mathcal{H}_i = \bigoplus_{\alpha \in S} \mathcal{H}_{i \to j_1}^{\alpha} \otimes \mathcal{H}_{i \to R}^{\alpha},
    \end{equation}
    \noindent
    where we have explicitly denoted $S$ to be the set of labels that $\alpha$ iterates over. We have successfully recovered the space $\mathcal{H}_{i \to j_1}$, yet the other space is still associated with several other sites, i.e.\ $j_2, \ldots,j_s$. We subsequently refine this latter subspace by again separating the nodes into two groups: $T = \{j_2\}$ (top) and $B = \{j_3, \ldots, j_s\}$ (bottom). Still, $I(T:B\,|\,i) = 0$, and taking into consideration the existing decomposition, the space at each sector $\alpha$ then decomposes further, resulting in

    \begin{equation}
        \mathcal{H}_{i \to R}^{\alpha} = \bigoplus_{\beta_{\alpha} \in S_\alpha} \mathcal{H}_{i \to j_2}^{\beta_{\alpha}} \otimes \mathcal{H}_{i \to B}^{\beta_\alpha},
    \end{equation}
    \noindent
    where $\beta_\alpha$ is an index iterating over the set $S_{\alpha}$.

    By repeatedly refining the larger subspace by peeling one node at a time until all decompositions apply to a single node (which requires $s-1$ steps), the decomposition for node $i$ can be expressed as 

    \begin{equation} \label{eqn:CMIstarresult}
        \mathcal{H}_i = \bigoplus_{t \, \in \, T}
        \Bigl( \bigotimes_{k=1}^{s-1} \mathcal{H}_{i \to j_k}^{\,t|_k} \Bigr)
        \otimes \mathcal{H}_{i \to j_s}^{\,t|_{s-1}},
    \end{equation}
    \noindent
    where we have adapted the notation to better suit the general case and express the progressively nested labels. These are expressed via the tree $T = T_{s-1}$ defined recursively for all $1 < k \leq s-1$ as

    \begin{equation}
        T_k = \bigl\{\, (q, \alpha) : q \in T_{k-1},\; \alpha \in S_k(q) \,\bigr\},
    \end{equation}
    \noindent
    where $(q,\alpha)$ appends $\alpha$ to the sequence $q$ and $T_1 = S_1$.\footnote{Technically, we have describe a rooted forest, where the roots are the values of $S_1$, however, we can add an artificial and inconsequential root node to see them all as part of one larger tree.} Thus, $T_k$ is the set of nodes at depth $k$: each is obtained from a depth-$(k-1)$ node $q$ by choosing a child $\alpha \in S_k(q)$, and $S_k(q)$ is precisely the index set produced by peeling the $k$-th node within sector $q$. For some path $t \in T$, $t|_k \in T_k$ is the truncation restricted to the first $k$ levels of the tree.

    This decomposition differs from \cref{eqn:BVstar} only in that its sector labels are nested rather than shared, however, they are in fact equivalent by redefining

    \begin{equation}
        \widehat{\mathcal{H}}_{i \to j_k}^{\,t} := \mathcal{H}_{i \to j_k}^{\,t|_k}
        \quad (1 \le k \le s-1), \qquad
        \widehat{\mathcal{H}}_{i \to j_s}^{\,t} := \mathcal{H}_{i \to j_s}^{\,t|_{s-1}},
    \end{equation}
    \noindent
    so that finally:

    \begin{equation}
        \mathcal{H}_i = \bigoplus_{t \in T} \bigotimes_{k=1}^{s}
        \widehat{\mathcal{H}}_{i \to j_k}^{\,t},
    \end{equation}
    \noindent
    which is of the form mentioned in \cref{eqn:BVstar}.

    As for the quantum Markov state, after the first decomposition, the state can be written as

    \begin{equation}
        \mu_{i \cup J} = \bigoplus_{\alpha \in S} p_{\alpha} \mu_{i\to j_1, j_1}^\alpha \otimes \mu_{i \to R, R}^\alpha
    \end{equation}
    \noindent
    where $p_\alpha$ is a probability distribution, and
    \begin{equation}
        \mu_{i \cup J} = \bigoplus_{\alpha \in S} \bigoplus_{\beta_{\alpha} \in S_{\alpha}} p_{\alpha} p_{\beta_{\alpha}} \mu_{i \to j_1, j_1}^{\alpha} \otimes \mu_{i \to j_2, j_2}^{\beta_\alpha} \otimes \mu_{i \to B, B}^{\beta_\alpha}
    \end{equation}
    after the second decomposition. Moreover, here, for any fixed $\alpha \in S$, $\sum_{\beta_\alpha \in S_{\alpha}} p_{\beta_{\alpha}} = 1$. Hence, after all decompositions, the state is 

    \begin{equation}
        \mu_{i \cup J} = \bigoplus_{t \in T} \Big( \prod_{\ell =1}^s p_{t|_\ell} \Big) \Big(\bigotimes_{k = 1}^{s-1} \mu_{i \to j_k}^{t|_k} \Big) \otimes \mu_{i \to j_s}^{t|_{s-1}}.
    \end{equation}

    Defining $\hat{p}_{t} := \prod_{\ell =1}^s p_{t|_\ell}$, $\hat{\mu}_{i \to j_k, j_k}^{t} := \mu_{i \to j_k, j_k}^{t|_k}$ for all $1 \leq k \leq s-1$, and $\hat{\mu}_{i \to j_s, j_s}^t := \mu_{i \to j_s, j_s}^{t|_{s-1}}$ we obtain a state of the form as in \cref{eqn:QMN_state_decomp}. Finally, observe that since $\hat{p}_t$ is a probability tree, $\sum_t \hat{p}_t = 1$ and is hence a probability distribution.
\end{proof}
    
\LequalsT*

\begin{proof}
    First, let us show that $\mqmn (T) \subseteq \lpi(T)$ and subsequently that $\lpi(T) \subseteq \mqmn (T)$. 

    \paragraph{($\mqmn (T) \subseteq \lpi(T)$):}
    This requires demonstrating that the marginals of quantum Markov networks on a tree are of the form shown in $\lpi$.
    
    Observe that for any site $i$ in the bulk of the graph, by definition of a quantum Markov network, we have that $I(j_1:j_2,\ldots,j_s | i) = 0$ where the $j_k$ are the different neighbours of site $i$. These nodes form a star subgraph, and the reduced density matrix supported on those nodes is clearly also a quantum Markov network. We can thus apply \cref{lemma:stardecomposition}, to obtain the reduced state on the star, expressed as a combination of two-site reduced density matrices. Now, focus on one particular neighbouring node, $j_x$ for $x \in \{1,\ldots, s\}$. If $j_x$ is another node in the bulk, then it admits a similar decomposition, resulting in two descriptions of the state sitting on the edge $(i,j_x)$:

    \begin{equation} \label{eqn:marginals_from_stars}
        \mu_{i,j_x} = \bigoplus_{\alpha_i} p_{\alpha_i} \mu_{\mathrm{comp}(i \to j_x)}^{\alpha_i} \otimes \mu_{i \to j_x, j_x}^{\alpha_i} \hspace{0.3cm} \text{and} \hspace{0.3cm} \mu'_{i,j_x} = \bigoplus_{\alpha_x} p_{\alpha_x} \mu_{i, j_x \to i}^{\alpha_x} \otimes \mu_{\mathrm{comp}(j_x \to i)}^{\alpha_x}
    \end{equation}
    \noindent
    where $\mu_{\mathrm{comp}(i \to j_x)}^{\alpha_i} = \bigotimes_{k \neq x} \mu_{i \to j_s}^{\alpha_i}$. Because the global state has consistent marginals, these must then be the same. To find a description of the state consistent with both, let $P_{\alpha_i}$ and $Q_{\alpha_x}$ be projectors unto the $\alpha_i$ and $\alpha_x$ sectors of site $i$ and $j_x$, respectively. As the projectors act on different sites, they commute, and the two decompositions can be simultaneously refined into sectors labelled by both $\alpha_i$ and $\alpha_x$. Hence, $\mu_{i,j_x} = \mu_{i,j_x}' = \bigoplus_{\alpha_i, \alpha_x} X_{\alpha_i, \alpha_x}$ where

    \begin{equation}
        X_{\alpha_i, \alpha_x} := (P_{\alpha_i} \otimes Q_{\alpha_x}) \mu_{i,j_x} (P_{\alpha_i} \otimes Q_{\alpha_x})
    \end{equation}
    is the matrix block in sector $(\alpha_i,\alpha_x)$.  Applying the projectors for fixed $\alpha_i$ and $\alpha_x$ to $\mu_{i,j_x}$, we see that

    \begin{equation}
        X_{\alpha_i, \alpha_x} = p_{\alpha_i} \mu_{\mathrm{comp}(i \to j_x)}^{\alpha_i} \otimes [(I \otimes Q_{\alpha_x})\mu_{i \to j_x, j_x}^{\alpha_i} (I \otimes Q_{\alpha_x}) ] =  \mu_{\mathrm{comp}(i \to j_x)}^{\alpha_i} \otimes W_{\alpha_i, \alpha_x} \otimes  \mu_{\mathrm{comp}(j_x \to i)}^{\alpha_x}
    \end{equation}
    \noindent
    where the first equality follows straightforwardly, and for the second, we noted that $(I \otimes Q_{\alpha_x})\mu_{i \to j_x, j_x}^{\alpha_i} (I \otimes Q_{\alpha_x})$ is an operator on the space $\mathcal{H}_{i \to j_x}^{\alpha_i} \otimes \mathcal{H}_{j_x \to i}^{\alpha_x} \otimes \mathcal{H}_{\mathrm{comp}(j_x \to i)}^{\alpha_x}$, and made use of the fact from $\mu_{i,j_x}'$, that the state over site $j_x$ also decomposes. Here, $W_{\alpha_i, \alpha_x} := p_{\alpha_i} \Tr_{\mathrm{comp}(j_x \to i)}[(I \otimes Q_{\alpha_x})\mu_{i \to j_x, j_x}^{\alpha_i} (I \otimes Q_{\alpha_x})]$ is an operator supported only on $\mathcal{H}_{i \to j_x}^{\alpha_i} \otimes \mathcal{H}_{j_x \to i}^{\alpha_x}$.

    The resulting state $\mu_{i, j_x} = \bigoplus_{\alpha_i, \alpha_j} X_{\alpha_i,\alpha_x}$ is not currently normalized,  and so we define $t_{\alpha_i, \alpha_x}:= \Tr(X_{\alpha_i, \alpha_j}) = \Tr(W_{\alpha_i, \alpha_x}) = p_{\alpha_i} \Tr[(I \otimes Q_{\alpha_x})\mu_{i \to j_x, j_x}^{\alpha_i}]$ to be the normalization factor. Letting $\mu_{i\to j_x, j_x \to i}^{\alpha_i, \alpha_x} := W_{\alpha_i, \alpha_x}/t_{\alpha_i,\alpha_x}$, we finally obtain the normalized state 
    
    \begin{equation}
        \mu_{i, j_x} = \bigoplus_{\alpha_i, \alpha_x} t_{\alpha_i,\alpha_x} \mu_{\mathrm{comp}(i \to j_x)}^{\alpha_i} \otimes \mu_{i\to j_x, j_x \to i}^{\alpha_i, \alpha_x} \otimes  \mu_{\mathrm{comp}(j_x \to i)}^{\alpha_x}
    \end{equation}
    \noindent
    as promised. Note that $\sum_{\alpha_i} t_{\alpha_i,\alpha_x} = p_{\alpha_x}$ and similarly $\sum_{\alpha_x} t_{\alpha_i,\alpha_x} = p_{\alpha_i}$.

    \paragraph{($\lpi (T) \subseteq \mqmn(T)$):}

    The proof of this statement simply consists in showing that the Petz map can be applied iteratively to the marginals of $\lpi(T)$ to construct a global state of the chain.

    Starting from a marginal
    
    \begin{equation}
    \mu_{1, 2} = \bigoplus_{\alpha_2} p_{\alpha_2} \mu_{1, 2 \to 1}^{\alpha_2} \otimes \bigotimes_{\substack{j \in \mathcal{N}(2) \\ j \neq 1}} \mu_{2 \to j}^{\alpha_2}
    \end{equation}
    \noindent
    supported on an edge with a leaf node (site $1$), and adjacent edge marginal

    \begin{equation}
        \begin{aligned}
            \mu_{2,3} &= \bigoplus_{\alpha_2, \alpha_3} t_{\alpha_2, \alpha_3} \mu_{\mathrm{comp}(2 \to 3)}^{\alpha_2} \otimes \mu_{2\to 3, 3 \to 2}^{\alpha_2, \alpha_3} \otimes \mu_{\mathrm{comp}(3 \to 2)}^{\alpha_3} \\
            &= \bigoplus_{\alpha_2, \alpha_3} t_{\alpha_2,\alpha_3} \Big( \mu_{2 \to 1}^{\alpha_2} \otimes \bigotimes_{\substack{j \in \mathcal{N}(2) \\ j \neq 1}} \mu_{2 \to j}^{\alpha_2} \Big) \otimes \mu_{2\to 3, 3 \to 2}^{\alpha_2, \alpha_3} \otimes \mu_{\mathrm{comp}(3 \to 2)}^{\alpha_3},
        \end{aligned}
    \end{equation}
    \noindent
    we construct $\mu_{1,2,3}$ by using the map $\mathcal{P}_{\mu_{2,3}}(\mu_{1,2}) = \mu_{2,3}^{\frac{1}{2}} (\mu_2^{-\frac{1}{2}} \mu_{1,2} \mu_2^{-\frac{1}{2}} \otimes I_3)\mu_{2,3}^{\frac{1}{2}}$. Here, $\mu_2 =  \bigoplus_{\alpha_2} p_{\alpha_2} \mu_{2 \to 1}^{\alpha_2} \otimes \bigotimes_{j \in \mathcal{N}(2)} \mu_{2 \to j}^{\alpha_2}$ is, by definition, consistent with $\mu_{1,2}$ and $\mu_{2,3}$. In a slight abuse of notation, we denoted $\mu_{2 \to 1}^{\alpha_2} := \Tr_1(\mu_{1, 2 \to 1}^{\alpha_2})$. Given the form of the marginals, the reconstruction is exact and $\mu_{1,2,3}$ will have $\mu_{1,2}$ and $\mu_{2,3}$ as its marginals. For the actual application of the map, note that since the marginals are block diagonal, we can apply the map in each of the sectors (and consider the inverse only on the support of the operators).

    For a choice of sector $\alpha_2 = k$ and $\alpha_3 = \ell$, for the middle term, we have 

    \begin{equation}
        \begin{aligned}
        \mu_2^{-\frac{1}{2}} \mu_{1,2} \mu_2^{-\frac{1}{2}}
            &= \Big[p_{k}^{-\frac{1}{2}} (\mu^k_{2\to 1})^{-\frac{1}{2}} \otimes \bigotimes_{\substack{j \in \mathcal{N}(2) \\ j \neq 1}} (\mu_{2 \to j}^k)^{-\frac{1}{2}} \Big] p_{k} \mu_{1, 2 \to 1}^{k} \otimes \bigotimes_{\substack{j \in \mathcal{N}(2) \\ j \neq 1}} \mu_{2 \to j}^{k}     \Big[p_{k}^{-\frac{1}{2}} (\mu^k_{2\to 1})^{-\frac{1}{2}} \otimes \bigotimes_{\substack{j \in \mathcal{N}(2) \\ j \neq 1}} (\mu_{2 \to j}^k)^{-\frac{1}{2}} \Big] \\
            &= (\mu^k_{2\to 1})^{-\frac{1}{2}} \mu_{1, 2 \to 1}^{k} (\mu^k_{2\to 1})^{-\frac{1}{2}} \otimes  \bigotimes_{\substack{j \in \mathcal{N}(2) \\ j \neq 1}} (\mu_{2 \to j}^k)^{-\frac{1}{2}} \mu_{2 \to j}^k (\mu_{2 \to j}^k)^{-\frac{1}{2}}.
        \end{aligned}
    \end{equation}
    \noindent
    Then, conjugation by $\mu_{2,3}^{\frac{1}{2}}$, yields

    \begin{equation}
        \begin{aligned}
            &= \Big[ t_{k,\ell} \big(\mu_{2 \to 1}^{k} \otimes \bigotimes_{\substack{j \in \mathcal{N}(2) \\ j \neq 1}} \mu_{2 \to j}^{k} \big) \otimes \mu_{2\to 3, 3 \to 2}^{k, \ell} \otimes \mu_{\mathrm{comp}(3 \to 2)}^{\ell} \Big]^{\frac{1}{2}}  \Big[ (\mu^k_{2\to 1})^{-\frac{1}{2}} \mu_{1, 2 \to 1}^{k} (\mu^k_{2\to 1})^{-\frac{1}{2}} \otimes  \bigotimes_{\substack{j \in \mathcal{N}(2) \\ j \neq 1}} (\mu_{2 \to j}^k)^{-\frac{1}{2}} \mu_{2 \to j}^k (\mu_{2 \to j}^k)^{-\frac{1}{2}}\Big] \\
            & \Big[ t_{k,\ell} \big(\mu_{2 \to 1}^{k} \otimes \bigotimes_{\substack{j \in \mathcal{N}(2) \\ j \neq 1}} \mu_{2 \to j}^{k} \big) \otimes \mu_{2\to 3, 3 \to 2}^{k, \ell} \otimes \mu_{\mathrm{comp}(3 \to 2)}^{\ell} \Big]^{\frac{1}{2}} \\
            &= t_{k,\ell} \Big(\mu_{1, 2 \to 1}^k \otimes \bigotimes_{\substack{j \in \mathcal{N}(2) \\ j\neq 1,3}} \mu_{2 \to j}^k \Big) \otimes  (\mu_{2\to 3, 3 \to 2}^{k, \ell})^{\frac{1}{2}} (\mu_{2 \to 3}^k)^{-\frac{1}{2}}\mu_{2 \to 3}^k(\mu_{2 \to 3}^k)^{-\frac{1}{2}} (\mu_{2\to 3, 3 \to 2}^{k, \ell})^{\frac{1}{2}} \otimes \mu_{\mathrm{comp}(3 \to 2)}^\ell \\
            &= t_{k,\ell} \Big(\mu_{1, 2 \to 1}^k \otimes \eta_{2 \to 3, 3 \to 2}^{k, \ell} \Big) \otimes \Big(\bigotimes_{\substack{j \in \mathcal{N}(2) \\ j\neq 1,3}} \mu_{2 \to j}^k  \Big) \otimes \Big( \bigotimes_{ \substack{m \in \mathcal{N}(3) \\ m \neq 2}} \mu_{3 \to m}^\ell \Big)\, ,
        \end{aligned}
    \end{equation}
    \noindent
    where we have let $\eta_{2\to3, 3 \to 2}^{k,\ell} := (\mu_{2\to 3, 3 \to 2}^{k, \ell})^{\frac{1}{2}} (\mu_{2 \to 3}^k)^{-\frac{1}{2}}\mu_{2 \to 3}^k(\mu_{2 \to 3}^k)^{-\frac{1}{2}} (\mu_{2\to 3, 3 \to 2}^{k, \ell})^{\frac{1}{2}}$, and rearranged the terms in the last line so it is more apparent that the first terms are fixed and the rest can still admit expansions.

    For general $\alpha_2$ and $\alpha_3$, then

    \begin{equation}
        \mu_{1,2,3} = \mathcal{P}_{\mu_{2,3}}(\mu_{1,2}) = \bigoplus_{\alpha_2, \alpha_3} t_{\alpha_2,\alpha_3} \Big(\mu_{1, 2 \to 1}^{\alpha_2} \otimes \eta_{2 \to 3, 3 \to 2}^{\alpha_2, \alpha_3} \Big) \otimes \Big(\bigotimes_{\substack{j \in \mathcal{N}(2) \\ j\neq 1,3}} \mu_{2 \to j}^{\alpha_2}  \Big) \otimes \Big( \bigotimes_{ \substack{m \in \mathcal{N}(3) \\ m \neq 2}} \mu_{3 \to m}^{\alpha_3} \Big).
    \end{equation}

    Next, consider adding another edge adjacent to $2$, e.g.\ $(2,4) \in E$. This yields

    \begin{equation}
        \mu_{1234} = \bigoplus_{\alpha_2, \alpha_3, \alpha_4} t_{\alpha_2,\alpha_3,\alpha_4} \Big(\mu_{1, 2 \to 1}^{\alpha_2} \otimes \eta_{2 \to 3, 3 \to 2}^{\alpha_2, \alpha_3} \otimes \eta_{2\to4,4\to2}^{\alpha_3, \alpha_4} \Big) \otimes \Big(\bigotimes_{\substack{j \in \mathcal{N}(2) \\ j\neq 1,3,4}} \mu_{2 \to j}^{\alpha_2}  \Big) \otimes \Big( \bigotimes_{ \substack{m \in \mathcal{N}(3) \\ m \neq 2}} \mu_{3 \to m}^{\alpha_3} \Big) \otimes \Big( \bigotimes_{ \substack{z \in \mathcal{N}(4) \\ m \neq 2}} \mu_{4 \to z}^{\alpha_4} \Big)
    \end{equation}
    \noindent
    for some joint probability $t_{\alpha_2,\alpha_3,\alpha_4}$ yet to be defined. From these examples, it is evident that the quantum part (the sector marginals) can be glued together on trees and in fact, in any geometry. However, it is the mixing of the classical sector labels that generally creates a problem. The difficulty is hence classical, and the junction-tree theorem \cite{JW} (the one responsible for showing equality between $\mathbb{M}_c(T) = \mathbb{L}_c(T)$) states that on trees, it is possible to create such joint probabilities. From this theorem, we see that for any $\alpha = (\alpha_1, \ldots, \alpha_m)$ with $m \in [n]$, 

    \begin{equation}    
    t_\alpha = \prod_{s \in V} p_{\alpha_s} \prod_{(s,t) \in E} \frac{t_{\alpha_s, \alpha_t}}{p_{\alpha_s} p_{\alpha_t}} .
    \end{equation}
\end{proof}

\subsection{Proof of \cref{thm:commuting}}

\commutingtree*

\begin{proof}
    To begin, consider a node in the bulk, which we denote $i$, with $s$ neighbours labelled  $j_1, \ldots, j_s$. The Hamiltonian terms in its neighbourhood are  $\{h_{i,j_k}\}_{k = 1}^s$ and trivially satisfy $[h_{i,j_x}, h_{i,j_y}] = 0$ for all $x,y \in [s]$. The Structure lemma for commuting hamiltonians from \cite{BV} then states that the local Hilbert space of particle $i$ decomposes as\footnote{In \cite{BV}, there is an extra Hilbert space $\mathcal{H}_{i \to i}^{\alpha_i}$ in the decomposition. However, this term can be safely neglected as its existence is due to a further refinement of the decomposition we discuss here; see  \cite{hamiltoniancomplexity}.} 

    \begin{equation}
        \mathcal{H}_i = \bigoplus_{\alpha_i} \bigotimes_{k = 1}^s \mathcal{H}_{i \to j_k}^{\alpha_i}.
    \end{equation}
    \noindent
    such that for a given $\alpha_i$, each $h_{i,j_k}$ is only supported on the space $\mathcal{H}_{i \to j_k}^{\alpha_i} \otimes \mathcal{H}_{j_k}$. 

    Now, assume $j_x$ with $x \in [s]$ is also a bulk node. Then, there are two seemingly distinct decompositions of the Hamiltonian term $h_{i,j_x}$, one supported on the space $\mathcal{H}_{i \to j_k}^{\alpha_i} \otimes \mathcal{H}_{j_k}$, and the other on the space $\mathcal{H}_{i} \otimes \mathcal{H}_{j_k \to i}^{\alpha_x}$. It is then clear that local term acts non-trivially only on the spaces $\mathcal{H}_{i \to j_x}^{\alpha_i} \otimes \mathcal{H}_{j_x \to i}^{\alpha_x}$. Hence, $h_{i,j_x}$ decomposes as

    \begin{equation}
        h_{i,j_x} = \bigoplus_{\alpha_i, \alpha_x} I_{\mathrm{comp}(i \to j_x)}^{\alpha_i} \otimes \tilde{h}^{\alpha_i, \alpha_x}_{i \to j_x,j_x \to i} \otimes I_{\mathrm{comp}(j \to i)}^{\alpha_x},
    \end{equation}
    where $\tilde{h}^{(\alpha_i, \alpha_j)}_{i \to j,j \to i}$ is an effective Hamiltonian supported only on $\mathcal{H}_{i \to j}^{\alpha_i} \otimes \mathcal{H}_{j \to i}^{\alpha_j}$, and $I_{\mathrm{comp}(i \to j_x)}  =  \bigotimes_{\ell \in\mathcal{N}(i)} I_{i \to \ell}$ (and $\ell \neq j_x$). 

    At an edge containing a leaf node $\ell$ and a bulk node $k$, the hamiltonian term $h_{\ell,k}$ can be expressed as

    \begin{equation}
        h_{\ell,k} = \bigoplus_{\alpha_k} h_{i, k \to i}^{\alpha_k} \otimes I^{\alpha_k}_{\mathrm{comp}(k \to i)}.
    \end{equation}

    The non-trivial interactions then take place in restricted subspaces of the composite Hilbert space, and so it is safe to restrict the search for the minimising reduced density matrices to be density matrices with these structures. These are the marginals of the set $\lpi$. Hence,

    \begin{equation}
        f_1^* = \min_{\mu \in \mathbb{L}(T)} \sum_{(i,j) \in E} \langle \mu_{i,j}, h_{i,j} \rangle = \min_{\mu \in \lpi(T)} \sum_{(i,j) \in E} \langle \mu_{i,j}, h_{i,j} \rangle .
    \end{equation}
    \noindent
    The first program is a lower bound to the exact ground state energy $E_0$, and by \cref{thm:lpi} (in particular the direction $\lpi(T) \subseteq \mqmn(T)$), the second program is an upper bound. Thus, $f_1^* = E_0$, and additionally, the optimal density matrices $\{\mu^*_{i,j}\}$ are marginals of a global quantum Markov state.
\end{proof}

\subsection{Proof of Theorem \ref{thm:exp_conv_1d_weak}}
\label{app:proof exp_conv_1d_weak}

We recall the theorem we want to prove, defining the notation used throughout the proof.
Consider the following Hamiltonian on a chain of 
$n$ qudits:
\begin{align}
    \label{eq:H0tV}
    H = H_0 + t V\,,
    \quad 
    H_0 = \sum_{i=1}^n h_i
    \,,\quad 
    V = \sum_{i=1}^{n-1}v_{i,i+1}
\end{align}
such that, denoted the unique ground state of $H_0$ by $\ket{\Omega}=\bigotimes_{i=1}^n \ket{\Omega}_i$ with projectors onto the ground space and its orthogonal complement
\begin{align}
    p_i = (\ket{\Omega}\bra{\Omega})_i
    \,,\quad q_i = \id - p_i
\end{align}
we have
\begin{align}
    h_i\ket{\Omega}_i = 0 
    \,,\quad h_i = p_i h_i p_i + q_ih_iq_i = q_ih_iq_i\succeq q_i
    \,.
\end{align}
We also assume
\begin{align}
    \|h_i\|\le h_* \,,\quad \|v_{i,i+1}\|\le 1\,,
\end{align}
and take $t>0$.

For $2\le \ell\le n$, 
we consider the level-$\ell$ hierarchy consisting of density matrices $\rho_I$ supported on intervals $I$ of size $|I|\le \ell$, satisfying the local consistency conditions
\begin{align}
    \mathbb{Y}_\ell
    &=
    \{
    \{\rho_I\}_{|I|\le \ell}\,|\, 
    \rho_I \succeq 0\,,\quad
    \Tr(\rho_I)=1\,,\quad 
    \Tr_{I\setminus J}(\rho_I) 
    =\rho_J \,,\text{ for all }J\subseteq I
    \}
\end{align}
We consider the functional $\omega_\ell(A)=\Tr(\rho_I A)$ for operators $A$ supported on intervals $I$, for $|I|\le \ell$. This is linearly extended to the vector space 
$\mathcal{V}_\ell$
of operators that are sum of operators supported on intervals, which includes the Hamiltonian under consideration.
Then we define the energy of the relaxation by
\begin{align}
    f^*_\ell
    &=
    \min_{\{\rho_I\}\in \mathbb{Y}_\ell} 
    \omega_\ell(H)    
\end{align}
We prove here the following result.

\begin{theorem}
Given the Hamiltonian \eqref{eq:H0tV}, satisfying the assumptions above, 
denote its ground state energy by $E_0=\lambda_{\text{min}}(H)$.
Then there exists a system-size independent $t_*>0$ such that for all $0<t<t_*$,  
\begin{align}
    0\le 
    \frac{E_0-f^*_\ell}{n}
    \le 
    c(t) e^{-\ell/8}
    \,,
\end{align}
where $c(t)$ is a positive function of $t$ which is independent of the system size $n$.
\end{theorem}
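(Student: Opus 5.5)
The lower bound $0\le E_0-f^*_\ell$ holds because the true ground state's interval marginals lie in $\mathbb{Y}_\ell$, so the work is in the upper bound. The plan follows the sketch after \cref{thm:exp_conv_1d_weak} and has four steps: a sum-of-squares certificate, quasi-locality of its pieces, an extension of $\omega_\ell$ to all operators, and an error estimate.

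\textbf{Step 1: a sum-of-squares certificate.} Invoke the Lie--Schwinger block diagonalisation of \cite{froehlich2019lieschwingerblockdiagonalizationgappedquantum}. For $0<t<t_*$ it gives a unitary $U=\prod_{\gamma}U_\gamma$, a product of gates $U_\gamma=e^{S_\gamma}$ indexed by intervals $\gamma$ and applied in a fixed order. The result has the form
\[
UHU^\dagger = H_0 + \sum_\gamma G_\gamma + E_0\,\id ,
\]
where each $G_\gamma$ is supported on $\gamma$, block-diagonal with respect to $P_\gamma=\bigotimes_{i\in\gamma}p_i$, annihilates $\ket{\Omega}$, and satisfies $\|G_\gamma\|\le (ct)^{|\gamma|}$. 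From the gap $h_i\succeq q_i$ and the smallness of $G_\gamma$ I would deduce that $UHU^\dagger-E_0\id$ is a sum of local positive terms, namely $\sum_i(h_i+\text{share of }G_\gamma\text{'s})\succeq 0$. For this I distribute each $G_\gamma$ over the sites of $\gamma$ and use $G_\gamma=Q_\gamma G_\gamma Q_\gamma\succeq -\|G_\gamma\|\sum_{i\in\gamma}q_i$, which is dominated by half of $\sum_i q_i$ for small $t$. Conjugating back gives
\[
H-E_0\id=\sum_\gamma U^\dagger K_\gamma U=:\sum_\gamma P_\gamma ,\qquad K_\gamma\succeq 0 \text{ and local}.
\]

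\textbf{Step 2: quasi-locality.} The $P_\gamma$ are global, but $U^\dagger(\cdot)U$ is a product of quasi-local automorphisms with generators $\|S_\gamma\|\lesssim (ct)^{|\gamma|}$. A Lieb--Robinson-type bound \cite{hastings2010localityquantumsystems,Nachtergaele_2019} therefore gives truncations $P_\gamma^{(r)}$ supported on the $r$-neighbourhood of $\gamma$ with $\|P_\gamma-P_\gamma^{(r)}\|\le C(t)e^{-r/4}$. I would also want $P_\gamma^{(r)}$ to remain positive, or at least close to a positive operator. The truncation is a partial trace with the normalized identity, i.e.\ the conditional expectation $\Tr_{\text{far}}(\cdot)\otimes\id/d$, which is completely positive. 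Hence $P_\gamma^{(r)}\succeq0$ exactly, which is the key design choice.

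\textbf{Step 3: extending $\omega_\ell$.} Define $\tilde\omega_\ell(A)=\omega_\ell(A^{(\ell)})$ using the same conditional-expectation truncation to intervals of length $\ell$. Since $\mathbb{Y}_\ell$ requires consistency, $\tilde\omega_\ell$ agrees with $\omega_\ell$ on $\mathcal{V}_\ell$. It is also positive on each truncated $P_\gamma^{(r)}$ whenever $|\gamma|+2r\le \ell$, because it is then a genuine density-matrix expectation of a PSD operator.

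\textbf{Step 4: the error estimate.} For the optimiser of $f^*_\ell$, write
\[
f^*_\ell-E_0=\omega_\ell(H-E_0\id)=\sum_\gamma \omega_\ell\bigl(P_\gamma^{(r_\gamma)}\bigr)+\sum_\gamma \omega_\ell\bigl(\text{remainder}\bigr).
\]
This requires first expressing $H-E_0\id$ as $\sum_\gamma P_\gamma^{(r_\gamma)}$ plus small remainders lying in $\mathcal{V}_\ell$. I would choose $r_\gamma=(\ell-|\gamma|)/2$ for $|\gamma|\le\ell/2$. Long intervals with $|\gamma|>\ell/2$ and the tails then contribute at most $\sum_{\gamma\ni i}\bigl[(ct)^{|\gamma|}+C e^{-(\ell-|\gamma|)/8}\bigr]\lesssim c(t)e^{-\ell/8}$ per site. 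The positive terms are $\ge0$, so $E_0-f^*_\ell\le n\,c(t)e^{-\ell/8}$.

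\textbf{Main obstacle.} The hardest step is Step 1 combined with Step 2. Extracting a genuinely positive local decomposition from the block diagonalisation requires care with the ordering of the gates $U_\gamma$ and with ensuring that the remainders are in $\mathcal{V}_\ell$. If exact positivity of $K_\gamma$ fails, I would first try absorbing negative parts into $\tfrac12\sum_i q_i$ using the gap, with uniform-in-$n$ control.
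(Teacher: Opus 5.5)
Your Steps 1 and 2 follow the paper. The certificate is $T_I=W_I+b_I\sum_{i\in I}q_i\succeq0$ and $T_i=h_i-\beta_iq_i\succeq0$, with $W_I=tV_I-e_I\id=q_IW_Iq_I$. The truncation is the positive map $\mathbb{E}_{B_r(I)}$, controlled by a Lieb--Robinson bound $\|\alpha(A)-\mathbb{E}_{B_r(I)}\alpha(A)\|\le Dm\|A\|e^{-r}$. There are two quantitative slips. First, Fr\"ohlich et al.\ only give decay $t^{1/3}$ per site, $\|V_I\|\le 8t^{(m-2)/3}/m^2$, not $(ct)^{|\gamma|}$; this is harmless. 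Second, your truncation error must carry the factor $|\gamma|\,\|K_\gamma\|$ rather than a uniform $C(t)$; otherwise the per-site sum over short intervals decays only like $\ell e^{-\ell/16}$.

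The genuine gap is in Step 4. The remainders $P_\gamma-P_\gamma^{(r_\gamma)}$ and the long terms $P_\gamma$ with $|\gamma|>\ell/2$ are supported on the whole chain, so they are not in $\mathcal{V}_\ell$; only their aggregate $R=H-E_0\id-\sum_{|\gamma|\le\ell/2}P_\gamma^{(r_\gamma)}$ is. A small operator norm of $R$ does not make $\omega_\ell(R)$ small. The functional $\omega_\ell$ is a pseudo-expectation, not a state: it is negative on the PSD element $H-E_0\id\in\mathcal{V}_\ell$ whenever $f^*_\ell<E_0$, and it is controlled only term by term on local pieces. Your per-site tail estimate silently assumes $|\omega_\ell(X)|\lesssim\|X\|$ for non-local $X$. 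Passing to your extension $\tilde\omega_\ell$ does not fix this by itself. The crude bound is $|\tilde\omega_\ell(X)|\lesssim n\ell\|X\|$, one contribution per window, and after summing over $\gamma$ this gives order $n^2$ rather than $n$.

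What is missing is a way to exploit quasi-locality inside the functional. The paper does this in two parts.
\begin{itemize}
\item[(i)] It uses the concrete linear extension $\hat\omega_\ell(A)=\tr(A)+\sum_{b-a+1\le\ell}\omega_\ell(\Delta_{[a,b]}A)$. This is built from the inclusion--exclusion projections $\Delta_{[a,b]}=\mathbb{E}_{[a,b]}-\mathbb{E}_{[a+1,b]}-\mathbb{E}_{[a,b-1]}+\mathbb{E}_{[a+1,b-1]}$. It satisfies $|\hat\omega_\ell(A)|\le(1+2m^2)\|A\|$ for $A$ supported on \emph{any} interval of size $m$, because only windows inside the support survive.
\item[(ii)] It uses the telescoping expansion $\alpha(A)=\alpha(A)_r+\sum_{s\ge r}(\alpha(A)_{s+1}-\alpha(A)_s)$, with $\alpha(A)_s=\mathbb{E}_{B_s(I)}\alpha(A)$. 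Each increment is supported exactly on an interval of length $m+2s+2$ and has norm at most $2Dm\|A\|e^{-s}$, so $e^{-s}$ beats the polynomial factor $1+2(m+2s+2)^2$.
\end{itemize}
Together these give $\hat\omega_\ell(\alpha(T_\gamma))\ge -C\,m(\ell+2)^2e^{-\ell/4}\|T_\gamma\|$ for $m\le\ell/2$, taking $r=\lfloor(\ell-m)/2\rfloor$ so that $\omega_\ell(\alpha(T_\gamma)_r)\ge0$. For long $\gamma$ they give $|\hat\omega_\ell(\alpha(T_\gamma))|\le G(1+m)^3\|T_\gamma\|$, and the exponential decay of $\|T_\gamma\|$ in $m$ absorbs the polynomial. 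You should then apply $\hat\omega_\ell$ directly to $H-E_0\id=\sum_\gamma\alpha(T_\gamma)$ by linearity, rather than rewriting $H-E_0\id$ as truncations plus remainders in $\mathcal{V}_\ell$.
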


We next present the proof, which consists of two steps. First, the derivation of a quasi-local sum-of-squares decomposition and then the introduction of an extension of the local relaxation functional and the final proof of exponential convergence.

\subsubsection{Quasi-local sum-of-squares decomposition}

\cite{froehlich2019lieschwingerblockdiagonalizationgappedquantum} construct a unitary $U$ that block-diagonalises $H$ in the basis spanned by the ground state of $H_0$ and its orthogonal complement.
This section reviews their result and uses it to establish a quasi-local sum-of-square decomposition of $H-E_0\id$.

The unitary $U$ is a sequence of gates, each associated with an interval $I$ (i.e.~a connected set of sites in the chain).
To describe it, denote  the interval $\{q,q+1,\dots,q+k\}$
by $(k;q)$.
Then, with standard notation $[n]=\{1,\dots,n\}$, 
there is an interval for each $k\in [n-1]$ and $q\in [n-k]$.
The number of intervals is thus $n(n-1)/2=\mathcal{O}(n^2)$.
There is a strict total order on the set of intervals: $(k';q')\succ (k;q)$ if $k'>k$ or if $k'=k$ and $q'>q$.
$U$ is then a product of gates acting on intervals in the order given by this total ordering
\begin{align}
    U = U_{(n-1;1)}\cdots U_{(1;2)} U_{(1;1)}\,.
\end{align}
$U_{(k;q)}$ is supported on $(k;q)$. 
With $U_{I}=e^{S_{I}}$ 
for interval $I$, there exists a $t_0>0$, $n$-independent, such that
for $t<t_0$,
\cite[Lemma A.3]{froehlich2019lieschwingerblockdiagonalizationgappedquantum}
\begin{align}
    \label{eq:SI}
    \|S_I\|
    \le 
    \frac{8C_St}{m^2} a^{m-2}\,,\quad 
    a\equiv t^{1/3}\,,\quad 
    m\equiv |I|\,.
\end{align}
We define also for a set $I$, the orthogonal projectors
\begin{align}
    p_I = \bigotimes_{i\in I}(\ket{\Omega}\bra{\Omega})_i
    \,,\quad q_I = \id - p_I
    \,.
\end{align}
\cite[Thm 3.4]{froehlich2019lieschwingerblockdiagonalizationgappedquantum} then shows that for $t<t_0$ we  have
\begin{align}
    \tilde{H}=UHU^\dagger
    =
    \sum_{i=1}^n h_i + t \sum_{I,\, |I|\ge 2}
    V_I
\end{align}
where the sum is over intervals of length at least $2$.
$V_I$ is Hermitian and supported on $I$, and is block diagonal,
\begin{align}
    V_I = p_I V_I p_I + q_I V_I q_I\,,
\end{align}
with norm bounded as
\begin{align}
    \|V_I\| \le \frac{8}{m^2} a^{m-2}\,.
\end{align}
\cite[Thm 3.5]{froehlich2019lieschwingerblockdiagonalizationgappedquantum} shows that the ground state of $\tilde{H}$ is $\ket{\Omega}$, so that the ground state energy of $\tilde{H}$ (and thus of $H$) is
\begin{align}
    E_0=\bra{\Omega}\tilde{H}\ket{\Omega}
    =
    \sum_{I,\, |I|\ge 2}
    e_I
    \,,\quad 
    e_I = t\Tr(V_I p_I)
    \,.
\end{align}
Next, we show that this result can be cast into a sum-of-squares decomposition of $\tilde{H}-E_0\id$. We have
\begin{align}
    \tilde{H}-E_0\id 
    =
    \sum_{i=1}^n h_i +  \sum_{I,\, |I|\ge 2}
    W_I
    \,,\quad 
    W_I \equiv tV_I-e_I\id\,. 
\end{align}
Note that $W_I$ has zero component along $p_I$, because since $p_I$ is rank $1$, we have $p_I tV_I p_I = e_I p_I$, so
\begin{align}
    W_I = 
    e_I p_I + q_I tV_I q_I -e_I (p_I+q_I)
    =
    q_I W_I q_I\,.
\end{align}
$W_I$ has spectral norm
\begin{align}
    \|W_I\| \le t\|V_I\| + |e_I|
    \le 2 t\|V_I\| \le b_I \equiv 
    \frac{16 t}{m^2} a^{m-2}\,.
\end{align}
By definition of spectral norm, $W_I \succeq -\|W_I\| q_I$, and thus we have
\begin{align}
    W_I \succeq -b_I q_I \succeq 
    -b_I \sum_{i\in I}q_i
\end{align}
where we used the inequality
\begin{align}
    q_I \preceq \sum_{i\in I}q_i
\end{align}
since the l.h.s.~is one in a product state if all terms equal $\ket{\Omega}_i$ while the r.h.s.~is one if at least one of the terms equal $\ket{\Omega}_i$.
In a sum-of-square decomposition we want to write $\tilde{H}-E_0\id = \sum_\gamma T_\gamma$, with $T_\gamma \succeq 0$.
We thus define the positive operator for $I, |I|\ge 2$:
\begin{align}
    T_I = W_I + b_I \sum_{i\in I}q_i\succeq 0\,.
\end{align}
Summing and subtracting the second term in the current decomposition, we have by rearranging the sums (setting $b_I$ to be zero for sets that are not intervals of length $\ge 2$)
\begin{align}
    \tilde{H}-E_0\id 
    =
    \sum_{i=1}^n h_i +  
    \sum_{I,\, |I|\ge 2}
    T_I - \sum_{I}
    b_I \sum_{i\in I}q_i
    =
    \sum_{i=1}^nh_i +  
    \sum_{I,\, |I|\ge 2}
    T_I 
    - \sum_{i=1}^n \beta_i q_i
    \,, \quad \beta_i \equiv 
    \sum_{I\ni i}b_I\,.
\end{align}
Now we claim that
\begin{align}
    T_i = h_i - \beta_i q_i \succeq 0\,,\quad 
    \text{if } \beta_i \in [0,1] \,.
\end{align}
Indeed, by assumption $h_i \succeq q_i$, so as long as $\beta_i\le 1$, $T_i$ is positive.
Note also that $\|T_i\| = \|h_i\|-\beta_i \le \|h_i\|\le h_*$.
To summarise, we have achieved
\begin{align}
    \tilde{H}-E_0\id 
    &=
    \sum_{i=1}^n T_i +  
    \sum_{I,\, |I|\ge 2}
    T_I 
    \\
    T_i, T_I&\succeq 0\,,\quad
    \label{eq:T_gamma_bounds}
    \|T_i\|\le h_*\,,\quad 
    \|T_I\| \le (m+1)b_I\,.
\end{align}
Since $h_i \succeq q_i$, then
$T_i \succeq (1-\beta_i)q_i$, and
we also have the lower bound $\tilde{H}-E_0\id\succeq \sum_{i}(1-\beta_i)q_i$.
If we take $\beta_i\le \beta_*\equiv 1/2$, we thus have that the gap 
of $\tilde{H}$ (and thus of $H$)
is at least $1/2$.

Now we want to use this decomposition to infer a quasi-local sum-of-squares decomposition for $H-E_0\id$. Define $\alpha(A) = U^\dagger A U$.
Then 
\begin{align}
    \label{eq:HalphaT}
    H-E_0\id 
    =
    \sum_\gamma \alpha(T_\gamma)
    \,.
\end{align}
Clearly $\alpha(T_\gamma)\succeq 0$, but its support is the whole lattice.
However, the structure of $U$ is such that the resulting operator is quasi-local, namely the error we obtain in approximating it with its truncation to a ball of radius $r$ around the support of $T_\gamma$ is exponentially small in $r$.
This follows from applying the Lieb-Robinson bound \cite{hastings2010localityquantumsystems,Nachtergaele_2019} to a fictitious time-dependent evolution that generates $U$.
The details are presented in the following Lemma.
We will denote by $\Lambda$ the set of sites of the chain and by $d(I,J)=\min_{x\in I,y\in J}|x-y|$ the distance between two sets of sites.

\begin{lemma}
    \label{lemma:LRalpha}
    Let $C_S>0$ be the constant in \eqref{eq:SI}.
    Then there exists a $t_S>0$ such that for all $0\le t<t_S$ we have for an operator $A$ supported on interval $I$, $I=m$:
    \begin{align}
        \| \alpha(A) - \mathbb{E}_{B_r(I)}(\alpha(A)) \|
        \le D m \|A\| e^{-r} 
        \,.
    \end{align}
    Here $\mathbb{E}_{J}(A) = \tr_{J^c}(A)\otimes \id_{J^c}$ and $\tr$ the normalised trace,
    $B_r(I)=\{ x\in\Lambda \,|\, d(I,x)\le r\}$ is a ball of radius $r$ around $I$, and $D>0$ is a $n$-independent constant.
\end{lemma}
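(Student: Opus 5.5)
The strategy is to express $\alpha$ as the result of a time-ordered evolution and then apply a Lieb--Robinson bound. Write $U=\prod_{J\succ}U_J$ with $U_J=e^{S_J}$, where the product is ordered by the total order on intervals. We view $U$ as the time-one map of a piecewise-constant, time-dependent generator. Concretely, enumerate the $N=n(n-1)/2$ intervals as $J_1\prec J_2\prec\cdots\prec J_N$. On the time slot $[(s-1)/N,\,s/N)$ we evolve with the anti-Hermitian generator $N S_{J_s}$. Equivalently, and more usefully, we can group the gates into "layers" and pass to a continuous-time interaction $\Phi(J,\tau)$ supported on $J$. The form we want is $\alpha(A)=\tau_{0,1}(A)$, a Heisenberg evolution generated by the interaction $\{\Phi(J,\cdot)\}$. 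Then we invoke the standard Lieb--Robinson bound for time-dependent interactions \cite{Nachtergaele_2019}. This gives, for $A$ supported on $I$ and $B$ supported at distance $>r$ from $I$,
\begin{align}
\|[\alpha(A),B]\|\le 2\|A\|\|B\|\sum_{x\in I}\big(e^{v_\mu}-1\big)e^{-\mu r}\,,
\end{align}
where $\mu$ is a decay rate and $v_\mu$ is controlled by the weighted norm $\|\Phi\|_\mu=\sup_{x}\sum_{J\ni x}\int_0^1\|\Phi(J,\tau)\|e^{\mu\,\mathrm{diam}(J)}d\tau$.

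The key estimate is that this weighted norm is finite and $n$-independent once we take $\mu=1$ and $t$ small. Bounding it with the integrated norm, each site $x$ belongs to at most $m$ intervals of length $m$. Using \eqref{eq:SI}, the total integrated weight is $\sum_{m\ge2} m\cdot\frac{8C_St}{m^2}a^{m-2}e^{m}$ with $a=t^{1/3}$, which converges as soon as $t^{1/3}e<1$. This sum is $O(t)$, so $v_1=O(t)$ is small and $n$-independent. This fixes $t_S$ as the minimum of $t_0$ and this radius-of-convergence threshold.

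The obstacle is the time-ordering. A genuine time-dependent Hamiltonian evolution that applies each gate once in sequence would require total time $N\sim n^2$ at unit rate, or rates that blow up if we compress into unit time. Either way the naive integrated norm picks up a factor of $N$. To avoid this, we will bound the weighted norm pointwise per site, and that bound does not care about the order. Note that $\int\|\Phi(J,\tau)\|d\tau=\|S_J\|$ no matter how the slot is scaled, and the LR bound in \cite{Nachtergaele_2019} depends on the generator only through $\int_0^1\sup_x\sum_{J\ni x}\|\Phi(J,\tau)\|e^{\mu\,\mathrm{diam}J}\,d\tau$. The point is that at each time only one $J$ is active. As a result the integral becomes $\sum_J\|S_J\|e^{\mu\,\mathrm{diam}J}$ restricted to $J\ni x$ for the relevant $x$. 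This is really an issue of carefully using the version of the bound with $\int\sup$, or the $\sup\int$ version that is valid when supports are used along the LR path. I expect this to be the delicate step. If needed, I will fall back on a direct iterative argument: conjugate by one gate at a time and track the spreading support via a Duhamel/telescoping estimate on each $e^{S_J}$, which sums to the same convergent series.

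Finally, we convert the commutator bound into the conditional-expectation statement. We use the standard fact that $\|X-\mathbb{E}_{B_r(I)}(X)\|\le \sup_{B}\|[X,B]\|/\|B\|$, with the supremum taken over $B$ supported on $B_r(I)^c$. This holds because $\mathbb{E}_{B_r(I)}$ can be written as a Haar average of unitary conjugations on the complement. Combining it with the LR bound at $\mu=1$ and $|I|=m$ gives $\|\alpha(A)-\mathbb{E}_{B_r(I)}(\alpha(A))\|\le 2m(e^{v_1}-1)\|A\|e^{-r}$. Setting $D=2(e^{v_1}-1)$ or any larger constant proves the lemma.
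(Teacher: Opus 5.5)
Your outline is the same as the paper's: a fictitious time evolution that applies the gates $e^{S_J}$ one after another, the Haar-average reduction of $\mathbb{E}_{B_r(I)}$ to commutators, and a per-site weighted sum of $\|S_J\|$ estimated via \eqref{eq:SI}. However, the step you flag as delicate is a genuine gap, and the resolution you sketch is wrong. The time-dependent Lieb--Robinson theorem with prefactor $e^{v_\mu}-1$ controls $v_\mu$ by $\int\sup_x\sum_{J\ni x}\|\Phi(J,\tau)\|e^{\mu\,\mathrm{diam}J}\,d\tau$, with the supremum inside the integral. Exactly one gate is active at each time, so the supremum at time $\tau$ is just the weight of the active interval. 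The integral therefore equals $\sum_J\|S_J\|e^{\mu\,\mathrm{diam}J}$ over \emph{all} $\sim n^2/2$ intervals, not only those containing a given $x$. With $n-m+1$ intervals of length $m$ this is of order $nt$, so the prefactor $e^{v_\mu}-1$ blows up exponentially in $n$. Regrouping into layers does not help: $(k;q)$ and $(k;q+1)$ overlap and must be applied in that order, so every layering has depth of order $n$ for each length. Conversely, the bound you actually wrote, with $e^{v_\mu}-1$ and the $\sup\int$ norm, is false in general. Take the staircase of SWAP gates on $(1,2),(2,3),\dots,(N-1,N)$, applied in this order. Each gate is $e^{S}$ with $\|S\|=\pi$, so $\sup_x\sum_{J\ni x}\|S_J\|e^{\mu\,\mathrm{diam}J}\le 2\pi e^{\mu}$ uniformly in $N$. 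Yet $\alpha(X_N)=X_1$, so there is no decay at all.

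What is missing is to run the Duhamel iteration by hand and give up the $1/q!$ that produces the exponential. In the $q$-th iterate, enlarge the time-ordered simplex to the full cube $[0,L]^q$. Each factor then becomes $\int_0^L 2\|\Phi(Z_i,u)\|\,du=2\|S_{Z_i}\|$, which depends neither on the ordering nor on the total time $L$. The chain sums are bounded by $(M^q)_{xy}$ with $M_{xy}=2\sum_{J\ni x,y}\|S_J\|$. The triangle inequality gives $\sum_y e^{|x-y|}(M^q)_{xy}\le\kappa^q$ with $\kappa=2\sup_x\sum_{J\ni x}|J|e^{\mathrm{diam}J}\|S_J\|$, and summing the geometric series yields $D=2\kappa/(1-\kappa)$. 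This requires $\kappa<1$, i.e.\ $t_S$ must enforce $16C_Ste/(1-et^{1/3})<1$. That is strictly stronger than the radius-of-convergence condition $et^{1/3}<1$ you impose, and the SWAP example shows that such a perturbative smallness condition cannot be avoided. This is exactly how the paper proceeds. Your fallback (gate-by-gate Duhamel) points in this direction, but the series it produces is this geometric one in $\kappa$, not $e^{v}-1$, and it converges only when $\kappa<1$.
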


\begin{proof}
We first recast the statement to be proven in terms of commutators following a standard procedure \cite{hastings2010localityquantumsystems,Nachtergaele_2019}.
Let $\dd U$ be the Haar measure over the unitary group supported on $J^c$ for a set $J$. Then 
\begin{align}
    \mathbb{E}_J(A) = \int \dd U UAU^\dagger\,.
\end{align}
This is due to the invariance of Haar measure under conjugation, as for the partial trace.
Then we have
\begin{align}
    \label{eq:approx_Haar}
    \| A - \mathbb{E}_J(A) \|
    =
    \| \int \dd U  (A - UAU^\dagger) \|
    \le 
    \int \dd U 
    \| 
    A - UAU^\dagger \|
    =
    \int \dd U 
    \| 
    U[A, U^\dagger] \|
    =
    \int \dd U 
    \| 
    [A, U^\dagger] \|
    \,.
\end{align}
So in the following, we will look at the more general problem of bounding the commutator of $\alpha(A)$ 
for $A$ supported on $X$ with operators $B$ supported on $Y$.

For notation simplicity we denote the gates in $U$ as $U=e^{S_L}\cdots e^{S_2}e^{S_1}$, with $S_j^\dagger = -S_j$ and denote the support of $S_j$ by $Z_j$.
To make use of Lieb-Robinson bounds we define a fictitious time evolution $U(s)$ such that
$U(0)=\id$, $U(L)=U$. We define for $s\in (j-1,j]$:
\begin{align}
    U(s) = e^{(s-j+1) S_j} \cdots  
    e^{S_2}e^{S_1}\,
\end{align}
and $\alpha_s(A) = U(s)^\dagger A U(s)$.
We denote the time-dependent Hamiltonian generating this evolution by $\Phi(s)$: $U(s)'=-i\Phi(s)U(s)$, 
\begin{align}
    \Phi(s) 
    =
    \sum_Z \Phi_Z(s)
    =
    \begin{cases}
        iS_j & s\in (j-1,j]\\
        0 & \text{ otherwise }
    \end{cases}
\end{align}
where the sum over $Z$ is over sets.
We then want to bound the commutator
\begin{align}
    C_B(X,s)
    =
    \sup_{A\in \mathcal{A}_X}
    \frac{\|[\alpha_s(A),B] \|}{\|A\|}
\end{align}
where $\mathcal{A}_X$ is the set of (non-zero) operators supported on $X$.
In particular, we will derive an iterative expression that bounds $C_B(X,s)$ in terms of $C_B(Z,s)$ for $Z$ sets connecting $X$ to $Y$.
We start to derive and solve the ODE satisfied by $g_A(s) := [\alpha_s(A),B] $. First, the Heisenberg evolution is
\begin{align}
    \frac{\dd}{\dd s}\alpha_s(A) 
    &=
    +iU(s)^\dagger \Phi(s)AU(s)
    -iU(s)^\dagger A\Phi(s)U(s)
    =i\alpha_s([\Phi(s),A])
    =
    i[K_X(s), \alpha_s(A)]
    \\
    K_X(s) &= \sum_{Z\in S(X)}
    \alpha_s(\Phi_Z(s))
    \,,\quad 
    S(X) = \{ Z \,|\, Z\cap X \neq \emptyset \}\,.
\end{align}
Using Jacobi identity,
we have
\begin{align}
    g_A(s)'
    =i[[K_X(s),\alpha_s(A)],B]
    =
    i[K_X(s),g_A(s)]
    +
    R_A(s)
    \,,\quad 
    R_A(s)
    =
    -i[\alpha_s(A),[B,K_X(s)]]
    \,.
\end{align}
To solve this equation we use variation of constants.
Thus we define $G_A(s) = V(s)^\dagger g_A(s) V(s)$, where $V(s)'=iK_X(s)V(s)$ is the time evolution that corresponds to the homogenous equation with $R_A$ set to zero. It satisfies
$G_A(s)' = V^\dagger(s) R_A(s) V(s)$, so
its solution is
\begin{align}
    G_A(s)
    =
    G_A(0)
    +
    \int_0^s \dd u V(u)^\dagger
    R_A(u)V(u)
\end{align}
and so
\begin{align}
    g_A(s)
    =
    V(s)V(0)^\dagger g_A(0)
    V(0)V(s)^\dagger +
    \int_0^s \dd u 
    V(s)V(u)^\dagger
    R_A(u)V(u)V(s)^\dagger
    \,.
\end{align}
The norm of the remainder can be bounded as
\begin{align}
    \|R_A(s)\|
    \le 
    2 
    \|A\|
    \|[K_X(s), B]\|
    \le 
    2
    \|A\|
    \sum_{Z\in S(X)}
    \| [\Phi_Z(s), B] \|
    \le 2 \|A\|
    \sum_{Z\in S(X)}
    \|\Phi_Z(s)\|
    C_B(Z,s)\,,
\end{align}
which implies, dividing the expression for $g_A(s)$ by $\|A\|$:
\begin{align}
    C_B(X,s)
    \le 
    C_B(X,0)
    +
    \sum_{Z\in S(X)}
    \int_0^s \dd u
    f_Z(u) C_B(Z,u)
    \,,\quad 
    f_Z(u) = 2\| \Phi_Z(u) \| \,.
\end{align}
The next step is to iterate this bound:
\begin{align}
    C_B(X,s)
    &\le 
    C_B(X,0)
    +
    \sum_{Z_1\in S(X)}
    \int_{0\le u_1 \le s} \dd u_1
    f_{Z_1}(u_1) C_B(Z_1,0)\\
    &\quad +
    \sum_{Z_1\in S(X)}
    \sum_{Z_2\in S(Z_1)}
    \int_{0\le u_2 \le u_1\le s} 
    \dd u_1 \dd u_2
    f_{Z_1}(u_1)
    f_{Z_2}(u_2)
    C_B(Z_2,u_2)
    \\
    &\le 
    C_B(X,0)
    +
    \sum_{Z_1\in S(X)}
    \int_{0\le u_1 \le s} \dd u_1
    f_{Z_1}(u_1)
    C_B(Z_1,0)\\
    &\quad +
    \sum_{Z_1\in S(X)}
    \sum_{Z_2\in S(Z_1)}
    \int_{0\le u_2 \le u_1\le s} 
    \dd u_1 \dd u_2
    f_{Z_1}(u_1)
    f_{Z_2}(u_2)
    C_B(Z_2,0)
    \\&\quad +
    \sum_{Z_1\in S(X)}
    \sum_{Z_2\in S(Z_1)}
    \sum_{Z_3\in S(Z_2)}
    \int_{0\le u_3\le u_2 \le u_1\le s} 
    \dd u_1 \dd u_2 \dd u_3
    f_{Z_1}(u_1)
    f_{Z_2}(u_2)
    f_{Z_3}(u_3)
    C_B(Z_3,u_3)
    \,.
\end{align}
If we iterate $p$ steps, we have:
\begin{align}
    C_B(X,s)
    &\le 
    \sum_{q=0}^{p-1}a_q(X,s)
    +
    N_p(X,s)\,,\\
    a_q(X,s)&=
    2\|B\|
    \sum_{Z_1,\dots,Z_q\in C_q(X,Y)}
    \prod_{i=1}^q 
    \int_{0}^L 
    f_{Z_i}(u_i)
    \dd u_i
    \\
    N_p(X,s)&=2\|B\|
    \sum_{Z_1,\dots,Z_p\in C_p(X)}
    \prod_{i=1}^p 
    \int_{0}^L 
    f_{Z_i}(u_i)    \dd u_i
    \,.
\end{align}
We used 
that $C_B(Z,u)\le 2\|B\|$ and
that $C_B(Z,0)\le 2 \| B\| $, if
$Z\cap Y\neq \emptyset$,
and zero otherwise, and denoted by $C_q(X,Y)$ the chain of sets $(Z_1,\dots,Z_q)$ such that 
$Z_1\cap X\neq \emptyset$,
$Z_q\cap Y\neq \emptyset$
and
$Z_i\cap Z_{i-1}\neq \emptyset$
for $i=2,\dots, q-1$ 
and $C_q(X,Y)$ the same but without the constraint on $Y$.
We have also enlarged the integration to $[0,L]$ for each variable, which, since integrand is positive, 
produces an upper bound.

Next, define the matrix indexed by sites of the lattice
\begin{align}
    M_{xy} = \sum_{Z\ni xy} \int_0^L\dd u f_Z(u)
    =
    2
    \sum_{j|Z_j\ni xy} 
    \| S_j \|
    \,.
\end{align}
Note that $(M^q)_{xy}$ sums over all chains at least once. For example,
\begin{align}
    (M^3)_{xy}
    =
    \sum_{Z_1\ni xz_1}
    \sum_{Z_2\ni z_1z_2}
    \sum_{Z_3\ni z_2y}
    \prod_{i=1}^3
    \int_0^L\dd u_i f_{Z_i}(u_i)
\end{align}
and the summands contain at least once  all possible chains of size $3$ between $x$ and $y$. Therefore,
we can upper bound the sums over chains by these matrix powers:
\begin{align}
    a_q(X,s)\le 2\|B\|\sum_{x\in X}
    \sum_{y\in Y} (M^q)_{xy}
    \,,\quad 
    N_p(X,s)
    \le 
    2\|B\|\sum_{x\in X}
    \sum_{y\in \Lambda} (M^p)_{xy}
    \,.
\end{align}

Next we prove a bound on the column sums of $M$. Let
\begin{align}
    \kappa
    =
    2\sup_{x\in \Lambda}
    \sum_{j\,|\, x\in Z_j}
    |Z_j| e^{\text{diam}(Z_j)}
    \|S_j\|
    <1\,.
\end{align}
We will prove later that for $t$ sufficiently small this holds and just assume it for the moment.
Recall that for an interval, $\text{diam}(I) = |I|-1$.
Then
\begin{align}
    \sup_{x\in\Lambda} \sum_{y\in \Lambda}
    M_{xy}
    \le 
    \sup_{x\in\Lambda} \sum_{y\in \Lambda}
    e^{|x-y|}
    M_{xy}
    =
    2\sup_{x\in\Lambda} 
    \sum_{j|Z_j\in x}
    \sum_{y\in Z_j}
    e^{|x-y|}
    \|S_j\|
    \le 
    2\sup_{x\in\Lambda} 
    \sum_{j|Z_j\in x}
    |Z_j|
    e^{\text{diam}(Z_j)}
    \|S_j\|
    =\kappa\,,
\end{align}
so that $\kappa$ is an upper bound on the column sums of $M$.
From triangle inequality on $|x-y|$ we have
\begin{align}
    \sum_{y}
    e^{|x-y|}
    (M^{q+1})_{xy}
    \le 
    \sum_{z}
    e^{|x-z|}
    (M)_{xz}
    \sum_y
    e^{|z-y|}
    (M^{q})_{zy}
    \,.
\end{align}
This shows by induction that
\begin{align}
\sum_{y}(M^{q})_{xy}
    \le 
    \sum_{y}e^{|x-y|} (M^{q})_{xy}
    \le \kappa^q    \,.
\end{align}
Also, we have a tighter bound on the column sum of $M^q$ when $x\in X$:
\begin{align}
    \sum_{y\in Y}
    (M^q)_{xy}
    =
    \sum_{y\in Y}
    e^{-|x-y|}
    e^{|x-y|}
    (M^q)_{xy}
    \le 
    e^{-\delta}
    \sum_{y\in\Lambda}
    e^{|x-y|}
    (M^q)_{xy}
    \le 
    e^{-\delta}
    \kappa^q\,,
\end{align}
with $\delta = \text{dist}(X,Y)$.
This shows that the remainder term goes to zero for $p\to\infty$:
\begin{align}
    \lim_{p\to \infty}N_p(X,s)
    =
    2\|B\|\sum_{x\in X}
    \lim_{p\to \infty}
    \sum_{y\in \Lambda}
    (M^p)_{xy}
    \le 
    2\|B\||X|
    \lim_{p\to \infty}
    \kappa^p = 0\,.
\end{align}
Finally, this leads to the bound on the commutator
\begin{align}
    C_B(X,s)
    &\le 
    2\|B\|
    \sum_{q=1}^\infty 
    \sum_{x\in X}\sum_{y\in Y}
    (M^q)_{xy}
    \le 
    2\|B\||X|
    \sum_{q=1}^\infty 
    \sup_{x\in X} 
    \sum_{y\in \Lambda}
    (M^q)_{xy}
    \le 
    2\|B\||X|e^{-\delta}
    \sum_{q=1}^\infty 
    \kappa^q \\
    &=
    D\|B\||X|e^{-\delta}    
    \,,\quad 
    D=\frac{2\kappa}{1-\kappa}\,.
\end{align}
Here we started the sum from $1$ since we assume that $X\cap Y=\emptyset$ as is the case in our application, where $X=I$ and $Y=B_r(I)$.
Together  with eq.~\eqref{eq:approx_Haar} implies that, denoted by $\mathcal{U}_J$ the set of unitaries on $J$,
\begin{align}
    \| 
    \alpha(A)
    -
    \mathbb{E}_{B_r(I)}(\alpha(A))
    \|
    \le 
    \sup_{U\in \mathcal{U}_{B_r(I)^c}} \|[\alpha(A),U]\|
    \le 
    \|A\|
    \sup_{U\in \mathcal{U}_{B_r(I)^c}}
    C_{U}(I, L)
    \le 
    D\|A\| m e^{-r}
    \,.
\end{align}
Finally, we need to compute $\kappa$ and verify that it is smaller than $1$.
Using \eqref{eq:SI}, the fact that the intervals involved in $U$ have length
$m\ge 2$ and that the number of intervals containing $x$ is at most $m$,
\begin{align}
    \kappa
    &=
    2\sup_{x\in \Lambda}
    \sum_{j\,|\, x\in Z_j}
    |Z_j| e^{\text{diam}(Z_j)}
    \|S_j\|
    =
    2\sup_{x\in \Lambda}
    \sum_{m= 2}^n
    m e^{m-1}
    \sum_{\substack{I,
    |I|=m, x\ni I}}
    \|S_j\|
    \\
    &\le 
    \frac{16C_st}{ea^2}
    \sum_{m\ge 2}
    (ea)^{m}
    =
    \frac{16C_st}{ea^2}
    \frac{e^2a^2}{1-ea}
    =
    \frac{16C_ste}{1-ea}
    \equiv \kappa_*
    \,.
\end{align}
Thus by choosing $t$ such that $\kappa_*<1$, we guarantee quasi-locality of $\alpha(A)$, concluding the proof of the Lemma.

\end{proof}

\subsubsection{Extension of local relaxation and hierarchy bound}

The linear functional $\omega_\ell$ used in the level $\ell$ relaxation is defined on $\mathcal{V}_\ell$, the space of operators which are linear combinations of those whose support is contained in intervals of size $\ell$.
Thus we cannot apply $\omega_\ell$ to Eq.~\eqref{eq:HalphaT} directly, since the r.h.s.~involves operators outside this set.
To remedy this, we introduce an extension of $\omega_\ell$, denoted $\hat{\omega}_\ell$ that is defined on the set of all operators, and which agrees with $\omega_\ell$ on $\mathcal{V}_\ell$.
Its definition involves projectors $\Delta_{[a,b]}$ onto operators whose support is exactly contained in intervals $[a,b]$ and we first introduce those.

We choose a basis of the space of operators consisting of the identity and strings of traceless operators $P$ (for qubits, Pauli strings), so that we can write
\begin{align}
    A = \tr(A) \id 
    +
    \sum_{P\neq \id} c_P P
    \,,
\end{align}
where we recall that $\tr$ is the normalised trace.
Now, given $a\le b$, we group terms $P$ such that the support $S$ of $P$ is such that $\min S = a, \max S=b$:
\begin{align}
    A = \tr(A) \id 
    +
    \sum_{1\le a\le b\le n}
    \sum_{\substack{P\neq \id\\ 
    \min S = a, \max S=b}} c_P P
    \,.
\end{align}
Let us then define
\begin{align}
    \Delta_{[a,a]}
    &=
    \mathbb{E}_{\{a\}}
    -
    \mathbb{E}_{\emptyset}
    \\
    \Delta_{[a,b]}
    &=
    \mathbb{E}_{[a,b]}
    -
    \mathbb{E}_{[a+1,b]}
    -
    \mathbb{E}_{[a,b-1]}
    +
    \mathbb{E}_{[a+1,b-1]}
    \,,\quad 
    a<b
    \,.
\end{align}
Here $\mathbb{E}_I(A)
=\tr_{I^c}(A)\otimes \id_{I^c}$ is the conditional expectation used above, which satisfies for a traceless operator basis $P$ of support $S$:
\begin{align}
    \mathbb{E}_I(P)
    =
    \begin{cases}
        P &\text{ if }S\subseteq I
        \\
        0 &\text{otherwise}
    \end{cases}
\end{align}
so that $\mathbb{E}_I$ projects onto traceless strings $P$ whose support is in $I$.
Then $\Delta_{[a,a]}$
projects onto traceless operators whose support is exactly $\{a\}$ -- subtracting $\mathbb{E}_{\emptyset}(A)=\tr(A) \id$ removes the identity component. 
Similarly, $\Delta_{[a,b]}$
projects onto traceless operators $P$ whose support $S$ is such that $\min S=a,\max S=b$. For example, we can check explicitly that if $P=X_2$, we have
\begin{align}
    \Delta_{[2,2]}(X_2) 
    = X_2 - 0 = X_2
    \,,\quad 
    \Delta_{[2,5]}(X_2) 
    = X_2 - 0 - X_2 + 0 = 0
\end{align}
and if $P=X_2Y_3Z_5$, 
\begin{align}
    \Delta_{[2,5]}(X_2Y_3Z_5) 
    = X_2Y_3Z_5 - 0 - 0 + 0 = X_2Y_3Z_5
    \,.
\end{align}
With these definitions, note that
\begin{align}
    \Delta_{[a,b]}
    \Big( 
    \sum_{\substack{P\neq \id\\ 
    \min S = c, \max S=d}} c_P P
    \Big)
    =
    \delta_{a,c}\delta_{b,d}
    \sum_{\substack{P\neq \id\\ 
    \min S = c, \max S=d}} c_P P
    \,,
\end{align}
so we can write the decomposition of $A$ as
\begin{align}
    \label{eq:ADelta}
    A = \tr(A) \id 
    +
    \sum_{1\le a\le b\le n}
    \Delta_{[a,b]}(A)
    \,.
\end{align}

Now define the linear functional on the space of all operators
\begin{align}
    \label{eq:omegahat}
    \hat{\omega}_\ell(A)
    =
    \tr(A)
    +\sum_{
    \substack{
    1\le a\le b\le b\\
    b-a+1\le \ell
    }
    }
    \omega_\ell(\Delta_{[a,b]}(A))
    \,.
\end{align}
It is clear if $A\in \mathcal{V}_\ell$ then the decomposition \eqref{eq:ADelta} involves only terms $(a,b)$ with $b-a+1\le \ell$ so that in this case, $\hat{\omega}_\ell=\omega_\ell$.
Next, we prove a useful result that relates
$\hat{\omega}_\ell(A)$ to the norm of $A$.

\begin{lemma}
    \label{lemma:hatomega_norm}
    Suppose that $A$ is supported on an interval $I$ of size $m$.
    Then
    \begin{align}
        |\hat{\omega}_\ell(A)|
        \le 
        p(m) \|A\|
        \,,\quad 
        p(m) = 1+2m^2\,.
    \end{align}
\end{lemma}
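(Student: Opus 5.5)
The plan is to bound each term in the definition \eqref{eq:omegahat} of $\hat{\omega}_\ell(A)$ separately and then count how many terms are nonzero. Since $A$ is supported on an interval $I$ of size $m$, every traceless string $P$ in its expansion has support $S\subseteq I$. Hence $\Delta_{[a,b]}(A)=0$ unless $[a,b]\subseteq I$, i.e.\ $a\le b$ with both endpoints in $I$. There are at most $m(m+1)/2\le m^2$ such pairs. The identity term is easy: $|\tr(A)|\le\|A\|$, because the normalised trace is an average of diagonal matrix elements.

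Next I bound each surviving term. For a pair $(a,b)$ with $b-a+1\le\ell$, the operator $\Delta_{[a,b]}(A)$ is supported on $[a,b]$, which is an interval of size at most $\ell$. So $\omega_\ell(\Delta_{[a,b]}(A))=\Tr(\rho_{[a,b]}\,\Delta_{[a,b]}(A))$ for a density matrix $\rho_{[a,b]}$, and by Hölder $|\omega_\ell(\Delta_{[a,b]}(A))|\le\|\Delta_{[a,b]}(A)\|$.

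To control $\|\Delta_{[a,b]}(A)\|$, I use the facts that $\mathbb{E}_J$ is a conditional expectation (partial normalised trace tensored with identity), that it is unital and completely positive, and that it is therefore contractive in operator norm: $\|\mathbb{E}_J(A)\|\le\|A\|$. Writing $\Delta_{[a,b]}$ as a signed sum of at most four such expectations gives $\|\Delta_{[a,b]}(A)\|\le 4\|A\|$ for $a<b$. For $a=b$, there are two terms, so the bound is $2\|A\|$.

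Summing, the $m$ diagonal pairs contribute at most $2m\|A\|$ and the at most $m(m-1)/2$ off-diagonal pairs contribute at most $2m(m-1)\|A\|$. Together with the identity term, the total is at most $(1+2m^2)\|A\|=p(m)\|A\|$.

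The only points needing care are verifying contractivity of $\mathbb{E}_J$, including the case $J=\emptyset$, and checking that $\Delta_{[a,b]}(A)$ really vanishes when $[a,b]\not\subseteq I$. Both follow from the action of $\mathbb{E}_J$ on the string basis and from linearity. I expect this counting of supports to be the main, though mild, obstacle; everything else is the triangle inequality.
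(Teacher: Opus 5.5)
Your proposal is correct and follows the paper's proof essentially line by line: the identity term contributes $\|A\|$, the at most $m$ diagonal pairs contribute $2\|A\|$ each and the at most $m(m-1)/2$ off-diagonal pairs contribute $4\|A\|$ each by contractivity of the $\mathbb{E}_J$, giving $1+2m+2m(m-1)=1+2m^2$. You also spell out a step the paper leaves implicit: $|\omega_\ell(X)|\le\|X\|$ for $X$ supported on an interval of length at most $\ell$, which follows from H\"older's inequality with the density matrix $\rho_{[a,b]}$, as you say.
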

\begin{proof}
    If $A$ has support on an interval of size $m$, then 
    in the sum \eqref{eq:omegahat} 
    at most $m$ components when $a=b$ contribute, and at most $m(m-1)/2$ components when $a<b$.
    Thus
    \begin{align}
        |\hat{\omega}_\ell(A)|
        \le 
        \|A\|
        +
        m 2 \|A\|
        +
        \frac{1}{2}m(m-1)
        4 \|A\|
        =
        p(m) \|A\|
        \,,
    \end{align}
    where we also used that
    $\mathbb{E}_{[a,b]}$
    is a contraction.
\end{proof}

Next we introduce the telescopic sum that represents $\alpha(A)$
for an operator $A$ supported on interval $I$
in terms of truncations
\begin{align}
    \alpha(A)
    =
    \alpha(A)_0
    +
    \sum_{s\ge 0}
    (\alpha(A)_{s+1}-\alpha(A)_s)
    \,,\quad 
    \alpha(A)_s
    =
    \mathbb{E}_{B_s(I)}(
    \alpha(A))\,.
\end{align}
Note that the sum is finite since the terms for $s\ge n$ are zero.
We have the following properties for $\hat{\omega}_\ell(\alpha(A))$.

\begin{lemma}
    Let $A$ be supported on an interval $I$ of size $m$. Then, there is a constant $G>0$ such that
    \begin{align}
        \label{eq:m_long}|\hat{\omega}_\ell(\alpha(A))|
        \le 
        G(1+m)^3\|A\|
        \,.
    \end{align}
    Further, if $A\succeq 0$, $m\le \ell/2$, 
    there is a constant $T>0$ such that
    \begin{align}
        \label{eq:m_short}
        \hat{\omega}_\ell(\alpha(A))\ge 
        -
        Te\|A\|m(\ell+2)^2
        e^{-\ell/4}
    \end{align}
\end{lemma}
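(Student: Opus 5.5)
For both bounds I will work with the telescopic decomposition
\[
\alpha(A)=\alpha(A)_0+\sum_{s\ge 0}\delta_s,\qquad \delta_s:=\alpha(A)_{s+1}-\alpha(A)_s .
\]
Here $\alpha(A)_s=\mathbb{E}_{B_s(I)}(\alpha(A))$ is supported on the interval $B_s(I)$, of size at most $m+2s$. The support of $\delta_s$ lies in $B_{s+1}(I)$, so $|B_{s+1}(I)|\le m+2s+2$.

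\textbf{Controlling the increments.} Since $\mathbb{E}_J$ is a contraction and the conditional expectations are nested,
\[
\|\delta_s\|\le \|\alpha(A)-\alpha(A)_{s+1}\|+\|\alpha(A)-\alpha(A)_s\|\le 2Dm\|A\|e^{-s}
\]
by Lemma~\ref{lemma:LRalpha}. We also have $\|\alpha(A)_0\|\le\|A\|$.

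\textbf{Proof of \eqref{eq:m_long}.} I apply $\hat\omega_\ell$ term by term and use Lemma~\ref{lemma:hatomega_norm}. The zeroth term contributes at most $p(m)\|A\|$. The increments contribute at most
\[
\sum_{s\ge0}p(m+2s+2)\,2Dm\,e^{-s}\|A\| .
\]
Since $p(k)=1+2k^2$, this sum is bounded by $\mathrm{const}\cdot m(1+m)^2\|A\|$, using $\sum_s (m+2s+2)^2e^{-s}=O((1+m)^2)$. Adding the two contributions gives $G(1+m)^3\|A\|$.

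\textbf{Proof of \eqref{eq:m_short}.} Positivity is where the real work lies, because $\hat\omega_\ell$ is not positive on general operators. Let $r$ be the largest integer with $m+2r\le\ell$; since $m\le \ell/2$, we have $r\ge \ell/4-1$. I split
\[
\alpha(A)=\alpha(A)_r+\sum_{s\ge r}\delta_s .
\]
\emph{First piece.} $\alpha(A)_r=\mathbb{E}_{B_r(I)}(\alpha(A))$ is a positive operator, because the normalised partial trace of a positive operator tensored with the identity is positive. It is supported on an interval of length at most $\ell$, so it lies in $\mathcal{V}_\ell$ and $\hat\omega_\ell(\alpha(A)_r)=\omega_\ell(\alpha(A)_r)=\Tr(\rho_{B_r(I)}\,\cdot\,)$, which is nonnegative because $\rho_{B_r(I)}\succeq0$. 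One check is needed here. The identity part of $\alpha(A)_r$ (the term $\tr(A)\id$ in \eqref{eq:ADelta}) must be evaluated consistently by $\hat\omega_\ell$ and $\omega_\ell$. This holds because $\Tr(\rho_J)=1$ and every $\Delta$-piece of an operator supported in $B_r(I)$ is supported in $B_r(I)$, so all the pieces combine to $\Tr(\rho_{B_r(I)}\alpha(A)_r)$.

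\emph{Tail.} I bound the tail by
\[
\sum_{s\ge r}p(m+2s+2)\,2Dm\,e^{-s}\|A\| .
\]
A crude bound is not good enough here, since it would give polynomial factors in $s$ rather than in $\ell$. To avoid this, I will use the fact that $\hat\omega_\ell$ only counts $\Delta_{[a,b]}$ with $b-a+1\le\ell$. Redoing the count of Lemma~\ref{lemma:hatomega_norm}, the number of contributing windows for an operator on an interval of size $k$ is at most $(\ell+1)k$, and at most $k(\ell+1)\le(\ell+2)^2$-ish for the relevant range. Resumming the geometric tail then gives
\[
\sum_{s\ge r}(\ell+2)^2(m+2s)e^{-s}\lesssim (\ell+2)^2 e^{-r}\le e\,(\ell+2)^2e^{-\ell/4}
\]
up to constants, after absorbing the linear factor into $T$. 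This yields the claimed bound $-Te\|A\|m(\ell+2)^2e^{-\ell/4}$.

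\textbf{Main obstacle.} The delicate step is getting the tail constant polynomial in $\ell$ and not in $s$, by restricting the window count to length-$\le\ell$ windows. If that fails, I would fall back on a weaker polynomial factor and slightly reduce the exponent, as the final theorem only needs $e^{-\ell/8}$.
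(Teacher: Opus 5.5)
You handle the first inequality and the first piece of the second one the same way the paper does: $r=\lfloor(\ell-m)/2\rfloor$, positivity of $\alpha(A)_r$, and $\hat\omega_\ell(\alpha(A)_r)=\omega_\ell(\alpha(A)_r)\ge 0$. The gap is in the tail estimate.

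First, your window count fails exactly where you use it. Every tail increment $\delta_s$ with $s\ge r$ lives on an interval of size up to $m+2s+2$. This already exceeds $\ell$ at $s=r$, since $m+2r+2\ge\ell+1$, and it grows linearly in $s$. So the number of admissible windows, roughly $\ell k$, is not bounded by $(\ell+2)^2$ over the tail.

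Second, the displayed resummation is wrong. You have
\[
\sum_{s\ge r}(m+2s)e^{-s}=e^{-r}\sum_{j\ge 0}(m+2r+2j)e^{-j},
\]
which is of order $(m+2r)e^{-r}\sim \ell e^{-r}$. That linear factor is not a constant and cannot be absorbed into $T$. Even granting your display, it evaluates to order $(\ell+2)^3e^{-r}$, not $(\ell+2)^2e^{-r}$. Your fallback would still be enough for Theorem~\ref{thm:exp_conv_1d_weak}, but it does not prove the lemma as stated.

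What you are missing is that the premise ``a crude bound is not good enough'' is false. The resummation you already used for the first inequality works here too, once you shift the index by $r$. Apply Lemma~\ref{lemma:hatomega_norm} directly:
\[
\sum_{s\ge r}p(m+2s+2)\,e^{-s}=e^{-r}\sum_{j\ge 0}p(m+2r+2+2j)\,e^{-j}\le 3(m+2r+2)^2e^{-r}\sum_{j\ge 0}(j+1)^2e^{-j},
\]
using $p(x)\le 3x^2$ for $x\ge 1$ and $x+2j\le x(j+1)$ for $x\ge 2$. Three facts then finish the argument:
\begin{itemize}
\item The polynomial dependence on $s$ becomes the factor $(m+2r+2)^2\le(\ell+2)^2$, because $m+2r\le \ell$.
\item Since $r\ge \ell/4-1$, you have $e^{-r}\le e\,e^{-\ell/4}$.
\item The increment bound is $\|\delta_s\|\le 2Dm\|A\|e^{-s}$.
\end{itemize}
Together these give the claim with $T=6DF$, where $F=\sum_{j\ge0}(j+1)^2e^{-j}$. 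This is exactly the paper's proof.

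Your refined-count idea can also be repaired if you do it consistently. Each $\delta_s$ has at most $\ell(m+2s+2)$ admissible windows, each contributing at most $4\|\delta_s\|$. Then $\sum_{s\ge r}\ell(m+2s+2)e^{-s}\lesssim \ell(\ell+2)e^{-r}$, which gives the same form of bound, but this detour is unnecessary.
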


\begin{proof}
    Using the telescopic decomposition
    and Lemma \ref{lemma:hatomega_norm} we have
    \begin{align}
        |\hat{\omega}_\ell(\alpha(A))|
        &\le 
        |\hat{\omega}_\ell(\alpha(A)_0)|
        +
        \sum_{s\ge 0}
        |\hat{\omega}_\ell(\alpha(A)_{s+1}-\alpha(A)_{s})|
        \\
        &\le 
        p(m)\|A\|
        +
        \sum_{s\ge 0}
        p(m+2(s+1))
        \|\alpha(A)_{s+1}-\alpha(A)_{s}\|
        \\
        &\le 
        p(m)\|A\|
        +
        2Dm\|A\|
        \sum_{s\ge 0}
        p(m+2(s+1))
        e^{-s}
        \,,
    \end{align}
    where in the last inequality we used Lemma
    \ref{lemma:LRalpha}.
    Next, to simplify the expression we use the bounds
    \begin{align}
        p(m)\le 3(1+m)^2
        \,,\quad 
        p(m+2(s+1))
        \le 
        12 (m+1)^2 (s+1)^2
    \end{align}
    so that 
    \begin{align}
        |\hat{\omega}_\ell(\alpha(A))|
        &\le 
        3(1+m)^2\|A\|
        +
        24Dm
        (m+1)^2
        \|A\|
        \sum_{s\ge 0}
        (s+1)^2
        e^{-s}
        \\
        &\le 
        (1+m)^3\|A\|
        G\,,\quad 
        G=
        3+24 DF\,,\quad 
        F=
        \sum_{s\ge 0}
        (s+1)^2
        e^{-s}   
        \,.
    \end{align}
    For the second statement, let us assume that $A\succeq 0$ and $m\le \ell/2$.
    Then $r=\lfloor (\ell-m)/2\rfloor$ is such that the support of $\alpha(A)_r$ is $2r+m\le \ell$, so that
    $\hat{\omega}_\ell(\alpha(A)_r)=
    \omega_\ell(\alpha(A)_r)\succeq 0$.
    Then, by expanding $\alpha(A)$ starting from $\alpha(A)_r$ rather than $\alpha(A)_0$, we have
    \begin{align}
        |\hat{\omega}_\ell(\alpha(A)-\alpha(A)_r)|
        &\le 
        2Dm\|A\|
        \sum_{s\ge r}
        p(m+2(s+1))
        e^{-s} 
        \\
        &=
        2Dm\|A\|
        e^{-r}
        \sum_{k\ge 0}
        p(m+2k+2r+2)
        e^{-k} 
        \,.
    \end{align}
    Then using the bound
    \begin{align}
        p(m+2k+2r+2)
        \le 
        3(m+2k+2r+2)^2
        \le 
        3(m+2r+2)^2(k+1)^2
        \,,
    \end{align}
    we have
    \begin{align}
        |\hat{\omega}_\ell(\alpha(A)-\alpha(A)_r)|
        &\le 
        2Dm\|A\|
        e^{-r}
        3(m+2r+2)^2
        \sum_{k\ge 0}
        (k+1)^2
        e^{-k} 
        =
        6DF\|A\|
        m(m+2r+2)^2
        e^{-r}
        \\
        &\le 
        Te
        \|A\|
        m(\ell+2)^2
        e^{-\ell/4}
        \,,\quad 
        T=6DF
        \,,
    \end{align}
    where we used that $m\le \ell/2$ so that 
    $r\ge \ell/4-1$.
    Thus,
    \begin{align}
        \hat{\omega}_\ell(\alpha(A))
        \ge 
        \omega_\ell(\alpha(A)_r)
        -
        Te\|A\|m(\ell+2)^2
        e^{-\ell/4}
        \,,
    \end{align}
    and the Lemma follows from the positivity assumption.
\end{proof}

Finally, we are going to use these properties to control the error of the relaxation.
We have
\begin{align}
    \omega_\ell(H)-E_0 
    =
    \sum_\gamma 
    \hat{\omega}_\ell(\alpha(T_\gamma))
    =
    \sum_{m_\gamma\le \ell/2}
    \hat{\omega}_\ell(\alpha(T_\gamma))
    +
    \sum_{m_\gamma> \ell/2}
    \hat{\omega}_\ell(\alpha(T_\gamma))
    \,.
\end{align}
We are going to use the estimate \eqref{eq:m_short}
for the first sum.
We have, recalling the bounds on $\|T_\gamma\|$ of Eq.~\eqref{eq:T_gamma_bounds},
\begin{align}
    \sum_{m_\gamma\le \ell/2}
    \hat{\omega}_\ell(\alpha(T_\gamma))
    &\ge 
    -Te(\ell+2)^2 e^{-\ell/4}
    \sum_{m_\gamma\le \ell/2}
    m_\gamma \|T_\gamma\|
    \\
    &\ge
    -Te(\ell+2)^2 e^{-\ell/4}
    n
    \Big(h_*
    +
    16 t
    \sum_{m\ge 2}
    \frac{(m+1)}{m}
    a^{m-2}
    \Big)
    \,.
\end{align}
We will bound the convergent sum later.
Next, we look at the second sum, where we use the estimate \eqref{eq:m_long}:
\begin{align}
    \sum_{m_\gamma> \ell/2}
    \hat{\omega}_\ell(\alpha(T_\gamma))
    &\ge 
    - \sum_{m_\gamma> \ell/2}
    |\hat{\omega}_\ell(\alpha(T_\gamma))|
    \ge 
    -G
    \sum_{m_\gamma> \ell/2}
    (m_\gamma+1)^3\|T_\gamma\|
    \\
    &=
    -G
    \sum_{m_\gamma> \ell/2}
    e^{-m_\gamma}e^{+m_\gamma}
    (m_\gamma+1)^3\|T_\gamma\|
    \\
    &\ge
    -G
    e^{-\ell/2}
    \sum_{m_\gamma\ge 1}
    e^{+m_\gamma}
    (m_\gamma+1)^3\|T_\gamma\|
    \ge -Gn
    e^{-\ell/2}\mathcal{M}(t)
    \\
    \mathcal{M}(t)
    &=
    8eh_*
    +
    16 t
    \sum_{m\ge 2}
    e^{+m}
    \frac{(m+1)^4}{m^2}a^{m-2}
    \,.
\end{align}
Now note that we can also bound the series appearing in the bound on the short operators ($m_\gamma\le \ell/2$) by $\mathcal{M}(t)$, so that:
\begin{align}
    \omega_\ell(H)-E_0 
    \ge
    -n\mathcal{M}(t)
    (Te(\ell+2)^2 e^{-\ell/4}
    +G
    e^{-\ell/2}
    )
    \,.
\end{align}
Finally, minimising over density matrices and
using that for $x\ge 0$, $(x+2)^2 e^{-x/8}\le 45$ (which can be shown by computing the maximum occurring at $x=14$), we have
\begin{align}
    0\le E_0 - f^*_\ell \le 
    n\mathcal{M}(t)
    (45Te
    +G
    )e^{-\ell/8}
    \,.
\end{align}
This proves theorem \ref{thm:exp_conv_1d_weak}.

\subsection{Proofs from Section \ref{section:qmp}} \label[appendix]{appendix:section_qmp}

\convergencesubgradient*

\begin{proof}

We follow the proof technique of \cite{boyd2003subgradient}. For a constant step size $\alpha_\ell = h > 0$, the error between $f_1^*$ and $f_{\mathrm{subgrad}} \coloneqq \max_{0 \le t < T} Q \left(\nu^{(t)} \right)$ 
satisfies 
\begin{align}
\frac{f_1^* - f_{\mathrm{subgrad}}}{n}
&\le \frac{ \mathrm{dist}^2 \left(\nu^{(0)}, \arg \max_\nu Q \left (\nu \right) \right) + \sum_{\ell=0}^{T-1} \alpha_\ell^2 \, \left \| g^{(\ell)} \right \|_F^2  }{2 n \sum_{\ell=0}^{T-1} \alpha_\ell}
% &\le \frac{ \mathrm{dist}^2 \left( \nu^{(0)}, \arg \max_\nu Q(\nu) \right) + \sum_{\ell=1}^k  h^2 \, \| g^{(\ell)} \|_F^2  }{2 kh N} \\ 
% &\le \frac{ R^2 + B_g^2 h^2 T}{2 n h T } 
\le \frac{R^2}{2 n h T} + \frac{ B_g^2 h}{2 n},
\end{align}
\noindent
where $R \coloneqq \mathrm{dist} \left(\nu^{(0)}, \arg \max_\nu Q \left (\nu \right) \right)$ and $B_g \ge \left \| g^{(\ell)} \right \|_F$ for all $\ell$ is an upper-bound on the Frobenius norm of the subgradient for all iterations.

We next bound $B_g$. Since 
$g^{(\ell)} \coloneqq \bigoplus_{i \in V} \bigoplus_{j \in N(i)} g_{j \to i}^{(\ell)}$, 
we have 
$\left \| g^{(\ell)} \right \|_F^2 = \sum_{i \in V} \sum_{j \in N(i)} \left \| g_{j \to i}^{(\ell)}\right \|_F^2$. For each directed edge $j \to i$, the corresponding subgradient is \begin{equation}
    g_{j \to i}^{(\ell)} = \mu_i^{(k)} - \Tr_j \mu_{ij}^{(k)},
\end{equation}
where $\mu_i^{(\ell)}$ and $\mu_{ij}^{(\ell)}$ are ground-state density matrices of the respective local effective Hamiltonians. For any two density matrices $\rho$, $\sigma$ on the same space, \begin{equation}
    \left \| \rho - \sigma \right \|_F^2 = \Tr \left [\rho^2 \right ] + \Tr \left [\sigma^2 \right ] - 2 \Tr \left [\rho \sigma \right ] \le 1 + 1 = 2.
\end{equation}
\noindent
Hence each subgradient component satisfies \begin{equation}
    \left \| g_{j \to i}^{(\ell)} \right \|_F^2 \le 2.
\end{equation}
\noindent
Summing over all directed edges gives \begin{equation}
    \left \| g^{(\ell)} \right \|_F^2 = \sum_{i \in V} \sum_{j \in N(i)}  \left \| g_{j \to i}^{(\ell)} \right \|_F^2 \le 2 n \Delta_G,
\end{equation}
\noindent
where $\Delta_G \coloneqq \max_{i \in  V} |N(i)|$. Therefore,  $B_g = \sqrt{2 n \Delta_G}$ uniformly bounds the Frobenius norm of the subgradients for all iterations. 

The above error bound thus becomes \begin{align}
\frac{f_1^* - f_{\mathrm{subgrad}}}{n}
\le \frac{R^2}{2 n h T} + h \Delta_G.
\end{align}
\noindent
Let $h = c \epsilon$, where $0 < c < 1/\Delta_G$ is a constant that is independent of both the system size $n$ and $\epsilon$. The error bound is then at most $\epsilon$ for any positive integer $T$ such that 
\begin{align}
    T \ge \left \lceil \frac{R^2}{2c \left( 1-c \Delta_G \right ) n \epsilon^2} \right \rceil.
\end{align}
\noindent
This proves the stated sufficient iteration bound.

Below we derive the per-iteration time complexity of \cref{alg:subgradient}. In each iteration, the algorithm samples a ground-state eigenvector of each of the $n$ node effective Hamiltonians and the $|E|$ edge effective Hamiltonians. These matrices have dimensions $d$ and $d^2$, respectively, where $d$ is the dimension of the local Hilbert space per site. Using a dense eigensolver, this amounts to a combined per-iteration cost of \begin{equation}
    \mathcal O \left (n d^3 + |E| (d^2)^3 \right ) = \mathcal O \left (n d^3 + |E| d^6 \right).
\end{equation}

\end{proof}

\hfill

\smootherror*

\begin{proof}
    We first bound the smoothing error introduced by replacing local ground state energy with its finite-temperature free energy approximation at inverse temperature $\beta > 0$. Let the eigenvalues of $K \in \mathrm{Herm}(\mathbb C^{d \times d})$ be $\lambda_1 \le \lambda_2 \le \cdots \le \lambda_d$. We have \begin{align}
        F_\beta(K) 
        &= - \frac{1}{\beta} \log \Tr \left [ e^{- \beta K} \right] % \\
        = - \frac{1}{\beta} \log \left( \sum_{k=1}^d e^{ - \beta \lambda_k} \right) % \\
        = - \frac{1}{\beta} \log \left( e^{ - \beta \lambda_1} \sum_{k=1}^d e^{ - \beta (\lambda_k - \lambda_1)} \right) \\
        &= \lambda_1 - \frac{1}{\beta} \log \left( \sum_{k=1}^d e^{ - \beta (\lambda_k - \lambda_1)} \right). 
    \end{align}
    
    Since $\lambda_k-\lambda_1\ge 0$ for all $k$, each term in the final sum is at most one, while the $k=1$ term equals to one. The sum therefore lies between $1$ and $d$. As $\beta>0$, it follows that \begin{equation}
        F_\beta(K) \le \lambda_{\min} (K) \le F_\beta(K) + \frac{1}{\beta} \log d.
    \end{equation}

    % Since $\lambda_k  - \lambda_1 \ge 0$ for all $k$, 
    % $1 \le  \sum_{k=1}^d e^{ - \beta (\lambda_k - \lambda_1)} \le d$ and therefore
    % $0 \le \frac{1}{\beta} \log \left( \sum_{k=1}^d e^{ - \beta (\lambda_k - \lambda_1)} \right ) \le \frac{1}{\beta} \log d$. By introducing $1/\beta$ and minimum eigenvalue $\lambda_1$, we get 
    % $\lambda_1 - \frac{1}{\beta} \log d \le F_\beta(K) \le \lambda_1$, which gives us the bound on the smoothing error between local ground state energy and its free energy approximation,
    % \begin{equation}
    %     F_\beta(K) \le \lambda_{\min} (K) \le F_\beta(K) + \frac{1}{\beta} \log d.
    % \end{equation}

    % Since $\lambda_k  - \lambda_1 \ge 0$ for all $k$, \begin{align}
    %     1 \le  \sum_{k=1}^d e^{ - \beta (\lambda_k - \lambda_1)} \le d,
    % \end{align} and therefore \begin{align}
    %     0 \le \frac{1}{\beta} \log \left( \sum_{k=1}^d e^{ - \beta (\lambda_k - \lambda_1)} \right ) \le \frac{1}{\beta} \log d.
    % \end{align} 
    % \noindent
    % By introducing $1/\beta$ and minimum eigenvalue, we get \begin{equation}
    %     \lambda_1 - \frac{1}{\beta} \log d \le F_\beta(K) \le \lambda_1,
    % \end{equation} which gives us the bound on the smoothing error between local ground state energy and its free energy approximation, \begin{equation}
    %     F_\beta(K) \le \lambda_{\min} (K) \le F_\beta(K) + \frac{1}{\beta} \log d.
    % \end{equation}

    Hence, for the full smoothed dual function, \begin{align}
        Q_\beta(\nu) \le Q(\nu) \le Q_\beta(\nu) + \frac{1}{\beta} \Gamma_G, 
    \end{align} where $\Gamma_G = \sum_{(ij) \in E} \log (d_i d_j) + \sum_{i \in V} \log (d_i)$ with $d_i = \dim (\mathcal H_i)$ the dimension of the local Hilbert space at site $i \in V$.
\end{proof}

\hfill

In order to prove \cref{thm:convergence_smoothed_gradient_complete}, we must first find the Lipschitz constant of the gradients of finite-temperature free energy function $F_\beta(H)$ (\cref{lem:Lipschitz_smoothness_of_the_local_free_energy_approximation}), before finding the Lipschitz constant of the gradients of the smooth dual function $Q_\beta(\nu)$ (\cref{lem:upper_bound_on_the_Lipschitz_constant}), which we sketch out below.

%\begin{lemma}[Lipschitz smoothness of the local free energy approximation]
\begin{restatable}[Lipschitz smoothness of the local free energy approximation]{lemma}{Lipschitzsmoothnessofthelocalfreeenergyapproximation}
\label{lem:Lipschitz_smoothness_of_the_local_free_energy_approximation}
    For all $K, K' \in \mathrm{Herm}(\mathbb C^{d \times d})$ and $\beta > 0$, \begin{equation}
        \| \nabla F_\beta(K) - \nabla F_\beta(K') \|_F \le \frac{\beta}{m_F} \| K - K' \|_F,
    \end{equation} 
    \noindent
    where $m_F = 2$ is the strong concavity parameter of $S(\rho)$ with respect to the Frobenius norm.
\end{restatable}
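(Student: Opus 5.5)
The plan is to use the standard conjugate-duality principle: a strongly concave regulariser in the primal gives a Lipschitz gradient for the conjugate-type function. The variational form is
\begin{equation}
F_\beta(K) = \min_{\rho \in D} \left\{ \langle \rho, K\rangle - \tfrac{1}{\beta} S(\rho) \right\},
\end{equation}
where $D$ is the set of density matrices. For fixed $K$ the objective $\rho \mapsto \langle\rho,K\rangle - \frac1\beta S(\rho)$ is $\frac{m_F}{\beta}$-strongly convex in the Frobenius norm on $D$, since $S$ is $m_F$-strongly concave. Hence the minimiser $\rho_\beta(K)=e^{-\beta K}/\Tr[e^{-\beta K}]$ is unique. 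By Danskin's theorem, or directly from the formula already recorded in the text, $\nabla F_\beta(K) = \rho_\beta(K)$. It therefore suffices to show
\begin{equation}
\|\rho_\beta(K)-\rho_\beta(K')\|_F \le \tfrac{\beta}{m_F}\|K-K'\|_F .
\end{equation}

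\textbf{Step 1: first-order optimality.} Write $\rho=\rho_\beta(K)$ and $\rho'=\rho_\beta(K')$, and let $\phi_K(\sigma)=\langle \sigma,K\rangle-\frac1\beta S(\sigma)$. Strong convexity together with optimality of $\rho$ over the convex set $D$ gives
\begin{equation}
\phi_K(\rho') \ge \phi_K(\rho) + \tfrac{m_F}{2\beta}\|\rho'-\rho\|_F^2 .
\end{equation}
The analogous inequality holds with the roles of $(K,\rho)$ and $(K',\rho')$ swapped.

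\textbf{Step 2: combine.} Add the two inequalities. The entropy terms cancel, which leaves
\begin{equation}
\langle \rho'-\rho, K - K'\rangle \ge \tfrac{m_F}{\beta}\|\rho-\rho'\|_F^2 .
\end{equation}
Apply Cauchy--Schwarz in the Hilbert--Schmidt inner product to the left-hand side and divide by $\|\rho-\rho'\|_F$. The case where this norm vanishes is trivial. This yields the claim.

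\textbf{Main obstacle.} The delicate point is the strong-convexity inequality in Step 1. First, it relies on $S$ being strongly concave with respect to $\|\cdot\|_F$ with constant $m_F$. I would take this from the paper's earlier remark via quantum Pinsker, $S(\sigma)\le S(\rho)+\langle \nabla S(\rho),\sigma-\rho\rangle - \frac12\|\sigma-\rho\|_1^2$, combined with $\|\cdot\|_1\ge\|\cdot\|_F$. Second, the gradient of $S$ blows up on the boundary of $D$. This causes no difficulty here: both $\rho$ and $\rho'$ are full-rank Gibbs states, so the gradient $-\log\rho-\id$ is well defined at them. I would therefore state the optimality condition as the variational inequality $\langle K + \frac1\beta(\log\rho+\id), \sigma-\rho\rangle \ge 0$ for all $\sigma\in D$; in fact it holds with equality modulo the trace constraint, since $\log\rho=-\beta K-\log Z\,\id$.

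A concern is that careful accounting of Pinsker's constant may give the modulus $1$ rather than $2$. In that case the proof above goes through verbatim with $m_F$ equal to whatever constant the Pinsker step actually delivers. If instead the factor $2$ must be justified directly, I would fall back on the integral representation of the derivative of $e^{-\beta K}$ (Duhamel's formula) to bound the Fréchet derivative of $\rho_\beta$ directly.
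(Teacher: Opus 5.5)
Your argument is the paper's argument. You use the same strong-convexity-plus-optimality inequality for $\Tr[\sigma K]-S(\sigma)/\beta$ at the two Gibbs states, sum the two inequalities so the entropies cancel, and finish with Cauchy--Schwarz. Your explicit variational inequality, which holds with equality because $K+\frac1\beta(\log\rho+\id)$ is a multiple of $\id$, is if anything more careful than the paper's.

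The one real gap is the constant, and the constant is part of the statement. The route you propose for strong concavity is quantum Pinsker, $D(\sigma\|\rho)\ge\frac12\|\sigma-\rho\|_1^2$, followed by $\|\cdot\|_1\ge\|\cdot\|_F$. That only yields modulus $1$, and hence $\|\nabla F_\beta(K)-\nabla F_\beta(K')\|_F\le\beta\|K-K'\|_F$, a factor two weaker than claimed. Saying the proof ``goes through with whatever constant Pinsker delivers'' therefore does not establish the lemma. The constant also cannot be fudged, because $\beta/2$ is sharp: take $K=0$ and $K'=\epsilon Z$ on a qubit, and the ratio $\|\rho_\beta(K)-\rho_\beta(K')\|_F/\|K-K'\|_F$ equals $\tanh(\beta\epsilon)/(2\epsilon)\to\beta/2$.

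The missing observation needs no Duhamel formula: the difference of two density matrices is traceless. Let $X$ be traceless Hermitian with eigenvalues $\lambda_k$. The positive eigenvalues and the negative eigenvalues each sum, in absolute value, to $s=\|X\|_1/2$. Hence
$\|X\|_F^2=\sum_{\lambda_k>0}\lambda_k^2+\sum_{\lambda_k<0}\lambda_k^2\le 2s^2=\tfrac12\|X\|_1^2$.
The Bregman divergence of $-S$ is exactly the relative entropy $D(\sigma\|\rho)$, with natural logarithms. Combining with Pinsker gives $D(\sigma\|\rho)\ge\frac12\|\sigma-\rho\|_1^2\ge\|\sigma-\rho\|_F^2=\frac{2}{2}\|\sigma-\rho\|_F^2$. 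This is modulus $m_F=2$ on the set of density matrices, which is all your Step 1 uses.

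The paper itself only asserts $m_F=2$ ``by quantum Pinsker'' without this step, so your suspicion was well placed. With this one line added, your proof is complete.
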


    \begin{proof}
        Let \begin{equation}
           J_K(\rho) = \Tr [\rho K] - \frac{1}{\beta} S(\rho)
        \end{equation} be the non-equilibrium free energy of the state $\rho$ at inverse temperature $\beta$.  Since $S(\rho)$ is $m_F$-strongly concave with respect to the Frobenius norm, $- S(\rho) / \beta$ is $(m_F/\beta)$-strongly convex. And since $\Tr [\rho K]$ is linear in $\rho$, $J_K(\rho)$ is also $(m_F/\beta)$-strongly convex with respect to the Frobenius norm.
    
        This means that, for any density matrices $\sigma$ and $\rho_\beta$ the minimiser of the non-equilibrium free energy, we have 
        \begin{align}
            J_K(\sigma) 
            &\ge J_K(\rho_\beta) + \langle \nabla J_K(\rho_\beta) , \sigma - \rho_\beta \rangle + \frac{m_F}{2 \beta} \| \sigma - \rho_\beta \|_F^2 \\
            &\ge J_K(\rho_\beta) + \frac{m_F}{2 \beta} \| \sigma - \rho_\beta \|_F^2,
        \end{align} where the second inequality comes from the fact that $\langle \nabla J_K(\rho_\beta) , \sigma - \rho_\beta \rangle \ge 0$ as $\rho_\beta$ is the minimiser.
    
        Thus, for two Gibbs states $\rho_\beta(K)$ and $\rho_\beta(K')$, we have the following relations: 
        \begin{align}
            J_K(\rho_\beta(K')) &\ge J_K(\rho_\beta(K)) + \frac{m_F}{2 \beta} \| \rho_\beta(K') - \rho_\beta(K) \|_F^2, \\
            J_{K'}(\rho_\beta(K)) &\ge J_{K'}(\rho_\beta(K')) + \frac{m_F}{2 \beta} \| \rho_\beta(K) - \rho_\beta(K') \|_F^2 .
        \end{align} 
        Combining the two, we have 
        \begin{align}
            J_K(\rho_\beta(K')) + J_{K'}(\rho_\beta(K)) - J_K(\rho_\beta(K)) - J_{K'}(\rho_\beta(K')) \ge  \frac{m_F}{\beta} \| \rho_\beta(K) - \rho_\beta(K') \|_F^2.
        \end{align} 
        Since the entropy terms in the l.h.s. cancel out, this becomes
        \begin{align}
            \left \langle \rho_\beta(K) - \rho_\beta(K'), K' - K \right \rangle \ge  \frac{m_F}{\beta} \| \rho_\beta(K) - \rho_\beta(K') \|_F^2.
        \end{align} 
        
        By the Cauchy--Schwarz inequality $|\langle A, B \rangle| \le \| A \|_F \| B \|_F$, we have \begin{align}
            \| \rho_\beta(K) - \rho_\beta(K') \|_F \| K - K' \|_F &\ge  \frac{m_F}{\beta} \| \rho_\beta(K) - \rho_\beta(K') \|_F^2. % \\
            % \Longrightarrow \| K - K' \|_F &\ge  \frac{m_F}{\beta} \| \rho_\beta(K) - \rho_\beta(K') \|_F.
        \end{align}
        If $\rho_\beta(K) = \rho_\beta(K')$, the desired inequality is immediate; otherwise, we divide both side by $\| \rho_\beta(K) - \rho_\beta(K') \|_F$ and obtain \begin{equation}
            \| \rho_\beta(K) - \rho_\beta(K') \|_F \le \frac{\beta}{m_F} \| K - K' \|_F.
        \end{equation} 
        And as $\rho_\beta(K) = \nabla F_\beta(K)$, we finally get 
        \begin{align}
            \| \nabla F_\beta(K) - \nabla F_\beta(K') \|_F \le \frac{\beta}{m_F} \| K - K' \|_F.
        \end{align}
    \end{proof}

\hfill

With \cref{lem:Lipschitz_smoothness_of_the_local_free_energy_approximation} at hand, we can prove the following lemma.

\begin{restatable}[Lipschitz constant of the smoothed dual function]{lemma}{upperboundonthelipchitzconstant}
    \label{lem:upper_bound_on_the_Lipschitz_constant}

    Let $\mathcal V$ denote the space of traceless Lagrange multipliers. Then, for any $\nu, \nu' \in \mathcal V$,
    \begin{equation}
        \left \| \nabla Q_\beta(\nu) - \nabla Q_\beta(\nu')  \right \|_F \le \widehat L_\beta \left \| \nu - \nu' \right \|_F,
    \end{equation}
    where
    \begin{equation}
        \widehat L_\beta = \frac{\beta}{2} \left( \Delta_G + d \right)
    \end{equation}
    with $\Delta_G \coloneqq \max_{i \in V} | N(i) |$ and 
    $d$ the uniform on-site dimension.
\end{restatable}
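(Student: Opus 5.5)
The gradient of $Q_\beta$ is $\nabla Q_\beta = \mathcal{B}^*\circ \rho_\beta \circ \mathcal{B}$, where $\mathcal{B}$ is the linear map sending the message vector $\nu$ to the tuple of effective Hamiltonians $(\widehat H_{i,j}(\nu))_{(i,j)\in E}, (\widehat H_i(\nu))_{i\in V}$ (up to the constant offsets $h$), and $\rho_\beta$ applies the local Gibbs map blockwise. The plan is therefore to split the Lipschitz constant into two factors. The first is the Lipschitz constant of the blockwise Gibbs map, which is $\beta/m_F=\beta/2$ by \cref{lem:Lipschitz_smoothness_of_the_local_free_energy_approximation}, applied to each edge and node term separately and then summed in squared Frobenius norm. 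The second is the operator norm squared $\|\mathcal{B}\|^2$, since $\mathcal{B}^*$ has the same norm. The chain rule gives
\begin{equation}
\|\nabla Q_\beta(\nu)-\nabla Q_\beta(\nu')\|_F \le \|\mathcal{B}^*\|\,\frac{\beta}{2}\,\|\mathcal{B}(\nu-\nu')\|_F \le \frac{\beta}{2}\|\mathcal{B}\|^2\|\nu-\nu'\|_F ,
\end{equation}
so it remains to show $\|\mathcal{B}\|^2\le \Delta_G+d$ on traceless messages.

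To bound $\|\mathcal{B}\|^2$, write $\delta=\nu-\nu'$ and expand $\|\mathcal{B}\delta\|_F^2$ into edge and node contributions.

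\emph{Edge contribution.} For an edge $(i,j)$ we need $\|\delta_{j\to i}\otimes \id+\id\otimes\delta_{i\to j}\|_F^2$. Expanding this gives $d\|\delta_{j\to i}\|_F^2+d\|\delta_{i\to j}\|_F^2+2\Tr(\delta_{j\to i})\Tr(\delta_{i\to j})$. Tracelessness kills the cross term, so the edges contribute at most $d\|\delta\|_F^2$ in total. This is the place where the traceless assumption on $\mathcal V$ is essential: it is preserved by the updates, since the gradient components are differences of density matrices.

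\emph{Node contribution.} For a node $i$ we need $\|\sum_{j\in N(i)}\delta_{j\to i}\|_F^2$. By Cauchy--Schwarz this is at most $|N(i)|\sum_j\|\delta_{j\to i}\|_F^2\le \Delta_G\sum_j\|\delta_{j\to i}\|_F^2$. Each message appears in exactly one node sum, so the nodes contribute at most $\Delta_G\|\delta\|_F^2$.

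Adding the two contributions gives $\|\mathcal{B}\|^2\le \Delta_G+d$, hence $\widehat L_\beta=\tfrac{\beta}{2}(\Delta_G+d)$.

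The main care point is bookkeeping rather than difficulty. We must check that $\nabla_{\nu}Q_\beta$ really equals $\mathcal{B}^*$ applied to the Gibbs states; the sign conventions work out, giving $\rho_i-\Tr_j\rho_{ij}$, and $\Tr_j$ is indeed the adjoint of $X\mapsto X\otimes\id$. We must also correctly apply the per-block lemma and sum over blocks. A further subtlety is that $\mathcal{B}^*$ should be taken as the adjoint restricted to the traceless subspace, i.e.\ followed by orthogonal projection. Projection only decreases norms, so the bound $\|\mathcal{B}^*\|=\|\mathcal{B}\|$ on $\mathcal V$ still holds.
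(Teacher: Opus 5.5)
Your proposal is correct and follows the paper's proof essentially step for step: you factor $\nabla Q_\beta$ through the linear map $\mathcal B$, apply the blockwise Gibbs-state Lipschitz bound with constant $\beta/2$, bound the node terms by Cauchy--Schwarz, and use tracelessness to kill the edge cross terms, which gives $\|\mathcal B\|^2\le \Delta_G+d$.

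Your extra remark about the adjoint addresses a point the paper glosses over. The unrestricted $\mathcal B$ does not satisfy $\|\mathcal B\|^2\le\Delta_G+d$. The clean justification is not that ``projection decreases norms''. It is that $(\mathcal B|_{\mathcal V})^*=P_{\mathcal V}\mathcal B^*$ has the same norm as $\mathcal B|_{\mathcal V}$, and that it coincides with $\mathcal B^*$ on the Gibbs-state differences, because those images are already traceless.
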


\begin{proof}

    Define the linear map $\mathcal B$ from the direct sum of messages to
the direct sum of local effective Hamiltonians by
\begin{equation}
    (\mathcal B X)_i
    =
    \sum_{j \in  N(i)} X_{j\to i},
    \quad
    (\mathcal B X)_{ij}
    =
    -X_{j\to i} \otimes \id_j
    -
    \id_i \otimes X_{i\to j}.
\end{equation}
Thus, $\mathcal B X$ describes the changes in the effective node and edge Hamiltonians induced by a message perturbation
$X$.

Let $\boldsymbol{\rho}_\beta(\nu)$ denote the direct sum of all
node and edge Gibbs states. \cref{lem:Lipschitz_smoothness_of_the_local_free_energy_approximation} states that 
\begin{equation}
    \left\|
        \boldsymbol{\rho}_\beta(\nu)
        -
        \boldsymbol{\rho}_\beta(\nu')
    \right\|_F
    \leq
    \frac{\beta}{2}
    \left\|
        \mathcal B(\nu-\nu')
    \right\|_F.
    \label{eq:block_gibbs_lipschitz}
\end{equation}
Moreover, the chain rule gives
\begin{equation}
    \nabla Q_\beta(\nu)
    =
    \mathcal B^*\boldsymbol{\rho}_\beta(\nu).
    \label{eq:smoothed_dual_chain_rule}
\end{equation}

Hence, it remains to bound the operator norm of $\mathcal B$. For any
traceless collection of messages $X$, the node-wise contributions
satisfy
\begin{equation}
    \sum_{i\in V}
    \left\|
        \sum_{j\in  N(i)} X_{j\to i}
    \right\|_F^2
    \leq
    \Delta_G\|X\|_F^2
\end{equation}
by the Cauchy--Schwarz inequality. For each edge $(i,j)$, tracelessness
gives
\begin{equation}
    \left\langle
        X_{j\to i}\otimes \id_j,\,
        \id_i\otimes X_{i\to j}
    \right\rangle
    =
    \Tr(X_{j\to i})\Tr(X_{i\to j})
    =
    0.
\end{equation}
Together with $\|X\otimes \id \|_F^2=d \|X\|_F^2$, this shows that the
edge-wise contributions sum to $d\|X\|_F^2$. Therefore,
\begin{equation}
    \|\mathcal B X\|_F^2
    \leq
    (\Delta_G+d)\|X\|_F^2,
\end{equation}
and hence
\begin{equation}
    \|\mathcal B\|^2
    \leq
    \Delta_G+d.
    \label{eq:B_norm_bound}
\end{equation}

Combining \cref{eq:block_gibbs_lipschitz,eq:smoothed_dual_chain_rule,eq:B_norm_bound}
and using the fact that $\|\mathcal B^*\|=\|\mathcal B\|$, we thus obtain
\begin{align}
    \|\nabla Q_\beta(\nu)-\nabla Q_\beta(\nu')\|_F
    &\leq
    \frac{\beta}{2}\|\mathcal B\|^2
    \|\nu-\nu'\|_F
    \\
    &\leq
    \frac{\beta}{2}(\Delta_G+d)
    \|\nu-\nu'\|_F.
\end{align}
Hence,
\begin{equation}
    \widehat L_\beta
    =
    \frac{\beta}{2}(\Delta_G+d).
\end{equation}

Finally, we remark on the tracelessness of our Lagrange multipliers $\nu$. As specified in \cref{alg:accelerated-smoothed-quantum-dual}, all Lagrange multipliers are initialised as zero, and every component of the gradient satisfies
\begin{align}
        \Tr \left[
        {\rho_{\beta}}_i
        -
        \Tr_j \left ({\rho_{\beta}}_{ij} \right)
    \right]
    =
    1-1
    =
    0.
\end{align}
And since all subsequent gradient and extrapolation steps in \cref{alg:accelerated-smoothed-quantum-dual} are just linear combinations of traceless matrices, the Lagrange multipliers $\nu$ always stay traceless.

\end{proof}

\hfill

With \cref{lem:bound_on_smoothing_error_2_local} and \cref{lem:upper_bound_on_the_Lipschitz_constant} now at hand, we are finally ready to prove \cref{thm:convergence_smoothed_gradient_complete}.

\convergencesmoothedgradient*

\begin{proof}

Define $f_1^* = Q^* \coloneq \max_{\nu \in \mathcal V} Q (\nu )$ and $Q_\beta^* \coloneq \max_{\nu \in \mathcal V} Q_\beta(\nu)$, where $\mathcal V$ is the traceless Lagrange multiplier space. Define also $f_{\mathrm{grad}} \coloneq Q_\beta \left ( \nu^{(T)} \right)$. The total error bound of \cref{alg:accelerated-smoothed-quantum-dual} is 
\begin{align}
    f_1^* - f_{\mathrm{grad}} = \left(  Q^*  - Q_\beta^* \right) + \left( Q_\beta^* - Q_\beta \left ( \nu^{(T)} \right) \right),
\end{align} where the first term represents our smoothing error and the second term represents our approximation error resulting from Nesterov's accelerated gradient descent. We now bound each term separately below.

Define $\nu^* = \arg \max_{\nu \in \mathcal V} Q(\nu)$ and $\nu_\beta^* = \arg \max_{\nu \in \mathcal V} Q_\beta(\nu)$. For the first-term smoothing error, \cref{lem:bound_on_smoothing_error_2_local} implies \begin{align}
    Q_\beta^* = Q_\beta \left (\nu_\beta^* \right ) \le Q \left (\nu_\beta^* \right) \le Q \left(\nu^* \right) = Q^* 
\end{align} and \begin{align}
    Q^* = Q \left (\nu^* \right) \le Q_\beta \left (\nu^* \right) + \frac{\Gamma_G}{\beta} \le Q_\beta \left (\nu_\beta^* \right) + \frac{\Gamma_G}{\beta} = Q_\beta^* + \frac{\Gamma_G}{\beta},
\end{align}
\noindent
where $\Gamma_G = \sum_{(i, j) \in E} \log (d^2) + \sum_{i \in V} \log (d)$ and $d$ is the uniform on-site dimension. Combining the two inequalities thus gives us \begin{align}
    0 \le Q^* - Q_\beta^* \le \frac{\Gamma_G}{\beta}.
\end{align}

For the second-term Nesterov's accelerated optimisation error, Corollary 1 of \cite{tseng2008accelerated} states the following convergence guarantee 
\begin{align}
    Q_\beta^* - Q_\beta \left ( \nu^{(T)} \right) \le 
    \frac{2 \widehat L_\beta R_\beta^2}{\left( T+1 \right)^2},
\end{align} where $\widehat L_\beta$ is the Lipschitz constant of the gradients of $Q_\beta$, $R_\beta = \mathrm{dist} \left( \nu^{(0)}, \arg \max_{\nu \in \mathcal V} Q_\beta \left (\nu \right) \right)$, and $T$ is the number of iteration ran. Therefore, the total error-density bound is 
\begin{align}
    \frac{f_1^* - f_{\mathrm{grad}}}{n} \le \frac{\Gamma_G}{n \beta} + \frac{2 \widehat L_\beta R_\beta^2}{n \left( T+1 \right)^2}.
\end{align}

In order to ensure that this total error-density does not exceed some target error-density $\epsilon$, we split $\epsilon$ equally between the smoothing error and the approximation error. We choose 
\begin{equation}
    \frac{\Gamma_G}{n \beta} = \frac{\epsilon}{2}
\end{equation} 
for the smoothing error, which determines the value for the $\beta$ input to \cref{alg:accelerated-smoothed-quantum-dual},
\begin{equation}
    \beta = \frac{2 \Gamma_G}{ n \epsilon}.    
\end{equation}
Letting the optimisation error to be
\begin{equation}
    \frac{2 \widehat L_\beta R_\beta^2}{n \left( T+1 \right)^2} \le \frac{\epsilon}{2}
\end{equation} 
and using $\widehat L_\beta = \beta \left(\Delta_G + d \right) / 2$ from \cref{lem:upper_bound_on_the_Lipschitz_constant},
we obtain the following lower bound on the sufficient iteration count $T \in \mathbb N$, \begin{align}
    T \ge \left \lceil \frac{2 R_\beta \sqrt{\Gamma_G \left(\Delta_G + d \right) }}{n \epsilon} \right \rceil - 1.
\end{align} 

Below we derive the per-iteration time complexity of \cref{alg:accelerated-smoothed-quantum-dual}, which follows directly our argument made for the subgradient method listed in the proof for \cref{thm:convergence_subgradient_complete}. Instead of sampling ground-state eigenvectors, \cref{alg:accelerated-smoothed-quantum-dual} computes the Gibbs states of local effective edge and node Hamiltonians, which has the same arithmetic cost of $\mathcal O(d^3)$ for $d$-dimensional matrices. Therefore, the combined per-iteration cost is again \begin{equation}
    \mathcal O \left (n d^3 + |E| (d^2)^3 \right) = \mathcal O \left (n d^3 + |E| d^6 \right).
\end{equation}
\end{proof}

\end{document}